\newtheorem{theorem}{Theorem}
\newtheorem{lemma}[theorem]{Lemma}
\newtheorem{prop}[theorem]{Proposition}
\newtheorem{defi}{Definition}
\newtheorem{coro}{Corollary}
\newtheorem{remark}{Remark}
\newtheorem{example}{Example}
\newcommand*{\Go}{G_{\overline\varphi}}
\newcommand*{\vp}{\varphi}
\newcommand*{\vpo}{\overline\varphi}
\newcommand*{\mc}{\mathcal}
\newcommand{\eqcl}{\mathrel{\mathsmaller{\mathsmaller{^{\boldsymbol{\sqsubseteq}}}}}}
\renewcommand{\mod}{\ensuremath{\operatorname{mod}}}
\DeclareMathOperator*{\BOX}{\Box}
\journal{Preprint}
\begin{document}

\sloppy

\begin{frontmatter}

\title{The Relaxed Square Property}

\author[SB]{Marc Hellmuth} 
\ead{mhellmuth@bioinf.uni-sb.de}
\author[LJU,LEI]{Tilen Marc}
\ead{marct15@gmail.com}
\author[LEI]{Lydia Ostermeier\footnote{corresponding author}}
\ead{glydia@bioinf.uni-leipzig.de}
\author[LEI,MPI,IZI,TBI,SFI]{Peter F.\ Stadler}
\ead{studla@bioinf.uni-leipzig.de}

\address[SB]{Center for Bioinformatics, 
	      Saarland University, Building E 2.1, 
	      D-66041 Saarbr{\"u}cken, Germany
             }
\address[LJU]{Faculty of Mathematics and Physics, 
	     University of Ljubljana, SI-1000 Ljubljana, Slovenia
             }
\address[LEI]{Bioinformatics Group, Department of Computer Science,
              and Interdisciplinary Center for Bioinformatics,
              University of Leipzig,
              H{\"a}rtelstra{\ss}e 16-18, D-04107 Leipzig, Germany
             }
\address[MPI]{Max Planck Institute for Mathematics in the Sciences,
              Inselstra{\ss}e 22, D-04103 Leipzig, Germany
             }
\address[IZI]{RNomics Group, Fraunhofer Institut f{\"u}r Zelltherapie 
              und Immunologie -- IZI Perlickstra{\ss}e 1, 
              D-04103 Leipzig, Germany
             }
\address[TBI]{Department of Theoretical Chemistry
              University of Vienna,
              W{\"a}hringerstra{\ss}e 17, A-1090 Wien, Austria}
\address[SFI]{Santa Fe Institute, 1399 Hyde Park Rd., Santa Fe, NM 87501,
              USA}

\begin{abstract}
  Graph products are characterized by the existence of non-trivial
  equivalence relations on the edge set of a graph that satisfy a so-called
  square property. We investigate here a generalization, termed
  \emph{RSP-relations}. The class of graphs with non-trivial RSP-relations
  in particular includes graph bundles. Furthermore, RSP-relations are
  intimately related with covering graph constructions.  For $K_{2,3}$-free
  graphs finest RSP-relations can be computed in polynomial-time.  In
  general, however, they are not unique and their number may even grow
  exponentially. They behave well for graph products, however, in
  sense that a finest RSP-relations can be obtained easily from finest
  RSP-relations on the prime factors.
\end{abstract}

\begin{keyword}
square property, unique square property, relaxed square property, RSP-relation, 
covering graphs
\end{keyword}

\end{frontmatter}


\section{Introduction}

Modern proofs of prime factor decomposition (PFD) theorems for the
Cartesian graph product rely on characterizations of the product relation
$\sigma$ on the edge set of the given graph \cite{Imrich:94}. The key
property of $\sigma$ is that connected components of the subgraphs induced
by the classes of $\sigma$ are precisely the layers, i.e., $(e,f)\in
\sigma$ if and only if the edges $e$ and $f$ belong to copies of the same
(Cartesian) prime factor \cite{Sabidussi:60,Hammack:11a}. Classical results
in the theory of graph products establish that $\sigma$ can be derived from
other, easily computable, relations on the edge set: \[ \sigma =
\mathfrak{C}(\delta) = (\theta \cup\tau)^*, \] where $\mathfrak{C}(\delta)$
denotes the convex closure of the so-called $\delta$-relation and $(\theta
\cup\tau)^*$ is the transitive closure of two different relations known as
the Djokovi\'{c}-Winkler relation $\theta$ and relation $\tau$
\cite{Imrich:94,Hammack:11a}.

Of particular interest for us is the relation $\delta$. An equivalence
relation $R$ is said to have the \emph{square property} if (i) any pair of
adjacent edges which belong to distinct equivalence classes span a unique
chordless square and (ii) the opposite edges of any chordless square belong
to the same equivalence class. The importance of $\delta$ stems from the
fact that it is the unique, finest relation on $E(G)$ with the square
property.

An equivalence relation has the \emph{unique square property} if any two
adjacent edges $e$ and $f$ from distinct equivalence classes span a unique
chordless square with opposite edges in the same equivalence class.  The
slight modification, in fact a mild generalization, of the relation
$\delta$ turned out to play a fundamental role for the characterization of
graph bundles \cite{Zmazek:02b} and forms the basis of efficient algorithms
to recognize Cartesian graph bundles \cite{Imrich:97,Zmazek:02a}. Graph
bundles \cite{Pisanski:83}, the combinatorial analog of the topological
notion of a fiber bundle \cite{Husemoller:93}, are a common generalization
of both Cartesian products \cite{Hammack:11a} and covering graphs
\cite{Abello:91}.

The key distinction of the unique square property is that, in contrast to
the square property, opposite edges do not have to be in the same
equivalence class for all chordless squares. Any such relation that is in
addition weakly 2-convex yields the structural properties of a graph bundle
\cite{Zmazek:02b}. Moreover, every Cartesian graph bundle over a
triangle-free simple base can be characterized by the relation $\delta^*$,
which satisfies the unique square property \cite{Imrich:97}. In a recent
attempt to better understand the structure of equivalence relations on the
edge set of a graph $G$ that satisfy the unique square property, we
uncovered a surprising connection to equitable partitions on the vertex set
of $G$ \cite{HOS14:EquiParty} and a Cartesian factorization of certain
quotient graphs that was previously observed in the context of quantum
walks on graphs \cite{Bachman:12}. It was shown that for any equivalence
class $\varphi$ of a relation $R$ with unique square property the connected
components of the graph $G_{\overline\varphi}=(V(G),E(G)\setminus\varphi)$
form a natural \emph{equitable} partition
$\mathcal{P}_{\overline\varphi}^R$ of the vertex set of $G$. Moreover, the
so-called common refinement $\mathcal{P}^R$ of this partitions
$\mathcal{P}^R_{\vpo}$ yields again an equitable partition of $V(G)$ and
the quotient $G/\mc P^R$ has then a product representation as
$G/\mc{P}^R\cong\Box_{\vp \sqsubseteq R}
G_{\varphi}/\mc{P}^R_{\overline{\varphi}}$.

In \cite{OstermeierL:14}, it was shown that a further relaxation of the
unique square property to the relaxed square property still retains the
product decomposition of these quotient graphs.  The connected components
of $G_\varphi=(V(G),\varphi)$ have a natural interpretation as fibers,
while the graph $G_{\vpo}/\mc P^R_\varphi$ can be seen as base graph. Such
a decomposition is a graph bundle if and only if edges in $G$ linking
distinct connected components of $G_\varphi$ induce an isomorphism between
them. Thus, graphs with this type of relations on the edge set, which we
call \emph{RSP-relations} for short, are a natural generalization of graph
bundles.

In this contribution we will examine RSP-relations more systematically.
First we show that, as in the case of the unique square property, there is
no uniquely determined finest RSP-relation for given graphs in
general. Even more, the number of such finest relations on a graph can grow
exponentially. However, we will see that the finest RSP-relations $R$ are
``bounded'' by relations $\delta_0, \delta_1$ and $\tau$ so that $(\tau
\cup \delta_1)^* \subseteq R\subseteq \delta_0^*$. We explain how (finest)
RSP-relations can be determined in certain graph products, given the
RSP-relations in the factors. The main difficulty in determining finest
RSP-relations derive from $K_{2,3}$ as induced subgraphs. We provide a
polynomial-time algorithm for $K_{2,3}$-free graphs and give a recipe how
finest RSP-relations can be constructed in complete and complete bipartite
graphs. Finally, we examine the close connection of covering graphs and
RSP-relations.

\section{Preliminaries}
\label{sect:prelim}

\paragraph{Notation}

In the following we consider finite, connected, undirected, simple graphs
unless stated otherwise. A graph $G$ has vertex set $V=V(G)$ and edge set
$E=E(G)$. A graph $H$ is a subgraph of $G$, $H\subseteq G$, if
$V(H)\subseteq V(G)$ and $E(H)\subseteq E(G)$. A subgraph $H$ is an \emph{induced
  subgraph} of $G$ if $x,y\in V(H)$ and $[x,y]\in E(G)$ implies $[x,y]\in
E(H)$. $H$ is called \emph{spanning subgraph} if $V(H)=V(G)$.  If none of
the subgraphs $H$ of $G$ is isomorphic to a graph $K$, we say that $G$ is
\emph{K-free}. A subgraph $H=(\{a,b,c,d\},\{[a,b], [b,c], [c,d], [a,d]\})$
is called \emph{square}, will often be denoted by $a-b-c-d$ and we say that
$[a,b]$ and $[c,d]$, resp., $[b,c]$ and $[a,d]$ are \emph{opposite}
edges. The \emph{complete} graph on $n$ vertices is denoted by $K_n$ and
the \emph{complete bipartite} graph on $n+m$ vertices by $K_{m,n}$.

We will consider equivalence relations $R$ on $E$ and denote equivalence
classes of $R$ by Greek letters, $\varphi\subseteq E$. We will furthermore
write $\varphi \eqcl R$ to indicate that $\varphi$ is an equivalence class
of $R$. The complement $\vpo$ of an $R$-class $\vp$ is defined as
$\overline{\varphi} := E\setminus \varphi$. For an equivalence class
$\varphi\eqcl R$, an edge $e$ is called $\vp$-edge if $e\in \vp$. The
subgraph $G_{\varphi}$ has vertex set $V(G)$ and edge set $\varphi$. The
connected components of $G_{\varphi}$ containing vertex $x\in V(G)$ are
called \emph{$\varphi$-layer through $x$}, denoted by $G_{\varphi}^x$.
Analogously, the subgraphs $G_{\vpo}$ and $G_{\vpo}^x$ are defined.  Two
$\varphi$-layer $G_\varphi^x, G_\varphi^y$ are said to be \emph{adjacent},
if there exists an edge $[x',y']\in\overline\varphi$ with $x'\in
V(G_\varphi^x)$ and $y'\in V(G_\varphi^y)$.

An equivalence relation $Q$ is \emph{finer} than a relation $R$ while the
relation $R$ is \emph{coarser} than $Q$ if $(e,f)\in Q$ implies $(e,f)\in
R$, i.e, $Q\subseteq R$. In other words, for each class $\vartheta$ of $R$
there is a collection $\{ \chi | \chi\subseteq \vartheta\}$ of $Q$-classes,
whose union equals $\vartheta$. Equivalently, for all $\varphi\eqcl Q$ and
$\psi\eqcl R$ we have either $\varphi\subseteq \psi$ or
$\varphi\cap\psi=\emptyset$. If $R$ is not an equivalence relation, then we
will denote with $R^*$ the finest equivalence relation that contains $R$.
Moreover, an equivalence relation $R$ is \emph{non-trivial} if it has
at least two equivalence classes. 

For a given partition $\mc{P}=\{V_1,\dots,V_l\}$ of $V(G)$ of a graph $G$,
the \emph{quotient graph} $G/\mc{P}$ has as its vertex set $\mc P$ and
there is an edge $[A,B]$ for $A,B\in\mc{P}$ if and only if there are
vertices $a\in A$ and $b\in B$ such that $[a,b]\in E(G)$.
A partition $\mathcal{P}$
of the vertex set $V(G)$ of a graph $G$ is \emph{equitable} if, for all
(not necessarily distinct) classes $A,B\in\mathcal{P}$, every vertex $x\in
A$ has the same number
$ m_{AB} :=  |N_G(x) \cap B| $
of neighbors in $B$.

\paragraph{Graph Cover and Homomorphisms}
A homomorphism $f: G\rightarrow H$ between two graphs $G$ and $H$ is called
\emph{locally surjective} if $f(N_G(u))=N_H(f(u))$ for all vertices $u\in
V(G)$, i.e., if $f_{|N_G(u)}:N_G(u)\rightarrow N_H(f(u))$ is a
surjection. We use here the obvious notation $N_G(v)$ for the open
neighborhood of $v$ in the graph $G$.  Analogously, $f$ is called
\emph{locally bijective} if for all vertices $u\in V(G)$ it holds that
$f(N_G(u))=N_H(f(u))$ and $|f(N_G(u))|=|N_H(f(u))|$, i.e.,
$f_{|N_G(u)}:N_G(u)\rightarrow N_H(f(u))$ is a bijection.  Notice, a
locally surjective homomorphism $f: G\rightarrow H$ is already globally
surjective if $H$ is connected.  If there exists a locally surjective
homomorphism $f: G\rightarrow H$, we call $G$ a \emph{quasi-cover} of
$H$. Locally surjective homomorphisms are also known as role colorings
\cite{EB:91}.  A locally bijective homomorphism is called a \emph{covering
  map}.  $G$ is a \emph{(graph) cover} or \emph{covering graph} of $H$ if
there exists a covering map from $G$ to $H$, in which case we say that $G$
\emph{covers} $H$.  $|V(H)|$ is then a multiple of $|V(G)|$, i.e., $|V(H)|=
k |V(G)|$. $H$ is referred to as \emph{$k$-fold cover} of $G$. Moreover,
every covering map $f:H\rightarrow G$ satisfies $|f^{-1}(u)|=k$ for all
$u\in V(G)$ \cite{Fiala:08}.  For more detailed information about locally
constrained homomorphisms and graph cover we refer to
\cite{FPT:08,Fiala:08}.

\paragraph{Graph Products} 
There are three associative and commutative standard graph products, the
\emph{Cartesian product} $G\BOX H$, the \emph{strong product} $G\boxtimes
H$, and the \emph{direct product} $G\times H$, see \cite{Hammack:11a}.

All products have as vertex set the Cartesian set product $V(G)\times
V(H)$. Two vertices $(g_1,h_1)$, $(g_2,h_2)$ are adjacent in $G\boxtimes H$
if $(i)$ $[g_1,g_2]\in E(G)$ and $h_1=h_2$, or $(ii)$ $[h_1,h_2]\in E(G_2)$
and $g_1 = g_2$, or $(iii)$ $[g_1,g_2]\in E(G)$ and $[h_1,h_2]\in E(G_2)$.
Two vertices $(g_1,h_1)$, $(g_2,h_2)$ are adjacent in $G\Box H$ if they
satisfy only $(i)$ and $(ii)$, while these two vertices are adjacent in
$G\times H$ if they satisfy only $(iii)$. 

Every finite connected graph $G$ has a decomposition $G=\BOX_{i=1}^n G_i$,
resp., $G=\boxtimes_{i=1}^n G_i$ into prime factors that is unique up to
isomorphism and the order of the factors \cite{Sabidussi:60}. For the
direct product an analogous result holds for non-bipartite connected
graphs.

The mapping $p_i: V(\Box_{i=1}^n G_i)\rightarrow V(G_i )$ defined by
$p_i(v) = v_i$ for $v = (v_1,v_2,\ldots,v_n)$ is called \emph{projection}
on the $i$-th factor of $G$. By $p_i(W) = \{p_i(w)\mid w\in W\}$ the set of
projections of vertices contained in $W\subseteq V(G)$ is denoted. An
equivalence relation $R$ on the edge set $E(G)$ of a Cartesian product
$G=\Box_{i=1}^n G_i$ of (not necessarily prime) graphs $G_i$ is a
\emph{product relation} if $(e,f) \in {R}$ if and only if there
exists a $j\in \{1,\ldots,n\}$ such that $|p_j(e)|=|p_j(f)|=2$. The
$G_i$-layer $G_i^w$ of $G$ is then the induced subgraph with vertex set
$V(G_i^w)=\{v\in V(G)\mid p_j(v)=w_j, \text{ for all } j\neq i\}$. It is
isomorphic to $G_i$.

Given two graphs $G$ and $H$, a map $p:G\to H$ is called a \emph{graph map}
if $p$ maps adjacent vertices of $G$ to adjacent or identical vertices in
$B$ and edges of $G$ to edges or vertices of $B$.  A graph $G$ is a
\emph{(Cartesian) graph bundle} if there are two graphs $F$, the fiber, and
$B$ the base graph, and a graph map $p:G\to B$ such that: For each vertex
$v\in V(B)$, $p^{-1}(v)\cong F$ and for each edge $e\in E(B)$ we have
$p^{-1}(e)\cong K_2\square F$.

\section{RSP-Relations: Definition and Basic Properties}

As mentioned in the introduction, relations that have the \emph{square
  property} play a fundamental role for the $\Box$-PFD of graphs. In
particular, the relation $\delta$ is the unique, finest relation on $E(G)$
with the square property. For such relations two incident edges of
different classes span exactly one chordless square and this square has
opposite edges in the same equivalence classes. A mild generalization of
the latter kind of relations are relations that have the \emph{unique
  square property}. Here two incident edges $e$ and $f$ of different
classes might span more than one square, however, there must be exactly one
chordless square spanned by $e$ and $f$ with opposite edges in the same
equivalence classes. As it turned out, a further generalization of such
relations plays an important role for the characterization of certain
properties of hypergraphs \cite{OstermeierL:14}.  Here, we examine 
this generalization in realm of undirected graph in a systematic manner.

\begin{defi}
  Let $R$ be an equivalence relation on the edge set $E(G)$ of a connected
  graph $G$. We say $R$ has the \emph{relaxed square property} if any two
  adjacent edges $e,f$ of $G$ that belong to distinct equivalence classes
  of $R$ span a square with opposite edges in the same equivalence class of
  $R$.
\end{defi}
An equivalence relation $R$ on $E(G)$ with the relaxed square property will
be called an \emph{RSP-relation} for short. In contrast to the more
familiar (unique) square property,we do not require there that squares
spanned by incident edges that belong to different equivalence classes are
unique or chordless.

The following basic result was shown in \cite{OstermeierL:14} for hypergraphs and
equivalence relations with the ``grid property'', of which graphs and
RSP-relations are a special case.

\begin{lemma}[\cite{OstermeierL:14}]
  \label{lem:classes}
  Let $R$ be an RSP-relation on $E$ of a connected graph $G=(V,E)$. Then
  each vertex of $G$ is incident to at least one edge of each $R$-class and
  thus, the number of $R$-classes is bounded by the minimum degree of $G$.
  Moreover, if $S$ is a coarser equivalence relation, $R \subseteq S$, then
  $S$ is also an RSP-relation.
\end{lemma}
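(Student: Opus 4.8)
The plan is to prove the three assertions in order, starting with the claim that every vertex meets an edge of each $R$-class.

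For the first assertion, I would fix an $R$-class $\varphi$ and a vertex $x$, and show $x$ is incident to some $\varphi$-edge. Since $G$ is connected, there is an edge $e$ somewhere in $\varphi$; if $x$ is already an endpoint of a $\varphi$-edge we are done, so suppose not. The idea is to ``walk'' the $\varphi$-class to $x$ using the relaxed square property: take any edge $f$ incident to $x$ (exists since $G$ is connected), which lies in some class $\psi\neq\varphi$. I would like to produce a $\varphi$-edge at $x$. The cleaner route is a connectivity/propagation argument: consider the set $U$ of vertices incident to at least one $\varphi$-edge, and show $U=V$. Take a vertex $x\in U$ and a neighbor $y\notin U$ via an edge $f=[x,y]$; then $f\notin\varphi$ (else $y\in U$), and $x$ has a $\varphi$-edge $e=[x,z]$, so $e,f$ are adjacent edges in distinct classes. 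The relaxed square property gives a square on $e,f$ with opposite edges in the same class; the edge opposite to $e$ is a $\varphi$-edge incident to $y$, forcing $y\in U$, a contradiction. Hence $U$ has no boundary, and by connectedness $U=V$.

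\medskip

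From this, the degree bound is immediate: every vertex $x$ is incident to at least one edge of each of the $R$-classes, and distinct classes contribute distinct incident edges, so $\deg(x)$ is at least the number of $R$-classes. Minimizing over $x$ yields that the number of $R$-classes is bounded by $\delta(G)=\min_x\deg(x)$.

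\medskip

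For the monotonicity statement, let $S\supseteq R$ be a coarser equivalence relation and let $e=[a,b]$, $f=[a,d]$ be adjacent edges lying in distinct $S$-classes. Since $R\subseteq S$, edges that are $R$-equivalent are $S$-equivalent, so $e$ and $f$ must also lie in distinct $R$-classes (if they were $R$-related they would be $S$-related). Thus the relaxed square property for $R$ applies: there is a square $a-b-c-d$ with the opposite edges $R$-equivalent, i.e.\ $([a,b],[c,d])\in R$ and $([b,c],[a,d])\in R$. Because $R\subseteq S$, these pairs are also in $S$, so the same square witnesses the relaxed square property for $S$. Hence $S$ is an RSP-relation.

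\medskip

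I expect the only nontrivial step to be the first one, and specifically the propagation argument showing every vertex meets a $\varphi$-edge. The subtlety is that the relaxed square property only supplies a square when two incident edges lie in \emph{different} classes, so the argument must carefully select an adjacent pair straddling the boundary of $U$; once that configuration is set up, the opposite-edge condition does the rest. The remaining two parts are essentially formal consequences and should be routine.
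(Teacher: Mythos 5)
Your proposal is correct. Note that the paper itself gives no proof of this lemma --- it is imported from \cite{OstermeierL:14}, where it is established for the more general ``grid property'' on hypergraphs --- so there is nothing internal to compare against; your boundary/propagation argument for the first assertion (take $U$ the set of vertices meeting a $\varphi$-edge, pick an edge $[x,y]$ with $x\in U$, $y\notin U$, and use the square spanned by that edge and a $\varphi$-edge at $x$ to push a $\varphi$-edge onto $y$) is the natural specialization of that result to graphs, and the degree bound and the coarsening claim follow exactly as you say.
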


\begin{figure}[tbp]
 \centering
  \subfigure[][]{
    \label{fig:Labelname1}
     \includegraphics[bb= 236 653 359 741,
     width=0.2\textwidth]{./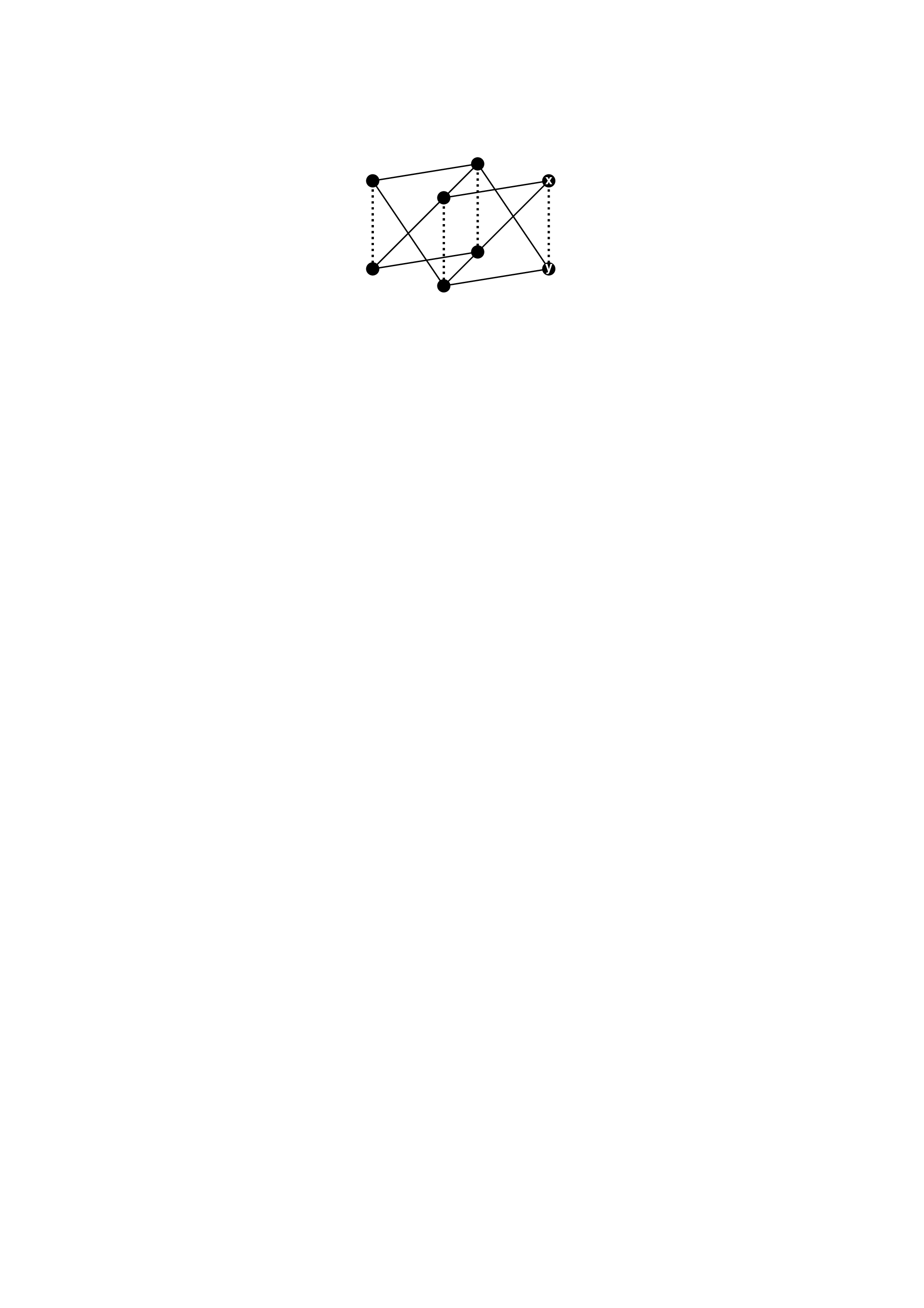} 
     \hspace{50pt}
     \includegraphics[bb= 236 653 359 741, width=0.2\textwidth]{./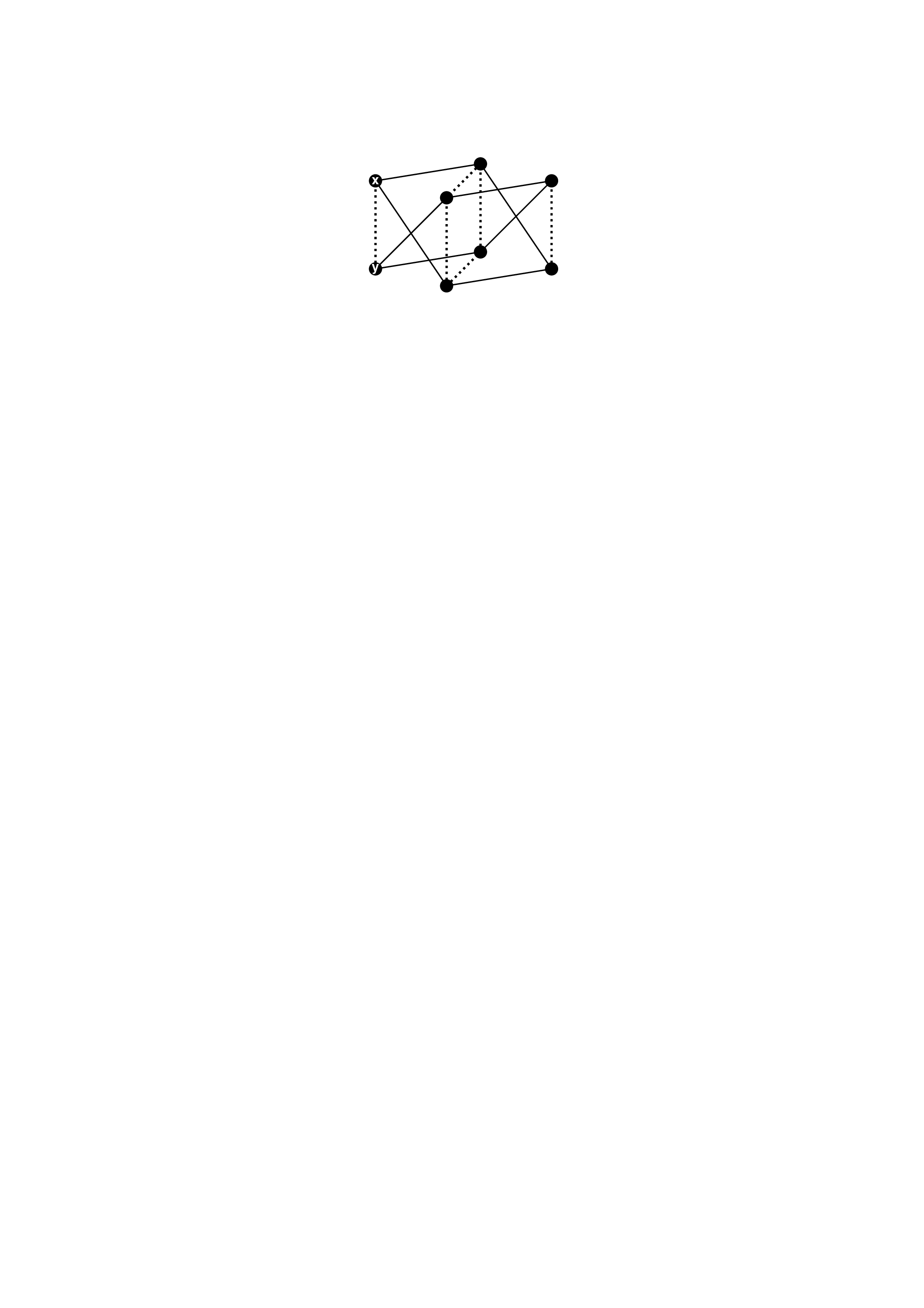}
  } 
  \begin{center}
  \subfigure[][]{
    \label{fig:2}
     \includegraphics[bb= 123 653 472 741,width=0.6\textwidth]{./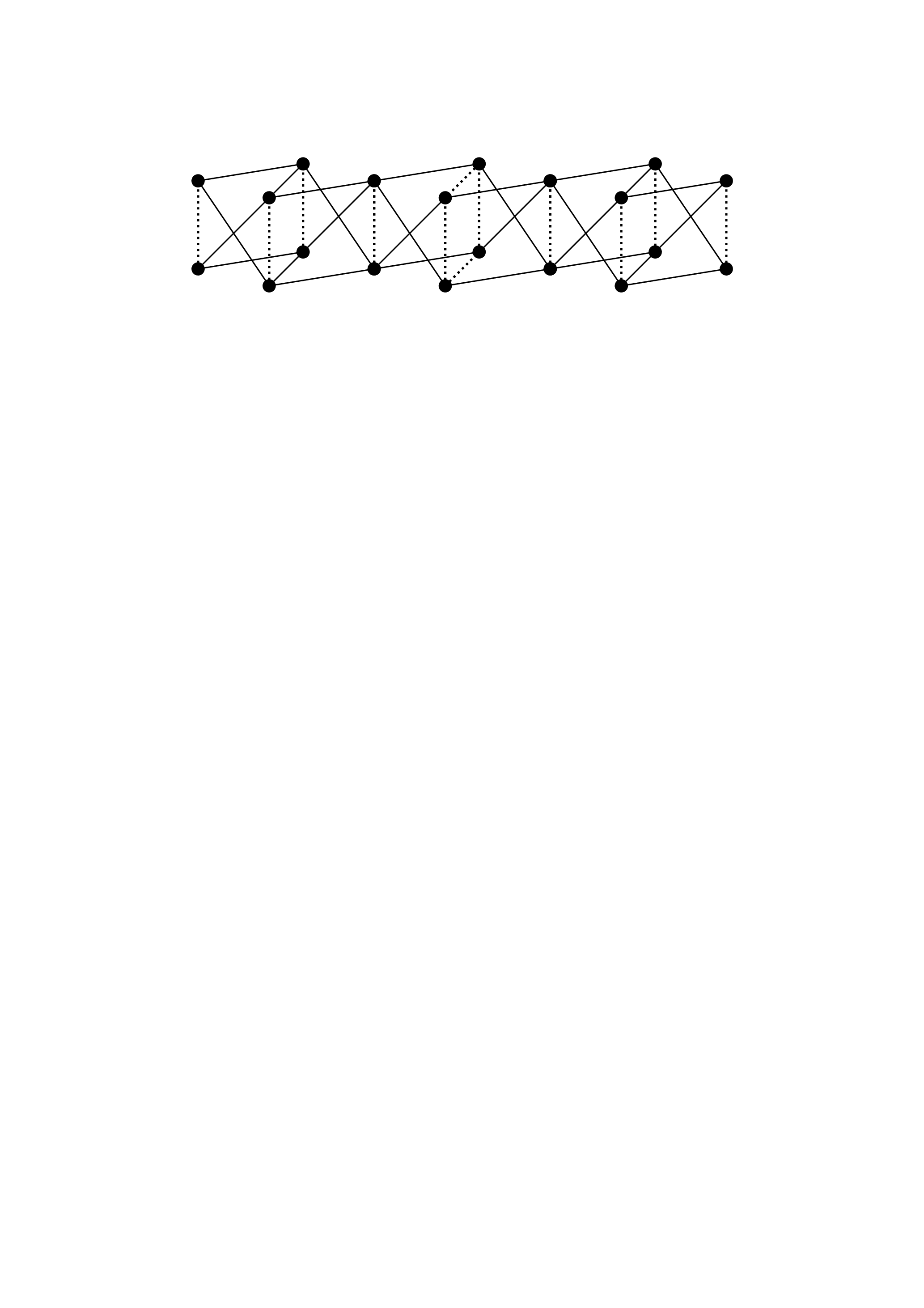}
  }
  \end{center}
  \caption{In Fig.\ \subref{fig:Labelname1} two isomorphic graphs with two
    non-equivalent finest RSP-relations are shown. Each RSP-relation has
    two equivalence classes, highlighted by dashed and solid edges.  By
    stepwisely identifying the vertices marked with $x$ and $y$, resp., one
    obtains a chain of graphs $G$, see Fig. \subref{fig:2}. For each
    subgraph that is a copy of the graph above, a finest RSP-relation can
    be determined independently of the remaining parts of the graph $G$.
    Hence, with an increasing number of vertices of such chains $G$ the
    number of finest RSP-relations is growing exponentially.  }
  \label{fig:expo}
\end{figure}

\begin{figure}[t]
  \begin{center}
    \includegraphics[bb=172 274 481 370,scale=0.75]{./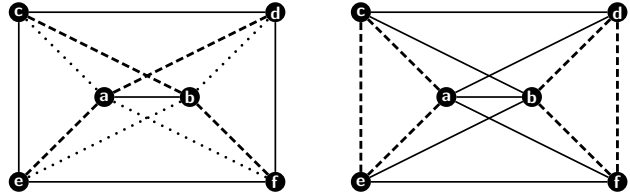}
\end{center}  
\caption{The two panels show two distinct finest RSP-relations $R$ and $S$
  on a graph with different number of equivalence classes, see Example
  \ref{exmp:diffNrEqcl}.}
\label{fig:NonuniqueRel2}
\end{figure}

For later reference we record the following technical result:
\begin{lemma} \label{lem:delete_classes} Let $R$ be an RSP-relation on 
  the edge set $E$
  of a connected graph $G=(V,E)$ and $\varphi$ be an equivalence class of
  $R$. Moreover, let $S$ be the equivalence relation on the edge set
  $E\setminus\varphi$ of the spanning subgraph $G'=(V,E\setminus\varphi)$
  of $G$ that retains all equivalence classes $\psi\neq\varphi$ of
  $R$. Then $S$ is an RSP-relation.
\label{lem:subgraphRSP}
\end{lemma}
\begin{proof}
  Let $e,f$ be adjacent edges in $E(G')$ such that $(e,f)\notin S$, say
  $e\in\psi, f\in\psi'$, $\varphi\neq\psi,\psi'\eqcl S\subseteq R$.  By
  construction, $e,f\in E(G)$ and $(e,f)\notin R$.  Thus, there exists a
  square with edges $e,f,e',f'$ such that $e,e'$ and $f,f'$ are opposite
  edges and $e'\in\psi$ as well as $f'\in\psi'$.  Hence, $e',f'\in E(G')$
  and thus the assertion follows.
\end{proof}

The RSP-relation $S$ on the spanning subgraph, as defined in Lemma
\ref{lem:subgraphRSP}, need not to be a finest RSP-relation, although $R$
might be a finest one. Consider the right graph in Figure
\ref{fig:NonuniqueRel2}. If $S$ consists only of the class $\vpo$ that is
highlighted by the drawn-through edges, then the spanning subgraph
$H=(V(G),E(G)\setminus\varphi)$ is the Cartesian graph product of a path on
three vertices and an edge. The finest RSP-relation on $E(H)$ is thus the
product relation $\sigma$ w.r.t.\ the unique $\Box$-PFD of $H$ with two
equivalence classes.

As the examples in Figures \ref{fig:expo}, \ref{fig:NonuniqueRel2} and
\ref{fig:NonuniqueRel} show, there is no unique finest RSP-relation for a
given graph $G$ and finest RSP-relations need not to have the same number
of equivalence classes. Even more, the number of such finest relations on a
graph can grow exponentially as the example in Figure~\ref{fig:expo} shows.

\begin{example}
  \label{exmp:diffNrEqcl}
  There are graphs $G=(V,E)$ with distinct finest RSP-relations that even
  have a different number of equivalence classes. Consider the graph in
  Figure \ref{fig:NonuniqueRel2}. We leave it to the reader to verify that
  the relations, whose equivalence classes are indicated by different line
  styles, indeed satisfy the relaxed square property. The RSP-relation on
  the left graph has three and on the right graph two equivalence
  classes. It remains to show, that both RSP-relations are finest ones.
  \par\noindent
  {\textbf{Left Graph:}} For all equivalence classes there is a vertex
  that is incident to exactly one edge of each class. Lemma
  \ref{lem:classes} implies that $R$ is finest RSP-relation.
  \par\noindent
  {\textbf{Right Graph:}} The equivalence class indicated by the
  dashed edges cannot be subdivided further since this would lead to
  vertices that are not met each of the two or more subclasses, thus
  contradicting Lemma~\ref{lem:classes}. The equivalence class depicted by
  drawn-through edges is isomorphic to a Cartesian product $P_3\Box K_2$.  Using
  Lemma~\ref{lem:delete_classes}, the only possible split would be the
  Product relation on this subgraph, i.e., with classes
  $\psi_1=\{[a,b],[c,d],[e,f]\}$ and $\psi_2=\{[a,d],[a,f],[b,c],[b,e]\}$.
  But then there is no square with opposite edges in the same equivalence
  classes spanned by the edges $[b,c]$ and $[c,e]$, again a contradiction.
\end{example}

We next discuss the relationship of (finest) RSP-relations with relations
of the edge set that play a role in the theory of product graphs and graph
bundles.
\begin{defi}[\cite{Feder92}]
Two edges $e = \{x,z \}$ and $f = \{z,y \}$ are in the \emph{relation
$\tau$}, $e\tau f$ if $z$ is the unique common neighbor of $x$ and $y$. 
\end{defi}
In other words, two edges are in relation $\tau$ if they are adjacent and
there is no square containing both of them. Obviously, $\tau$ is
symmetric. Its reflexive and transitive closure, i.e. the smallest
equivalence relation containing $\tau$, will be denoted by $\tau^*$. By
definition, $\tau^*\subseteq R$ for any RSP-relation $R$.

\begin{defi}
Two edges $e,f\in E(G)$ are in the \emph{relation $\delta_0$}, $e\delta_0 f$, 
if one of the following conditions is satisfied:
\begin{itemize}
\item[(i)]   $e$ and $f$ are opposite edges of a square.
\item[(ii)]  $e$ and $f$ are adjacent and there is no square 
             containing $e$ and $f$, i.e. $(e,f)\in\tau$. 
\item[(iii)] $e=f$.
\end{itemize}   
\end{defi}
The relation $\delta_0$ is reflexive and symmetric. Its transitive closure,
denoted with $\delta_0^*$, is therefore an equivalence relation. 

\begin{prop} \label{prop:finest_rel} 
  Let $G$ be a connected $K_{2,3}$-free graph and $R$ an equivalence
  relation on $E(G)$. Then $R$ has the relaxed square property if and only
  if $\delta_0\subseteq R$.
\end{prop}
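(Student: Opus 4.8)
The plan is to prove the two implications separately and to point out that only the ``only if'' direction uses $K_{2,3}$-freeness. For the ``if'' direction I would assume $\delta_0\subseteq R$ and take adjacent edges $e,f$ in distinct $R$-classes. Then $(e,f)\notin R$, so $(e,f)\notin\delta_0$; in particular clause (ii) of the definition of $\delta_0$ fails, whence $(e,f)\notin\tau$ and some square $a-b-c-d$ contains $e$ and $f$ as two adjacent edges. Clause (i) of $\delta_0$ then places each pair of opposite edges of this square in a common $\delta_0$-class, hence in a common $R$-class, so the square witnesses the relaxed square property for $e,f$. This argument is independent of $K_{2,3}$-freeness.

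For the ``only if'' direction I would assume $R$ is an RSP-relation and verify $\delta_0\subseteq R$ one clause at a time. Clause (iii) is immediate from reflexivity of $R$, and clause (ii) is exactly the inclusion $\tau^*\subseteq R$ already recorded for RSP-relations. The substantive case is clause (i), i.e.\ showing that opposite edges of a square lie in the same $R$-class. Write the square as $a-b-c-d$ with $e=[a,b]$, $f=[c,d]$ and remaining edges $g=[b,c]$, $h=[a,d]$; the goal is $(e,f)\in R$. Here I would first extract the structural consequence of $K_{2,3}$-freeness: any two vertices of $G$ have at most two common neighbors, since three common neighbors would span a subgraph isomorphic to $K_{2,3}$. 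Hence $a,c$ have precisely the two common neighbors $b,d$, and likewise $b,d$ have precisely the common neighbors $a,c$, so \emph{every} pair of adjacent edges of $a-b-c-d$ spans this square and no other.

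To conclude I would examine the four edges $e,g,f,h$, whose consecutive members share a vertex. If they all lie in one $R$-class, then $(e,f)\in R$ trivially. Otherwise some two consecutive edges lie in distinct classes, say $e$ and $g$ (the three remaining cases are symmetric under cyclic relabeling of the square). By the relaxed square property $e$ and $g$ span a square with opposite edges in the same $R$-class, and by the uniqueness just established this square is $a-b-c-d$ itself; thus both pairs of opposite edges are $R$-related, giving $(e,f)\in R$ in particular. The crux of the whole proof is exactly this clause (i): without the $K_{2,3}$-free hypothesis a pair of adjacent edges may span several distinct squares, the relaxed square property no longer singles out $a-b-c-d$, and opposite edges of a square need not be equivalent. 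This is why $\delta_0\subseteq R$ can fail for RSP-relations on general graphs, and why the hypothesis cannot simply be dropped.
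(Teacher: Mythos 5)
Your proof is correct and follows essentially the same route as the paper's: the ``if'' direction is the paper's (unproved) observation that any relation containing $\delta_0$ is an RSP-relation, and the substantive clause~(i) of the ``only if'' direction rests on exactly the same two ingredients the paper uses, namely that $K_{2,3}$-freeness forces two adjacent edges to span at most one square, and that applying the relaxed square property to a non-$R$-equivalent adjacent pair on the given square then identifies its opposite edges. The only cosmetic difference is organizational: you run through the four consecutive pairs of the square and invoke transitivity, whereas the paper fixes one side edge $g$ and splits into the cases $(e,g)\in R$ and $(e,g)\notin R$.
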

\begin{proof}
  It is easy to see, that $\delta_0^*$ has the relaxed square property and
  moreover, that any equivalence relation containing $\delta_0$ has the
  relaxed square property.
  
  Let $R$ be an RSP-relation on the edge set of a connected $K_{2,3}$-free
  graph $G$. Notice, if $G$ contains no $K_{2,3}$ than any pair of adjacent
  edges of $G$ span at most one square. Let $e,f$ be two edges in $G$ such
  that $(e,f)\in\delta_0$. We have to show that this implies $(e,f)\in
  R$. If $e=f$, then $(e,f)\in R$ is trivially fulfilled since $R$ is an
  equivalence relation. If $e$ and $f$ are not adjacent, they have to be
  opposite edges of a square. Let $g$ be an edge of this square, that is
  adjacent to both edges $e$ and $f$. If $e$ and $g$ are not in relation
  $R$, by the relaxed square property, they span some square with opposite
  edges in the same equivalence class.  Since $G$ contains no $K_{2,3}$,
  this square is unique, thus $(e,f)\in R$. Assume now, $(e,g)\in R$. If
  $e$ and $f$ are not in the same equivalence class of $R$, we can conclude
  that also $f$ and $g$ are in distinct equivalence classes, since $R$ is
  an equivalence relation.  Thus, by the relaxed square property, $f$ and
  $g$ span a square with opposite edges in the same equivalence class and
  as $G$ is $K_{2,3}$-free, this square has to be unique, which implies
  $(e,f)\in R$, a contradiction. Now let $e$ and $f$ be two adjacent edges
  and suppose for contraposition $(e,f)\notin R$. Hence, $e$ and $f$ have
  to span a square. Thus, condition (ii) in the definition of $\delta_0$ is
  not satisfied, hence, $(e,f)\notin \delta_0$. In summary, we can conclude
  $\delta_0\subseteq R$.
\end{proof}

Proposition~\ref{prop:finest_rel} implies that there is a uniquely
determined finest RSP-relation, namely the relation $\delta_0^*$ if $G$ is
$K_{2,3}$-free.  However, if $G$ is not $K_{2,3}$-free, there is no
uniquely determined finest RSP-relation, see Fig.\ \ref{fig:expo},
\ref{fig:NonuniqueRel2} and \ref{fig:NonuniqueRel}. Moreover, the quotient
graphs that are induced by these relations (see
\cite{HOS14:EquiParty,OstermeierL:14}) need not to be isomorphic.

\begin{figure}[t]
\begin{center}
  \includegraphics[bb= 59 255 502 512, 
   scale=0.75]{./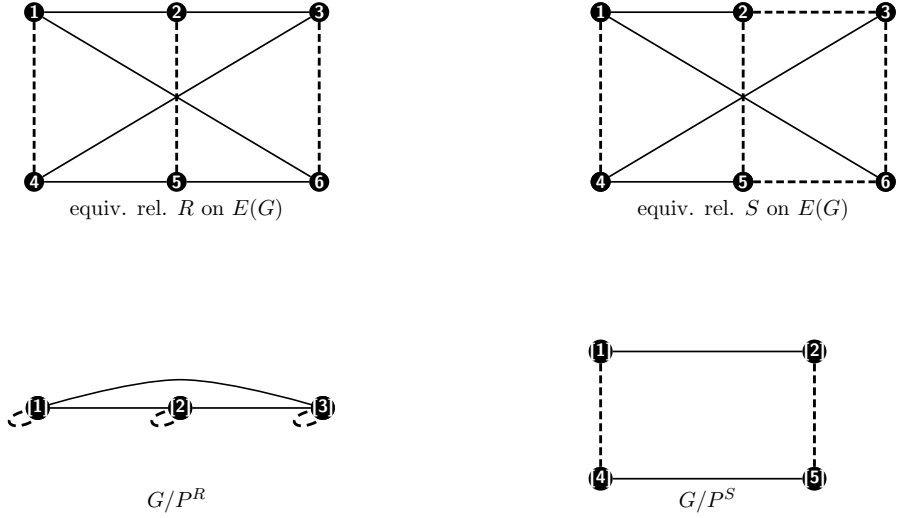}
\end{center}  
\caption{Two distinct RSP-relations $R$ and $S$ on the edge set of the same
  graph $G$ and the quotient graphs induced by these relations
  (below). Their coarsest common refinement, i.e., the coarsest equivalence
  relation $T$ with $T \subseteq R$ and $T\subseteq S$ does not have the
  relaxed square property. Moreover, the quotient graphs induced by these
  relations need not to be isomorphic.}
\label{fig:NonuniqueRel}
\end{figure}

By construction, $\delta_0$ places all edges of a $K_{2,3}$-subgraph in the
same equivalence class. In many graphs this leads to an RSP-relation which
is not finest. On the other hand, the opposite edges of a square that is
not contained in a $K_{2,3}$ must always be in the same equivalence
class. This motivates us to introduce the following
\begin{defi}
  Two edges $e,f\in E(G)$ are in the \emph{relation $\delta_1$}, $e\delta_1
  f$, if one of the following conditions is satisfied:
\begin{itemize}
\item[(i)] $e$ and $f$ are opposite edges of a square that is not contained
  in any $K_{2,3}$ subgraph of $G$.
\item[(ii)] $e=f$.
\end{itemize}   
\end{defi}

If $G$ is $K_{2,3}$-free then it is easy to verify that $\delta_0 = (\tau
\cup \delta_1)$. Proposition \ref{prop:finest_rel} implies that $(\tau \cup
\delta_1)^*$ is contained in any RSP-relation and therefore, that it is a
uniquely determined finest RSP-relation on $K_{2,3}$-free graphs. We can
summarize this discussion of the properties of finest RSP-relations as
follows:
\begin{theorem}
  Let $G$ be an arbitrary graph and $R$ be a finest RSP-relation on
  $E(G)$. Then it holds that:
  \[(\tau \cup \delta_1)^* \subseteq R\subseteq \delta_0^*.\]
  Moreover, if $G$ is $K_{2,3}$-free, then $(\tau \cup \delta_1)^* = R =
  \delta_0^*$.
  \label{thm:tauRdelta}
\end{theorem}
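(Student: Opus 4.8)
The plan is to prove the two inclusions separately, establishing the lower bound first because it is needed as an ingredient for the upper bound, and only then dispatching the $K_{2,3}$-free case using Proposition~\ref{prop:finest_rel}.

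For the lower bound $(\tau\cup\delta_1)^*\subseteq R$ I would note that $\tau\subseteq\tau^*\subseteq R$ holds for every RSP-relation by the remark following the definition of $\tau$, so it remains to show $\delta_1\subseteq R$; transitivity of $R$ then yields the closure. Thus let $e=[a,b]$ and $f=[c,d]$ be opposite edges of a square $Q=a-b-c-d$ that is contained in no $K_{2,3}$, and suppose for contradiction that $(e,f)\notin R$. I would branch on whether $[a,b]$ and $[b,c]$ lie in the same $R$-class. If they do not, the relaxed square property applied to the adjacent pair $[a,b],[b,c]$ produces a square $a-b-c-d'$ with $([a,b],[c,d'])\in R$; either $d'=d$, which already contradicts $(e,f)\notin R$, or $d'\neq d$ and then $\{a,c\}$ together with $\{b,d,d'\}$ spans a $K_{2,3}$ containing all four edges of $Q$, contradicting the hypothesis on $Q$. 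If instead $[a,b]$ and $[b,c]$ are $R$-equivalent, then $[b,c]$ and $[c,d]$ cannot be (else transitivity forces $(e,f)\in R$), and applying the relaxed square property to $[b,c],[c,d]$ gives symmetrically either $([c,d],[a,b])\in R$ or a $K_{2,3}$ with parts $\{b,d\}$ and $\{a,c,a''\}$ again containing $Q$. Hence $(e,f)\in R$, and by relabelling the same argument gives $([b,c],[a,d])\in R$, so $\delta_1\subseteq R$. This step uses only that $R$ is an RSP-relation, not that it is finest.

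The upper bound $R\subseteq\delta_0^*$ is where finestness enters, and it is the main obstacle. I would first record two facts about $\delta_0^*$: opposite edges of every square are $\delta_0$-related and hence $\delta_0^*$-related, and all six edges of any $K_{2,3}$ lie in a single $\delta_0^*$-class (a short transitive-closure computation over the three squares of a $K_{2,3}$). The strategy is to show that the coarsest common refinement $T:=R\cap\delta_0^*$ is itself an RSP-relation; since $T\subseteq R$ and $R$ is finest, this forces $T=R$, i.e.\ $R\subseteq\delta_0^*$. To verify that $T$ has the relaxed square property, take adjacent edges $e,f$ with $(e,f)\notin T$. If $(e,f)\notin R$, the relaxed square property of $R$ yields a square whose opposite edges are $R$-equivalent, and those opposite edges are automatically $\delta_0^*$-equivalent, hence in the same $T$-class. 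The delicate case is $(e,f)\in R$ but $(e,f)\notin\delta_0^*$: here $e,f$ are adjacent and not in relation $\tau$ (otherwise $(e,f)\in\delta_0\subseteq\delta_0^*$), so some square $Q$ contains them; crucially $Q$ cannot lie in any $K_{2,3}$, for then all its edges, including $e$ and $f$, would collapse into one $\delta_0^*$-class, contradicting $(e,f)\notin\delta_0^*$. Since $Q$ avoids every $K_{2,3}$, its opposite edges are $\delta_1$-related, hence $R$-related by the lower bound already proved, and also $\delta_0^*$-related; thus $Q$ witnesses the relaxed square property for the pair $(e,f)$ in $T$. This does not contradict the general failure of common refinements to be RSP (Figure~\ref{fig:NonuniqueRel}): it works precisely because one of the two relations is the distinguished $\delta_0^*$.

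Finally, for the moreover part I would invoke the identity $\delta_0=\tau\cup\delta_1$ valid on $K_{2,3}$-free graphs together with Proposition~\ref{prop:finest_rel}: the latter gives $\delta_0\subseteq R$, hence $\delta_0^*\subseteq R$, for every RSP-relation, while the two inclusions above give $(\tau\cup\delta_1)^*\subseteq R\subseteq\delta_0^*=(\tau\cup\delta_1)^*$, collapsing everything to equality. The crux of the whole argument, and the reason the $K_{2,3}$-free hypothesis is so clarifying, is the interplay in the second case of the upper bound: it is exactly the collapsing of $K_{2,3}$-edges under $\delta_0^*$ combined with $\delta_1\subseteq R$ that rescues the refinement $R\cap\delta_0^*$, so the lower bound must be in hand before the upper bound can be attempted.
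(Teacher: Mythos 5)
Your proposal is correct, and it is in fact more complete than what the paper offers: the theorem is stated there only as a ``summary of the discussion,'' with no written proof. Your treatment of the lower bound coincides with the paper's implicit reasoning ($\tau\subseteq R$ is immediate from the definition of the relaxed square property, and the paper merely \emph{asserts} that opposite edges of a square lying in no $K_{2,3}$ must be $R$-equivalent), but you actually supply the missing case analysis for $\delta_1\subseteq R$, correctly producing a $K_{2,3}$ through the square $Q$ whenever the witnessing square for the adjacent pair differs from $Q$. The genuinely new content is your argument for $R\subseteq\delta_0^*$: the paper gives no justification for this inclusion, and it does not follow merely from $\delta_0^*$ being an RSP-relation, since finest RSP-relations are minimal rather than minimum and need not be pairwise comparable. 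Your device of showing that $T=R\cap\delta_0^*$ retains the relaxed square property --- using that any square witnessing non-$\delta_0^*$-relatedness of two adjacent $R$-equivalent edges can lie in no $K_{2,3}$, so its opposite edges are $\delta_1$-related and hence already in $T$ by the lower bound --- is sound, and your remark that this does not conflict with Figure~\ref{fig:NonuniqueRel} (where a common refinement of two arbitrary RSP-relations fails to be RSP) is exactly the right caveat. The dependence of the upper bound on the previously established inclusion $\delta_1\subseteq R$ is a real logical ordering that the paper's presentation obscures; your write-up makes it explicit.
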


  Theorem~\ref{thm:tauRdelta} suggests that $K_{2,3}$-subgraphs are to
  blame for complications in understanding RSP-relations. It will therefore
  be useful to consider a subclass of RSP-relations that are
  ``well-behaved'' on $K_{2,3}$-subgraphs. They will turn out to play a
  crucial role to establish the connection of RSP-relations,
  (quasi-)covers, and equitable partitions. We fix the notation for
  $K_{2,3}$ so that $\{x,y\},\{a,b,c\}$ is the canonical partition of of
  the vertex set. We say that graph $K_{2,3}$ has a \emph{forbidden
    coloring} if the edges $[a,x]$, $[x,c]$, and $[y,b]$ are in one
  equivalence class $\varphi$ and the other edges are in the union
  $\vpo$ of the classes different from $\vp$.

\begin{defi}
  An RSP-relation is \emph{well-behaved} (on $G$) if $G$ does not contain a
  subgraph isomorphic to a $K_{2,3}$ with a forbidden coloring.
\end{defi}

For a graph $G$ and an RSP-relation $R$ consisting of only two
equivalence classes we can strengthen this definition.  It is easy to
verify that in this case the two statements are equivalent:
\begin{itemize}
\item[(i)] $R$ is well-behaved
\item[(ii)] for each pair of incident edges $[a,b]$, $[a,c]$ which are not
  in relation $R$ there exists a unique (not necessarily chordless) square
  $a-b-d-c$ with opposite edges the same classes, i.e., $([a,b], [c,d]),
  ([a,c], [b,d])\in R$.
\end{itemize}
  In the general case $(i)$ implies $(ii)$. To see this, note that if
  there are incident edges that span more than one square, say
  $\textrm{SQ}_1$ and $\textrm{SQ}_2$, with opposite edges in the same
  classes, then there is a $K_{2,3}$ with forbidden coloring that consists
  of the squares $\textrm{SQ}_1$ and $\textrm{SQ}_2$. Hence, $R$ cannot be
  well-behaved. The converse is not true in general, as shown in
  Fig.~\ref{fig:CounterUnion}.  by the non-well-behaved RSP-relation $R'$
  that nevertheless has property $(ii)$.

To obtain well-behaved RSP-relations $R$ on $G$ one can simply use
$\delta_0$ and coarsenings of it. That is, \emph{any} equivalence relation
$R$ with $\delta_0 \subseteq R$ is well-behaved.  In this case, all edges
of any $K_{2,3}$-subgraph are in the same equivalence class.  However,
coarsenings of arbitrary well-behaved RSP-relation $R$ need not be
well-behaved, see Fig.~\ref{fig:CounterUnion}.

\begin{figure}[tbp]
  \centering
  \includegraphics[bb= 250 590 400 730, scale=0.75]{./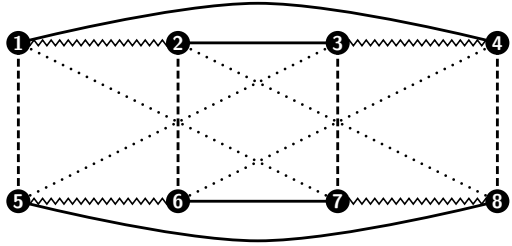}
  \caption{The well-behaved RSP-relation $R$ on the edge set $E(G)$ of the
    ``diagonalized cube'' $G$ has the four equivalence classes
    $\vp_1,\vp_2,\vp_3$ and $\vp_4$ depicted by solid, zigzag, dotted and
    dashed edges, respectively. In addition, $R$ satisfies the unique
    square property. The relation $R'$ with classes $\vp_3,\vp_4$ and
    $\psi_1=\vp_1\cup\vp_2$, however, is not well-behaved, because the
    $K_{2,3}$-subgraph with partition $\{1,6\}$ and $\{2,4,5\}$ has a
    forbidden coloring. Note, $R'$ has the unique square property.}
  \label{fig:CounterUnion}
\end{figure}

Furthermore, if $R$ is not well-behaved, this is
equivalent to the existence of squares with two adjacent edges in same
class $\varphi\eqcl R$ and others in class(es) different from $\varphi$,
see Figure~\ref{fig:forbiddensubs} and the next explanations. It is easy to
verify that any $K_{2,3}$(-subgraph) with a forbidden coloring contains
such a square. By way of example, consider the square $a-x-c-y$ in
Figure~\ref{fig:forbiddensubs}.  Conversely, let $R$ be an RSP-relation on
$E(G)$ and suppose that $G$ contains a square $a-x-c-y$ with
$([a,x],[c,x])\in R$ and $([a,x],[c,y]),([a,y],[c,x])\notin R$.  By the
relaxed square property, $[a,x]$ and $[c,y]$ span a square, say $a-x-b-y$
with opposite edges in the same equivalence class. Hence, there is a
complete bipartite graph $K_{2,3}$ with partition $\{x,y\} $ and
$\{a,b,c\}$ of $V(K_{2,3})$ and forbidden coloring.

\begin{figure}[htbp]
 \centering
 \includegraphics[bb= 214 401 311 498,
 width=0.2\textwidth]{./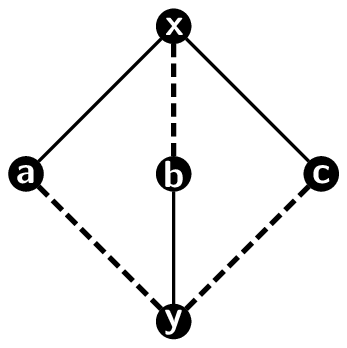}
 \hspace{55pt}
 \includegraphics[bb= 214 401 311
 498,width=0.2\textwidth]{./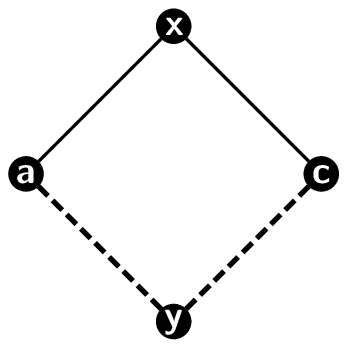}
 \caption{Forbidden coloring of a (sub)graph isomorphic to $K_{2,3}$ based
   on the classes $\vp$ and $\vpo$ of a (non-well-behaved)
   RSP-relation. The class $\vpo$ might consist of more than one
   equivalence class.  The existence of a forbidden coloring is equivalent
   to the existence of squares spanned by edges in same equivalence class
   with opposite edges in different equivalence classes. Such a square
   contained in the (sub)graph $K_{2,3}$ is shown on the right.}
  \label{fig:forbiddensubs}
\end{figure}

Let us now turn to the computational aspects of RSP-relations. It is an easy
task to determine finest relations that have the square property in
polynomial time, see \cite{HIK-13,HIK-14}.  In contrast, it seem to be hard
in general to determine one or all finest RSP-relations. We conjecture that
the corresponding decision problem is NP- or GI-hard \cite{GJ79,KST} for
general graphs.

\begin{algorithm}[tbp]
  \caption{\texttt{Compute RSP-Relation }}
  \label{alg:rexSP}
  \begin{algorithmic}[1]
     \STATE \textbf{INPUT:} A connected graph $G=(V,E)$
     \STATE Compute $R_0=(\delta_1\cup\tau)^*$; \label{alg:tau}
	  \STATE $Q \gets \{(e,f) \mid e,f\in E, e\cap f\neq \emptyset \} 
                  \setminus R_0$;
     \STATE $j\gets 0$;
     \STATE \COMMENT{Note, edges $e$ and $f$ with $(e,f)\in Q$ are adjacent, 
                     span a square and are necessarily distinct}
     \WHILE{$Q\neq \emptyset$}
        \STATE Take an arbitrary pair $(e,f)\in Q$ with 
           $e\cap f\neq \emptyset$; 
        \STATE Let $sq_1, \dots, sq_k$ be the squares spanned 
           by $e$ and $f$;
        \STATE Find the opposite edges $e_i$ of $e$ and $f_i$ of $f$ in $sq_i$;
        \IF{there is a square $sq_i$ with $(e,e_i)\in R_j^*$ and 
                   $(f,f_i)\in R_j^*$ } \label{alg:check1}
           \STATE $Q\gets Q\setminus \{(e,f), (f,e)\}$; \label{alg:rem1}
        \ELSE
           \STATE take an arbitrary square, say $sq_1$ 
                   \COMMENT{with edge set $E_0=(e,f,e_1,f_1)$};
           \STATE $R_{j+1}\gets R_j^*\cup 
                  \{(e,e_1),(e_1,e),(f_1,f),(f,f_1)\}$;
           \STATE compute $R_{j+1}^*$; 
           \STATE $Q\gets Q\setminus R_{j+1}^*$;
           \STATE $j\gets j+1$;
        \ENDIF
     \ENDWHILE			
     \STATE $R\gets R_j^*$
     \STATE \textbf{OUTPUT:} An RSP-relation $R$ on $E$;
   \end{algorithmic}
\end{algorithm}

On the other hand, an efficient polynomial-time solution exists for
$K_{2,3}$-free graphs since $\delta_0$ can be constructed efficiently,
e.g., by listing all squares \cite{CN85}.  Algorithm \ref{alg:rexSP} serves
as a heuristic to find a finest RSP-relation for general graphs. The basic
idea is to start from the lower bound $R=(\delta_1\cup \tau)^*$ and to
unite equivalence classes of $R$ stepwisely until an RSP-relation is
obtained.

\begin{prop}
  Let $G=(V,E)$ be a given graph with maximum degree $\Delta$.  Algorithm
  \ref{alg:rexSP} computes an RSP-relation $R$ on $E$ in $O(|V||E|^2\Delta^4)$
  time.  If $G$ is $K_{2,3}$-free, then Algorithm \ref{alg:rexSP} computes
  a finest RSP-relation on $E$.
\end{prop}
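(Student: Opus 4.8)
The plan is to prove the two assertions separately: that the algorithm terminates and returns an RSP-relation within the stated time bound, and that for $K_{2,3}$-free inputs this output is even a finest RSP-relation. Throughout, the governing observation is that the relations produced form an ascending chain $R_0 \subseteq R_1^* \subseteq R_2^* \subseteq \cdots$ of equivalence relations, since the else-branch only ever \emph{adds} pairs before taking the transitive closure, while the if-branch leaves $R_j^*$ untouched. In particular every $R_j^*$, and hence the output $R$, is a coarsening of $R_0 = (\delta_1\cup\tau)^*$. I would first record termination: the else-branch strictly decreases the number of equivalence classes, because when $sq_1$ is processed at least one of the pairs $(e,e_1)$, $(f,f_1)$ joins two previously distinct classes, for otherwise the if-condition would have held for $sq_1$; hence it is executed at most $|E|-1$ times. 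Between consecutive merges only if-branches occur, and each permanently deletes a pair from $Q$, which never grows. A potential function such as $\Phi = |Q| + (\text{number of classes})$ makes this precise and bounds the number of loop passes by $O(|Q|+|E|) = O(|V|\Delta^2)$.

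Next I would verify that the returned $R = R_j^*$ is an RSP-relation. Let $e,f$ be adjacent edges lying in distinct $R$-classes; I must produce a square spanned by $e,f$ whose opposite edges lie in equal $R$-classes. Because $R \supseteq R_0$, distinct $R$-classes force $(e,f)\notin R_0$, so the pair entered $Q$ at initialization. Since $Q=\emptyset$ at termination, it was later removed, and there are only two routes. If it was removed in an else-step through $Q \gets Q\setminus R_{j+1}^*$, then $(e,f)\in R_{j+1}^*\subseteq R$, contradicting that $e,f$ lie in distinct $R$-classes. Hence it was removed in an if-step relative to some $R_m^*$, at which moment a square $sq_i$ with $(e,e_i)\in R_m^*$ and $(f,f_i)\in R_m^*$ existed, $e_i,f_i$ being the edges of $sq_i$ opposite to $e,f$. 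The crucial point -- and the step I expect to be the main obstacle -- is that this witness survives all subsequent coarsenings: since $R_m^*\subseteq R$ we still have $(e,e_i),(f,f_i)\in R$, so $sq_i$ is exactly the required square. In short, once a pair is certified ``good'' it remains good, because coarsening can only keep opposite edges in a common class; this is precisely what licenses the algorithm's never re-examining a deleted pair.

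For the running time I would precompute an adjacency oracle so that adjacency tests and the enumeration of the squares spanned by a fixed incident pair $e=[x,z]$, $f=[z,y]$ (the common neighbours of $x$ and $y$ other than $z$, at most $\Delta$ many) are cheap. Computing $R_0=(\delta_1\cup\tau)^*$ reduces to listing all squares and testing each for containment in a $K_{2,3}$ (a local check at the two diagonal pairs), which can be carried out with the square-listing routine of \cite{CN85} followed by a transitive closure. Each loop pass then finds at most $\Delta$ squares and their opposite edges, performs $O(\Delta)$ class-membership comparisons for the if-test, and, in an else-pass, recomputes a transitive closure and filters $Q$. Multiplying the $O(|V|\Delta^2)$ (loosely $O(|E|^2)$) passes by the per-pass cost, and adding the initialization, stays within the claimed $O(|V||E|^2\Delta^4)$ bound; I would treat the exact exponent bookkeeping as routine rather than grind through it.

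Finally, for the $K_{2,3}$-free case I would invoke Theorem~\ref{thm:tauRdelta}: there $(\delta_1\cup\tau)^* = \delta_0^*$ is already the unique finest RSP-relation, so $R_0$ itself has the relaxed square property. Consequently, for every pair $(e,f)\in Q$ -- which are adjacent and in distinct $R_0$-classes -- the relaxed square property supplies a square spanned by $e,f$ with opposite edges in the same $R_0$-class, and since the algorithm enumerates \emph{all} squares on $e,f$ it detects such an $sq_i$; thus the if-branch always fires. Hence the else-branch is never entered, no merging occurs, and the algorithm returns $R = R_0 = (\delta_1\cup\tau)^*$, which by Theorem~\ref{thm:tauRdelta} is the unique finest RSP-relation.
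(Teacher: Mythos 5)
Your proposal is correct and follows essentially the same route as the paper's proof: start from $R_0=(\delta_1\cup\tau)^*$, argue that every pair removed from $Q$ acquires a witness square whose opposite-edge identifications persist under the subsequent coarsenings, and observe that in the $K_{2,3}$-free case the if-branch always fires so the output is exactly $(\delta_1\cup\tau)^*=\delta_0^*$, the unique finest RSP-relation. Your treatment of termination via a potential function and the explicit persistence-of-witnesses argument are somewhat more careful than the paper's informal narration, but they formalize the same ideas rather than constituting a different approach.
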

\begin{proof}
  Clearly, $(\delta_1\cup \tau)^*$ must be contained in every RSP-relation
  $R$.  The set $Q$ contains all adjacent candidate edges $(e,f)$, where we
  have to ensure that they span a square with opposite edges in the same
  equivalence class.  Since we computed already $\tau$, we can conclude
  that if $e$ and $f$ are contained in $Q$, then they span some square.
  Thus, we check in Line \ref{alg:check1} whether there are already
  opposite edges $e'$ of $e$ and $f'$ of $f$ in one of those squares
  spanned by $e$ end $f$ with $(e,e'),(f,f') \in R_j^*$, i.e., $e'$ and
  $e$, resp., $f'$ and $f$ are in the same equivalence class.  If so, we
  can safely remove $(e,f)$ from $Q$.  If not, 
  we will construct a square  spanned by $e$ and $f$  with opposite edges in
  the same class and the 
  pair $(e,f)$ will be removed from $Q$ in the next run of the while-loop
  (Line \ref{alg:rem1}). 
  To be more precise, we take one of those
  squares spanned by $e$ and $f$ 
  and add $(e,e')$ and $(f,f')$ to $R_j$ resulting in
  $R_{j+1}$. Hence, $e$ and $f$ span now a square with opposite edges in
  the same class.  We then compute the transitive closure $R_{j+1}^*$. This
  might result in new pairs $(a,b)\in R_{j+1}^*$ of adjacent edges, which
  can safely be removed from $Q$ since they are in the same equivalence
  class, and thus do not need to span a square with opposite edges in the
  same class. Hence, we compute $Q\gets Q\setminus R_{j+1}^*$.  
  When $Q$ is empty all adjacent pairs (which span
  at least one square) are added in a way that at least one square has
  opposite edges in the same equivalence class. Thus, $R$ satisfies the
  relaxed square property.  Note, if $G$ is $K_{2,3}$-free, then all pairs
  $(e,f)$ of adjacent edges $e$ and $f$ already span a square with opposite
  edges in the same class, due to $\delta_1$.  Hence, all such pairs
  $(e,f)$ will be removed from $Q$, without adding any new pair to $R_0^*$.
  In this case we obtain $R=(\delta_1\cup \tau)^*$.

  In order to determine the time complexity we first consider the relation
  $\delta_1$. Note that there are at most $O(|E|\Delta^2)$ squares in a
  graph, that can be listed efficiently in $O(|E|\Delta)$ time, see Chiba
  and Nishizeki \cite{CN85}. For the computation of $\delta_1$, we have to
  check for each square $a-b-c-d$ whether it is contained in a $K_{2,3}$
  subgraph or not. Thus, we need to verify whether $a$ and $c$ have a
  common neighbor $x\not\in \{b,d\}$, and, if $b$ and $d$ have a common
  neighbor $x\not\in \{a,c\}$, respectively. If none of the cases occur,
  i.e., the square is not part of a $K_{2,3}$ subgraph, then we put the
  pairs $([a,b], [c,d])$ and $([a,d], [b,c])$ to $\delta_1$. This task can
  be done in $O(\Delta^2)$ time for each square, resulting in an overall
  time complexity of $O(|E|\Delta^4)$. The relation $\tau$ can be computed
  in $O(|V||E|)$ time \cite[Prop. 23.5]{Hammack:11a} and the transitive
  closure $(\delta_1\cup \tau)^*$ in $O(|E|^2)$ time, \cite[Prop.
  18.2]{Hammack:11a}. Thus, we end in time complexity
  $O(|E|^2\Delta^4)$ for the computation of $(\delta_1\cup \tau)^*$.
  Finally, we have to check for the at most $|V|\Delta^2$ pairs of adjacent
  edges whether they already span a square with opposite edges in the same
  class or not and compute the transitive closure $R_{j+1}^*$ if necessary.  
  Since there are at most $|E|\Delta^2$ squares, $|E|\leq
  |V|\Delta$, and the transitive closure can be computed in 
  $O(|E|^2)$ time, the latter task can be done in $O(|V||E|^2\Delta^3)$ time. 
\end{proof}

As the following example 
shows, the order in which the squares
are examined does matter in the general case, hence Alg.\ \ref{alg:rexSP}
does not produce a finest RSP-relation in general.  

\begin{example}
Consider the complete graph $K_5=(V,E)$ with vertex set 
$V=\mathbb Z_5$ and natural edge set.
After the init step we
have $R_0=\{(e,e)\mid e\in E\}$ and hence, $Q$ contains all pairs of
adjacent edges.
To obtain a finest RSP-relation, we could start with the pair
$([0,1][1,4])\in Q$ that span the square $0-1-4-3$ get as classes
$\vp_1 = \{[0,1],[3,4]\}$ and $\vp_2 = \{[1,4],[0,3]\}$ of $R_1^*$.
Continuing with $([0,1][1,2])\in Q$ and the square $0-1-2-3$, we obtain the
classes $\vp_1 \cup \{[2,3]\}$ and $\vp_2 \cup \{[1,2]\}$ of $R_2^*$.
Next, take $([0,1][0,4])\in Q$ and the square $0-1-2-4$, followed by the
pair $([0,1][0,2])\in Q$ and the square $0-1-4-2$, resulting in the classes
$\vp_1=\{[0,1][2,3],[3,4],[2,4]\}$ and
$\vp_2=\{[0,2],[0,3],[0,4],[1,2],[1,4]\}$ for $R_4^*$. Finally, take
$([0,1][1,3])\in Q$ and the square $0-1-3-4$ to obtain the classes $\vp_1$
and $\vp_2\cup\{[1,3]\}$ for a valid finest RSP-relation, see Example
\ref{exmpl:anotherKM} for further details. Note, the computed RSP-relation
is not well-behaved. 

However, if we start with the pair 
$([0,1][0,4])\in Q$ and square $0-1-3-4$,
followed by   
$([1,2][1,3])\in Q$ and $1-2-4-3$, then 
$([1,4][3,4])\in Q$ and $1-2-3-4$, next 
$([0,1][0,3])\in Q$ and $0-1-2-3$ and finally 
$([0,2][2,3])\in Q$ and $0-2-3-4$, the resulting
RSP-relation has only one equivalence class. 
\end{example}

\section{RSP-Relations and Graph Products}

Graph products are intimately related with the square property. It seem
natural, therefore to ask whether finest RSP-relations can be found more
easily in products. We use the symbol $\circledast$ for one of the three 
graph products defined in Section~\ref{sect:prelim}.

\begin{defi} \label{def:rel_prod}
  For $\circledast\in\{\Box,\boxtimes,\times\}$ let $G=\circledast_{i\in
  I}G_i$. For each $i\in I$ let $R_i$ be an equivalence relation on
  $E(G_i)$. Furthermore, define for $e\in E(G)$ the set $I_e:=\{i\in I\mid
  p_i(e)\in E(G_i)\}$. We define an equivalence relation $\circledast_{i\in
  I}R_i$ on $E(G)$ as follows: $(e,f)\in \circledast_{i\in I}R_i$ if and
  only if $I_e=I_f$ and $(p_i(e),p_i(f))\in R_i$, for all $i\in I_e$.
\end{defi}
If $\circledast=\Box$ then $|I_e|=1$ for all $e\in E(G)$, and if
$\circledast=\times$ then $I_e=I$ for all $e\in E(G)$.

\begin{lemma}
  \label{lem:rel_prod} For $\circledast\in\{\Box,\boxtimes,\times\}$ let
 $G=\circledast_{i\in I}G_i$. For each $i\in I$ let $R_i$ be an equivalence
 relation on $E(G_i)$. Then $R:=\circledast_{i\in I}R_i$ is an RSP-relation
 if and only if $R_i$ is an RSP-relation for all $i\in I$.
\end{lemma}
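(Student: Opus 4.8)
The plan is to prove the biconditional by establishing the two directions separately, and to reduce everything to the three product types by an explicit analysis of how squares in $G=\circledast_{i\in I}G_i$ project onto the factors. The central observation I would exploit is that the relation $R=\circledast_{i\in I}R_i$ sorts edges of $G$ by the pair $(I_e, (p_i(e))_{i\in I_e})$, so two adjacent edges $e,f$ lie in different $R$-classes precisely when either their ``active coordinate sets'' $I_e, I_f$ differ, or they agree but the projections differ in some coordinate class. This dichotomy will drive the construction of the required square.

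First I would prove the easier direction, that $R$ an RSP-relation implies each $R_i$ is an RSP-relation. Here I would take adjacent edges $e_i, f_i$ of $G_i$ in distinct $R_i$-classes and lift them to adjacent edges $e,f$ of $G$ that are $G_i$-layer edges (choosing a common vertex whose other coordinates agree), so that $I_e=I_f=\{i\}$ for $\circledast=\Box$, or more generally so that $e,f$ differ only in the $i$-th projection. By construction $(e,f)\notin R$, so the relaxed square property in $G$ yields a square $e$-$g$-opposite-$h$ with opposite edges $R$-equivalent; projecting this square to $G_i$ via $p_i$ and checking that it does not degenerate to a single vertex or edge gives the desired square in $G_i$ with opposite edges in the same $R_i$-class. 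The care needed is to ensure the projected figure is a genuine square and that the $R$-equivalence of opposite edges projects to $R_i$-equivalence of the corresponding projections.

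For the harder converse direction, I would assume every $R_i$ is an RSP-relation and take adjacent edges $e,f$ in $G$ with $(e,f)\notin R$. I would split into the two cases above. If $I_e=I_f$ but some coordinate $i\in I_e$ has $(p_i(e),p_i(f))\notin R_i$, then $p_i(e), p_i(f)$ are adjacent in $G_i$ (sharing the image of the common vertex) and distinct $R_i$-classes, so by the RSP-property of $R_i$ they span a square in $G_i$; I would lift this square coordinatewise back into $G$ (holding the inactive coordinates fixed and copying the existing structure in the other active coordinates) to produce a square in $G$ whose opposite edges are $R$-equivalent. If instead $I_e\neq I_f$, the two edges are ``active'' in different coordinate directions, and I would build the square directly from the product structure: the edge active in $I_e\setminus I_f$ and the edge active in $I_f\setminus I_e$ generate a commuting square in the product, with opposite edges automatically sharing the same active-coordinate signature and the same projections, hence $R$-equivalent.

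The main obstacle I expect is the bookkeeping for the strong product $\boxtimes$ and the direct product $\times$, where $|I_e|$ may exceed one and edges can be ``diagonal.'' In these cases verifying that the lifted figure is a chordless-or-not square with the correct opposite-edge equivalences, and that no coordinate collapses unexpectedly, requires a careful case distinction on which coordinate subsets $I_e, I_f$ overlap. I would handle this by treating the three products uniformly through the $I_e$-signature wherever possible and isolating the genuinely product-specific degeneracy checks, relying on the fact (noted in the excerpt) that $|I_e|=1$ for $\Box$ and $I_e=I$ for $\times$ to simplify the two extreme cases, leaving $\boxtimes$ as the general template from which the others follow by restriction.
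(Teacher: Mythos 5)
Your overall strategy --- build the fourth vertex of the required square coordinate by coordinate, invoking the RSP property of the factors where needed, and get the converse by lifting adjacent edges from a factor and projecting the resulting square back --- is the same as the paper's, and your ``$R$ RSP $\Rightarrow$ $R_i$ RSP'' direction matches the paper's argument, including the non-degeneracy check (the paper rules out $p_i(w)=x_i$ by noting it would force $p_i(e')=f_i$ and hence $(e_i,f_i)\in R_i$).

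However, the case split you propose for the forward direction contains a step that fails. When $I_e\neq I_f$ you claim the square is generated by the coordinates in $I_e\setminus I_f$ and $I_f\setminus I_e$ and that its opposite edges are \emph{automatically} $R$-equivalent. This is false whenever $I_e\cap I_f$ contains a coordinate $j$ with $(p_j(e),p_j(f))\notin R_j$: the opposite edge $e'=[w,z]$ has $p_j(e')=[p_j(w),p_j(z)]$, and no choice of $p_j(w)$ gives $(p_j(e),p_j(e'))\in R_j$ without invoking the RSP property of $R_j$ in that coordinate. (Take $G=G_1\boxtimes G_2$, $e$ a diagonal edge at $x$, $f$ a $G_2$-layer edge at $x$, and $(p_2(e),p_2(f))\notin R_2$.) Symmetrically, in your case $I_e=I_f$ you apply the factor RSP property in only one offending coordinate and ``copy the existing structure'' in the rest; for the direct product, where $I_e=I$ for every edge, several coordinates can simultaneously have non-equivalent projections and each needs its own square-completion, while coordinates with $R_j$-equivalent projections need the explicit choice $p_j(w):=p_j(x)$ rather than a copy of anything. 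The paper avoids both problems with a single uniform construction: it partitions $I$ into $I_0=\{i\mid (p_i(e),p_i(f))\in R_i\}$, $I^*=(I_e\cap I_f)\setminus I_0$, $I\setminus I_e$ and $I\setminus I_f$, sets $p_i(w)=p_i(x)$ on $I_0$, uses the RSP property of $R_i$ to choose $p_i(w)=w_i$ for each $i\in I^*$, and sets $p_i(w)=p_i(z)$ resp.\ $p_i(y)$ on the inactive parts. Your two cases are really instances of this one construction; the dichotomy $I_e=I_f$ versus $I_e\neq I_f$ does not align with where the factor RSP property is actually needed, and as written both branches would break for $\boxtimes$ and $\times$.
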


\begin{proof}
  First suppose $R_i$ has the relaxed square property for all $i\in I$.
  We have to show that $R$ has the relaxed square property.
  Therefore, let $e=[x,y],f=[x,z]\in E(G)$ such that $(e,f)\notin R$.
  We need to show that there exists a vertex $w\in V(G)$ such that 
  $e'=[w,z]\in E(G)$, $f'=[w,y]\in E(G)$ and $(e,e')\in R$ as well as
  $(f,f')\in R$.
  
  Let $I_0:=\{i\in I\mid (p_i(e),p_i(f))\in R_i\}$. Notice, that
  $I_0\subseteq I_e\cap I_f$. Moreover, we have $(p_j(e),p_j(f))\notin
  R_j$ for all $j\in (I_e\cap I_f)\setminus I_0=:I^*$. Since $R_i$ has the
  relaxed square property for all $i\in I$, for all $j\in I^*$ there exists
  a vertex $w_j\in V(G_j)$ such that $(p_j(e),[p_j(z),w_j])\in R_j$ as well
  as $(p_j(f),[p_j(y),w_j])\in R_j$.
  
  Let $w\in V(G)$ such that
  \begin{align*}
    p_i(w)&=p_i(x)\quad \text{for all }i\in I_0\\
    p_i(w)&=w_i\quad \text{\ \ \ \ for all }i\in I^*\\
    p_i(w)&=p_i(z)\quad \text{for all }i\in I\setminus I_e\\
    p_i(w)&=p_i(y)\quad \text{for all }i\in I\setminus I_f.
  \end{align*}
  Since $I=I_0\dot{\cup}I^*\dot{\cup}(I\setminus(I_e\cap I_f))$,
  $I\setminus (I_e\cap I_f)=I\setminus I_e \cup I\setminus I_f$ and
  $p_i(z)=p_i(x)=p_i(y)$ for all $i\in I\setminus I_e \cap I\setminus I_f$,
  this vertex exists in $V(G)$ and is well defined.
  
  We now have to verify that $w$ has the desired properties. More
  precisely, we have to verify the following statements:
  \begin{itemize}
  \item[(i)] $p_i(w)=p_i(z)$ for all $i\in I\setminus I_e$,
  \item[(ii)] $p_i(w)=p_i(y)$ for all $i\in I\setminus I_f$,
  \item[(iii)] $e_i':=[p_i(z),p_i(w)]\in E(G_i)$ such that
    $(p_i(e),e_i')\in R_i$ for all $i\in I_e$,
  \item[(iv)] $f_i':=[p_i(y),p_i(w)]\in E(G_i)$ such that $(p_i(f),f_i')\in
    R_i$ for all $i\in I_f$.
  \end{itemize}
  Assertions $(i)$ and $(ii)$ are trivially fulfilled by construction.  To
  prove assertion $(iii)$, note it holds that
  $I_e=I_0\dot{\cup}I^*\dot{\cup} I_e\setminus I_f$. From $p_i(w)=p_i(x)$
  for all $i\in I_0$, we conclude $e_i'=[p_i(z),p_i(x)]=p_i(f)\in E(G_i)$,
  and moreover, by construction of $I_0$ and since $R_i$ is an equivalence
  relation, we have $(p_i(e),e_i')\in R_i$ for all $i\in I_0$.  By the
  choice of $w$, it holds that $e_i'\in E(G_i)$ and $(p_i(e),e'_i)\in R_i$
  for all $i\in I^*$ . Finally, we have
  $e_i'=[p_i(z),p_i(y)]=[p_i(x),p_i(y)]=p_i(e)\in E(G_i)$ for all $i\in
  I_e\setminus I_f$ and since $R_i$ is an equivalence relation,
  $(e'_i,p_i(e))\in R_i$. Thus, $e'=[w,z]\in E(G)$ and $(e,e')\in R$.
 
  Assertion $(iv)$, which implies $f'=[w,y]\in E(G)$ and $(f,f')\in R$, can
  be shown by analogously.
  
  Now suppose $R$ is an RSP-relation. We have to show that for all $i\in
  I$, $R_i$ has the relaxed square property. Therefore, let $i\in I$ and
  $e_i=[x_i,y_i], f_i=[x_i,z_i]$ be two adjacent edges in $G_i$ such that
  $(e_i,f_i)\notin R_i$. We need to show, that there exists some vertex
  $w_i\in V(G_i)$ such that $e'_i:=[w_i,z_i], f'_i:=[w_i,y_i]$ are edges in
  $G_i$ with $(e_i,e_i')\in R_i$ and $(f_i,f_i')\in R_i$. By definition of
  $\circledast$, there exists edges $e=[x,y],f=[x,z]\in E(G),\;
  p_i(x)=x_i,p_i(y)=y_i,p_i(z)=z_i$, with $p_i(e)=e_i$ and $p_i(f)=f_i$,
  that are adjacent. It holds that $i\in I_e\cap I_f$ and by definition of $R$,
  $(e,f)\notin R$. Since $R$ has the relaxed square property, there exists
  some vertex $w\in V(G)$ such that $e':=[w,z], f':=[w,y]$ are edges in $G$
  with $(e,e')\in R$ and $(f,f')\in R$. That is, by definition of $R$,
  $I_e=I_{e'}$ and $(p_j(e),p_j(e'))\in R_j$ for all $j\in I_e$ as well
  as $I_f=I_{f'}$ and $(p_j(f),p_j(f'))\in R_j$ for all $j\in
  I_f$. Thus, we have in particular $(e_i,p_i(e')),(f_i,p_i(f'))\in R_i$
  and $z_i\neq p_i(w)\neq y_i$.  Moreover, $p_i(w)\neq x_i$, since
  otherwise $p_i(e')=[p_i(w),p_i(z)]=[x_i,z_i]=f_i$ and therefore
  $(f_i,e_i)=(p_i(e'),p_i(e))\in R_i$ must hold, a contradiction. Hence,
  with $w_i:=p_i(w)$ the assertion follows.
\end{proof}

\begin{figure}[t]
  \begin{center}
    \includegraphics[bb=149 403 340 597,scale=.8]{./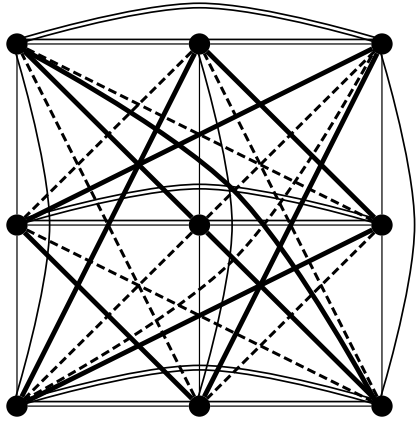}
  \end{center}  
  \caption{Refinement of product of relations of $K_9$ w.r.t.\ $K_9\cong
    K_3\boxtimes K_3$}
  \label{fig:K9}
\end{figure}

For $\circledast\in\{\times,\boxtimes\}$, the relation $R=\circledast_{i\in I}R_i$ need
not to be the finest RSP-relation on $E(G)=E(\circledast_{i\in I}G_i)$ although
$R_i$ is a finest RSP-relation on $E(G_i)$ for all $i\in I$. See
Fig.~\ref{fig:K9} for an example: Shown is the complete graph
$K_9$ with a finest RSP-relation consisting of four equivalence
classes depicted by drawn-through, double, dashed and thick lines. Joining the
two classes with dashed and thick edges to one class, one gets a coarser
relation $R_1\boxtimes R_2$, w.r.t. $K_9\cong K_3\boxtimes K_3$ where $R_i$
denotes the trivial relation on $E(K_3)$. This implies together with
Lemma~\ref{lem:delete_classes} that also $R_1\times R_2$ is not a finest
RSP-relation on $E(K_3\times K_3)$.

However, this does not hold for the Cartesian product $\Box$. Moreover, we
have:
\begin{lemma}
  \label{lem:finestCartRel}
  Let $G=\Box_{i\in I} G_i$ be a connected and simple graph. Then $R$ is a finest
  RSP-relation on $E(G)$ if and only if $R=\Box_{i\in I} R_i$ where $R_i$ is a
  finest RSP-relation on $E(G_i)$.
\end{lemma}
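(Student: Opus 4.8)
The plan is to prove both implications by analysing how $4$-cycles interact with the Cartesian coordinate structure and then to exploit minimality of a finest RSP-relation (recall that \emph{finest} here means minimal under inclusion among RSP-relations) through a refinement argument. Throughout I write $E^{(i)}$ for the set of edges of $G=\Box_{i\in I}G_i$ whose single non-trivial projection lies in $G_i$, and $\sigma$ for the equivalence relation on $E(G)$ whose classes are exactly the sets $E^{(i)}$. The first thing I would record are three elementary structural facts, obtained by tracking, for each edge of a $4$-cycle, which coordinate it changes and using that the projection onto every factor must be a closed walk: the only admissible squares are those lying inside a single $G_i$-layer (all four edges changing coordinate $i$) and ``Cartesian squares'' that use two factors $i\neq j$ in the pattern $i,j,i,j$. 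This yields (F1) opposite edges of \emph{any} square belong to the same factor; (F2) two adjacent edges from different factors span \emph{exactly one} square, whose two pairs of opposite edges again lie in the respective factors; and (F3) two adjacent edges from the same factor span only squares that lie entirely in their common $G_i$-layer. These facts are the geometric engine of the whole proof.

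For the ``only if'' direction the key step is to show that a finest RSP-relation $R$ satisfies $R\subseteq\sigma$, i.e.\ every $R$-class is contained in a single $E^{(i)}$. I would prove this by showing that the common refinement $R\wedge\sigma$ is again an RSP-relation; minimality of $R$ then forces $R\wedge\sigma=R$, hence $R\subseteq\sigma$. To verify the relaxed square property for $R\wedge\sigma$, take adjacent edges $e,f$ in distinct $(R\wedge\sigma)$-classes. If $e,f$ lie in the same factor but $(e,f)\notin R$, the relaxed square property of $R$ supplies a square whose opposite edges are $R$-related, and by (F1) these stay in the correct factor, so they are $R\wedge\sigma$-related. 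If $e\in E^{(i)}$ and $f\in E^{(j)}$ with $i\neq j$, their unique square from (F2) is the one to use; write $\hat e$ for the edge of this square opposite to $e$. When $(e,f)\notin R$ the relaxed square property of $R$ forces the opposite edges into common $R$-classes. When $(e,f)\in R$ I would still deduce $(e,\hat e)\in R$ by examining the pair $(\hat e,f)$: either $(\hat e,f)\notin R$, and the relaxed square property applied to this pair (whose unique square is the very same one) places $e$ and $\hat e$ in one class, or $(\hat e,f)\in R$ and then $(e,\hat e)\in R$ by transitivity. This last case distinction is the delicate point, and it is exactly here that I expect the main obstacle to lie, since it is the step that rules out a finest relation mixing two factors in a single class.

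Once $R\subseteq\sigma$ is established, I would show that all edges of $G$ sharing a fixed $G_i$-projection lie in one $R$-class: for a $G_i$-edge $e$ and an incident $G_j$-edge $g$ with $j\neq i$ we have $(e,g)\notin R$ because $R\subseteq\sigma$, so (F2) together with the relaxed square property ties $e$ to its copy in the neighbouring $G_i$-layer, and connectedness of $G$ lets me chain layers together. This makes the relation $R_i$ on $E(G_i)$, defined by relating two edges whenever some (equivalently any) of their lifts are $R$-related, well defined, and by construction $R=\Box_{i\in I}R_i$. Lemma~\ref{lem:rel_prod} then gives that each $R_i$ is an RSP-relation, and finest-ness of each $R_i$ follows from minimality of $R$, since a strictly finer RSP-relation on some $E(G_i)$ would, via Lemma~\ref{lem:rel_prod}, yield a strictly finer RSP-relation $\Box_{k\in I}R_k$ on $E(G)$.

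For the converse ``if'' direction I would start from finest relations $R_i$, note that $R:=\Box_{i\in I}R_i$ is an RSP-relation by Lemma~\ref{lem:rel_prod}, and show it is finest by taking an arbitrary RSP-relation $R'\subseteq R$ and proving $R\subseteq R'$. The same-projection argument remains valid because $R'\subseteq R\subseteq\sigma$ keeps distinct factors apart, so all lifts of a fixed $G_i$-edge are $R'$-related; and by (F3) the restriction of $R'$ to a single $G_i$-layer is an RSP-relation on a copy of $G_i$ that is contained in $R_i$, hence equal to $R_i$ because $R_i$ is finest. Combining the within-layer equality with the cross-layer chaining forces every $R$-related pair into $R'$, so $R'=R$ and $R$ is finest. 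This closes both implications.
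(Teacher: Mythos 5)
Your proposal is correct and follows essentially the same route as the paper: the structural facts (F1)--(F3) are exactly the paper's two opening observations about squares in Cartesian products, and the definition of the factor relations $R_i$, the appeal to Lemma~\ref{lem:rel_prod}, and the minimality arguments in both directions match the published proof. The only cosmetic difference is that you package the key step as ``$R\wedge\sigma$ is an RSP-relation, hence $R\subseteq\sigma$ by minimality,'' whereas the paper derives directly that opposite edges of any cross-factor square are related under every RSP-relation; the underlying case analysis is the same.
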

\begin{proof}
  First, observe the following: Let $R'$ be an arbitrary RSP-relation on
  $G$ and $[x,y], [y,z] \in E(G)$ incident edges that lie in the same layer
  of $G$, i.e. $p_j([x,y])\in E(G_j)$ and $p_j([y,z])\in E(G_j)$ for some
  $j\in I$. Moreover, let $[x,y]$ and $[y,z]$ be in different equivalence
  classes of $R'$. Since $R'$ is an RSP-relation, they lie on a four cycle
  $x-y-z-w$ with opposite edges in the same equivalence class. By the
  definition of the Cartesian product $w$ is also in the same layer as
  $x,y,z$, that is $w \in V(G_j^y)$. This shows that $R'$ limited to
  subgraph $G_j^x$ is also an RSP-relation.

  Let now $[x,y], [w,z] \in E(G)$ be such edges that lie on a four cycle
  $x-y-z-w$ with $j\in I$ such that $p_j([x,y])=p_j([w,z]) \in E(G_j)$. Assume
  that $[x,y]$ and $[w,z]$ do not lie in the same equivalence class of
  $R'$.  Then at least one of the pairs $[x,y], [x,w]$ or $[w,z], [x,w]$ do
  not lie in the same equivalence class of $R'$. Without loss of generality
  let $[x,y]$ and $[x,w]$ lie in different equivalence classes of $R'$. By
  the definition of the Cartesian product, $x-y-z-w$ is the only four cycle
  that contains $[x,y]$ and $[x,w]$. Since $R'$ is an RSP-relation, $[x,y]$
  and $[w,z]$ lie in the same equivalence class. By connectedness of $G$,
  all layers are connected. Therefore, all edges $\{[a,b]\in E(G): \,
  p_j([a,b])=p_j([x,y])\}$ are in the same equivalence class.

  Assume now that $R$ is a finest RSP-relation on $G$. We define relation
  $R_j$ on $E(G_j)$ for every $j\in I$ by $(e,f)\in R_j$ for $e,f\in E(G_j)$ if
  $p_j(e')=e$, $p_j(f')=f$ for some $e',f'\in E(G)$ and $(e',f')\in R$. By above
  arguments, this is an RSP-relation on $G_j$. Notice that $R$ corresponds
  to $\Box_{i\in I} R_i$ with possibly some joint equivalence classes, that
  emerge from different layers of $\Box_{i\in I} G_i$. Since $R$ is a
  finest RSP-relation, $R=\Box_{i\in I} R_i$. If $R_j$ is not a finest
  RSP-relation on $G_j$ for some $j\in I$, then the product of a finer
  relation on $G_j$ with $\Box_{i\in I\backslash \{ j \} } R_i$ is a finer
  relation as $R$, a contradiction.

  To see the converse, let $R=\Box_{i\in I} R_i$, where $R_i$ is a finest
  RSP-relation on $G_i$. If $Q$ is a finest relation on $G$, that is finer
  than $R$, by above arguments, $Q=\Box_{i\in I} Q_i$, where $Q_i$ is 
  finer or equal than $R_i$ for every $i\in I$. Thus $Q=R$.
\end{proof}

Lemma~\ref{lem:finestCartRel} implies not only that $R=\Box_{i\in I}R_i$ is
finest RSP-relation on $E(G)=E(\Box_{i\in I}G_i)$ if $R_i$ is finest
RSP-relation on $E(G_i)$, but also that any (finest) RSP-relation on a
Cartesian product graph must reflect the layer w.r.t. its (prime)
factorization.  However, this is not true for $\circledast=\boxtimes$, as
an example take $K_6\cong K_3\boxtimes K_2$ with the relation defined in
Example~\ref{exmpl:anotherKM}.

Following \cite{HOS14:EquiParty}, we introduce vertex partitions associated
with an equivalence relation $R$ on $E(G)$. In particular, we define for an
equivalence class $\vp\eqcl R$ the partitions 
\begin{equation*}
\mc P^R_{\varphi}:=\left\{V(G^x_{\varphi})\mid x\in V(G)\right\} 
\textrm{\ and\ }
\mc P^R_{\vpo}:=\left\{V(\Go^x)\mid x\in V(G)\right\}. 
\end{equation*} 

Graham and Winkler showed in \cite{GrahamWinkler:84} that the
Djokovi\'{c}-Winkler relation, or more precisely, the equivalence relation
$R=\theta^*$ on $E(G)$ induces a canonical isometric embedding of a graph
$G$ into a Cartesian product $\Box_{\varphi \eqcl R} G_{\varphi} /\mc
P_{\vpo}^{R}$.  Moreover, Feder \cite{Feder92} showed that if we choose
$R=(\theta \cup \tau)^*$ then $G \cong \Box_{\varphi \eqcl R} G_{\varphi}
/\mc P_{\vpo}^{R}$ and thus, $R$ coincides with the product relation
$\sigma$.

In \cite{OstermeierL:14}, we demonstrated that if $R$ is an RSP-relation
then
\begin{equation}
  G/\mc{P}^R\cong\Box_{\vp \sqsubseteq R} 
  G_{\varphi}/\mc{P}^R_{\overline{\varphi}},
  \label{eq:quotprod}
\end{equation}
where $\mc{P}^R$ denotes the common refinement of the partitions
$\mathcal{P}^R_{\vpo}$, $\varphi\eqcl R$, i.e.,
\begin{equation*}\label{eq:mcPR}
\mc{P}^R :=\left\{\bigcap_{\vp \eqcl R } 
  V( G_{\vpo}^x)\mid x\in V(G)\right\}\,,
\end{equation*}
which is again a partition of $V(G)$.
\begin{lemma}
  For $i\in I$ let $G_i$ be connected graphs and let $R_i$ be an
  RSP-relation on the edge set $E(G_i)$. Moreover, let
  $R:=\circledast_{i\in I} R_i$. It holds that:
  \begin{itemize}
  \item[($\Box$)] If  $G=\Box_{i\in I}G_i$ then
    $G/\mc P^R=\Box_{i\in I}G_i/\mc P^{R_i}$.
  \item[($\boxtimes$)]  If $G=\boxtimes_{i\in I}G_i$ then
    $G/\mc P^R=\mathcal L K_1$.
  \end{itemize}
  \label{lem:prod-quot}
\end{lemma}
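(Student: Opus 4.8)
The plan is to compute the partition $\mc{P}^R$ explicitly in both cases by determining the connected components of the spanning subgraphs $\Go$, and then to read off the quotient graph. Throughout I use that, by Definition~\ref{def:rel_prod}, the classes of $R=\circledast_{i\in I}R_i$ are indexed by a nonempty $J\subseteq I$ (the common value of $I_e$ on the class) together with a class $\varphi_i\eqcl R_i$ for each $i\in J$; for $\circledast=\Box$ one always has $J=\{i\}$.

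\textbf{Case $(\Box)$.} First I would describe $\Go$ for a class $\varphi$ of $R$ indexed by $(i,\varphi_i)$. An edge of $G=\BOX_{i\in I}G_i$ lies in $\varphi$ exactly when it is a direction-$i$ edge whose $i$-th projection lies in $\varphi_i$; deleting precisely these edges leaves the direction-$j$ edges for $j\neq i$ together with the direction-$i$ edges avoiding $\varphi_i$, so that
\[
  \Go=(G_i)_{\overline{\varphi_i}}\BOX\BOX_{j\neq i}G_j ,
\]
where $(G_i)_{\overline{\varphi_i}}$ denotes $G_i$ with the edges of $\varphi_i$ deleted. Since the components of a Cartesian product are products of the components of the factors and each $G_j$ is connected, the component of $x$ is $V(\Go^x)=V\big((G_i)_{\overline{\varphi_i}}^{x_i}\big)\times\prod_{j\neq i}V(G_j)$. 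Intersecting over all classes of $R$, and noting that for fixed $i$ the coordinates $j\neq i$ stay unconstrained, the block of $\mc{P}^R$ through $x$ equals $\prod_{i\in I}B_i$ with $B_i=\bigcap_{\varphi_i\eqcl R_i}V\big((G_i)_{\overline{\varphi_i}}^{x_i}\big)$, which is exactly the block of $\mc{P}^{R_i}$ containing $x_i$. Thus $\mc{P}^R$ is the product of the $\mc{P}^{R_i}$ and there is a canonical bijection $\mc{P}^R\to\prod_{i\in I}\mc{P}^{R_i}$. It then remains to check that this bijection preserves adjacency: by Cartesian adjacency in $G$, two product-blocks are adjacent in $G/\mc{P}^R$ iff they agree in every coordinate but one, say $k$, whose blocks are adjacent in $G_k/\mc{P}^{R_k}$, which is precisely the adjacency in $\BOX_{i\in I}G_i/\mc{P}^{R_i}$.

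\textbf{Case $(\boxtimes)$.} Here the whole statement reduces to showing that $\Go$ is connected for every class $\varphi$ of $R$: then each $V(\Go^x)=V(G)$, so $\mc{P}^R=\{V(G)\}$ is the one-block partition and $G/\mc{P}^R$ is the single vertex $V(G)$ with a loop, i.e.\ $\mc{L}K_1$. Write $\varphi$ with index $(J,(\varphi_i)_{i\in J})$, so every deleted edge $e$ has $I_e=J$. If $|J|\geq 2$, then all Cartesian edges (those with $|I_e|=1$) survive, and as $\BOX_{i\in I}G_i$ is a connected spanning subgraph, $\Go$ is connected. The delicate subcase is $J=\{k\}$, where the deleted edges are the direction-$k$ edges with $k$-projection in $\varphi_k$. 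All direction-$j$ edges with $j\neq k$ survive, so it suffices to bridge the endpoints of a single deleted edge, with $k$-projection $[a,a']\in\varphi_k$. Using a second coordinate $j\neq k$ (available since $|I|\geq 2$) and a neighbour $c$ of the $j$-th coordinate in the connected graph $G_j$, I would route around the deleted edge by first taking the diagonal edge that simultaneously changes coordinates $j$ and $k$ and then a direction-$j$ Cartesian edge; both edges satisfy $I_e\neq\{k\}$ and hence survive. Replacing every forbidden direction-$k$ step by such a detour connects any two vertices, so $\Go$ is connected.

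I expect the single-coordinate subcase $J=\{k\}$ of $(\boxtimes)$ to be the main obstacle: the pure Cartesian skeleton no longer bridges the deleted edges, and one must genuinely use the diagonal edges of the strong product. A minor point to settle is the degenerate situation $|I|=1$ or a trivial factor, where no detour exists; statement $(\boxtimes)$ is meant for at least two nontrivial factors, which I would state explicitly.
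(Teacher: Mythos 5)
Your proof is correct, but part $(\Box)$ takes a genuinely different route from the paper. The paper's argument invokes the product decomposition $G/\mc{P}^R\cong\Box_{\vp \eqcl R} G_{\varphi}/\mc{P}^R_{\overline{\varphi}}$ of Equation~\eqref{eq:quotprod} (a result imported from \cite{OstermeierL:14}) on both $G$ and each factor $G_i$, and then spends most of its length constructing an explicit isomorphism between the corresponding quotient factors $G_{\psi^i_j}/\mc P^R_{\overline{\psi^i_j}}$ and ${G_i}_{\varphi^i_j}/\mc P^{R_i}_{\overline{\varphi^i_j}}$. You instead compute $\mc{P}^R$ from scratch: the observation that $\Go=(G_i)_{\overline{\varphi_i}}\,\Box\,\BOX_{j\neq i}G_j$ for the class indexed by $(i,\varphi_i)$ makes each $V(\Go^x)$ a cylinder set, the intersection over all classes is then the product of the blocks of the $\mc{P}^{R_i}$, and the quotient of a Cartesian product by a product partition is the Cartesian product of the quotients. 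This is shorter and independent of Equation~\eqref{eq:quotprod}; its only cost is that you must verify the adjacency correspondence yourself, where ``agree in every coordinate but one'' should be read as ``in all but at most one'' to account for loops arising when a block of some $\mc{P}^{R_k}$ contains an edge of $G_k$. For part $(\boxtimes)$ your argument coincides with the paper's: reduce to connectedness of $\Go$ for every class, use the surviving Cartesian skeleton when $|I_\varphi|\geq 2$, and for $I_\varphi=\{k\}$ replace each deleted direction-$k$ edge by a diagonal step followed by a Cartesian step in a second coordinate --- the paper's alternating walk $W_{x,y'}$ is exactly this detour iterated, and it shares your implicit requirement that a second nontrivial factor be available.
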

  
\begin{proof}
  $(\Box)$ 
  By construction, $\psi$ is an equivalence class of $R$ if and only if
  there exists an $i\in I$ such that $p_i(e)\in E(G_i)$ and there exists
  $\varphi\in R_i$ with $p_i(e)\in\varphi$ for all $e\in
  \psi$. Hence, there exists a bijection $R=\circledast_{i\in I}
  R_i\rightarrow\dot{\bigcup}_{i\in I} R_i$. For $i\in I$ let
  $\varphi^i_1,\ldots,\varphi^i_{n_i}$ be the equivalence classes of $R_i$.
  Moreover, for $i\in I$ and $1\leq j\leq n_i$ let $\psi^i_j$ be the
  equivalence class of $R$ such that $I_e=\{i\}$ and $p_i(e)\in\varphi^i_j$
   for all $e\in\psi^i_j$. Thus, with Equation~\eqref{eq:quotprod}, we
  obtain $G/\mc P^R=\Box_{\psi\eqcl R}G_\psi/\mc{P}^R_{\overline\psi}=
  \Box_{i\in I}(\Box_{j=1}^{n_i}G_{\psi^i_j}/\mc{P}^R_{\overline{\psi^i_j}})$.
  Furthermore, due to Equation~\eqref{eq:quotprod}, we have $\Box_{i\in I}
  G_i/\mc P^{R_i}=\Box_{i\in I}(\Box_{j=1}^{n_i}
  {G_i}_{\varphi^i_j}/\mc P^{R_i}_{\overline{\varphi^i_j}})$.
      
  Hence, we need to show $G_{\psi^i_j}/\mc P^R_{\overline{\psi^i_j}}\cong
  {G_i}_{\varphi^i_j}/\mc P^{R_i}_{\overline{\varphi^i_j}}$ for all
  $i\in I$ and $1\leq j\leq n_i$, to prove the assertion. Therefore, we
  show that $G_{\overline{\psi_j^i}}(x)\mapsto
  {G_i}_{\overline{\varphi_j^i}}(p_i(x))$ for all $x\in V(G)$ defines an
  isomorphism $G_{\psi^i_j}/\mc P^R_{\overline{\psi^i_j}}\cong
  {G_i}_{\varphi^i_j}/\mc P^{R_i}_{\overline{\varphi^i_j}}$. If
  $G_{\overline{\psi_j^i}}(x)=G_{\overline{\psi_j^i}}(y)$, there exists a
  path $P_{x,y}:=(e_1,\ldots,e_k)$ from $x$ to $y$ in $G$, such that
  $e_l\notin\psi^i_j$ for $1\leq l\leq k$. Then
  $p_i(P_x,y)=(p_1(e_1),\ldots,p_i(e_k))$ is a walk from $p_i(x)$ to
  $p_i(y)$ in $G_i$ and by construction, it holds that
  $p_i(e_l)\notin\varphi^i_j$ for $1\leq l\leq k$, i.e.,
  ${G_i}_{\overline{\varphi_j^i}}(p_i(x))=
  {G_i}_{\overline{\varphi_j^i}}(p_i(y))$.  Thus, this mapping is well
  defined. Moreover, by the projection properties of a Cartesian product
  into its factors, this mapping is surjective. Now, suppose
  ${G_i}_{\overline{\varphi_j^i}}(p_i(x))=
  {G_i}_{\overline{\varphi_j^i}}(p_i(y))$,
  i.e., there exists a path $P_{p_i(x),p_i(y)}:=(e_1,\ldots,e_k)$ from
  $p_i(x)$ to $p_i(y)$ in $G_i$ such that $e_l\notin\varphi_j^i$ for $1\leq
  l\leq k$. Let $w\in V(G)$ s.t. $p_i(w)=p_i(y)$ and $p_r(w)=p_r(x)$ for
  all $r\in I$, $r\neq i$. Hence, $w\in V(G^x_i)$. Thus, there exists a
  path $P'_{x,w}=(e'_1,\ldots,e'_k)$ in $G$ with $p_i(e_l')=e_l$ which
  implies $e'_l\notin\psi^i_j$ for $1\leq l\leq k$ and thus
  $G_{\overline{\psi_j^i}}(x)=G_{\overline{\psi_j^i}}(w)$. Furthermore, by
  the properties of the Cartesian product, there exists a path
  $P''_{w,y}=(e''_1,\ldots,e''_s)$ from $w$ to $y$ in $G$ such that
  $|p_i(e''_l)|=1$ for $1\leq l\leq s$, which implies $I_{e''_l}\neq
  \{i\}$ and consequently $e''_l\notin\psi^i_j$ for $1\leq l\leq k$. Thus,
  $G_{\overline{\psi_j^i}}(y)=G_{\overline{\psi_j^i}}(w)=
  G_{\overline{\psi_j^i}}(x)$,
  that is, this mapping is injective and therefore bijective. It remains to
  show that $[G_{\overline{\psi_j^i}}(x),G_{\overline{\psi_j^i}}(y)]$ is an
  edge in $G_{\psi^i_j}/\mc P^R_{\overline{\psi^i_j}}$ if and only if
  $[{G_i}_{\overline{\varphi_j^i}}(p_i(x)),
  {G_i}_{\overline{\varphi_j^i}}(p_i(y))]$
  is an edge in ${G_i}_{\varphi^i_j}/\mc P^{R_i}_{\overline{\varphi^i_j}}$.
  By definition,
  $[G_{\overline{\psi_j^i}}(x),G_{\overline{\psi_j^i}}(y)]$ is an edge in
  $G_{\psi^i_j}/\mc P^R_{\overline{\psi^i_j}}$ if and only if there exists
  $x'\in V(G_{\overline{\psi_j^i}}(x)), y'\in
  V(G_{\overline{\psi_j^i}}(x))$ s.t. $[x',y']\in\psi^i_j$, which, by the
  preceding and by construction, is equivalent to $p_i(x')\in
  V({G_i}_{\overline{\varphi_j^i}}(p_i(x))), p_i(y')\in
  V({G_i}_{\overline{\varphi_j^i}}(p_i(y)))$ and $[p_i(x'),p_i(y')]\in
  \varphi^i_j$, from what the assertion follows.

  \bigskip $(\boxtimes)$ 
  To prove the assertion, we have to show that the spanning subgraph
  $G_{\overline\varphi}$ is connected for all $\varphi\sqsubseteq R$. For
  each $\varphi\sqsubseteq R$ it holds that $I_e=I_f$ for all
  $e,f\in\varphi$. We set $I_\varphi:=I_e$ for some $e\in\varphi.$
  Moreover, define $\Phi:=\{\psi\sqsubseteq R\mid I_\psi=I_\varphi\}$ Then
  for $\alpha:=\bigcup_{\psi\in\Phi}\psi$, $G_{\overline\alpha}$ is a
  spanning subgraph of $G_{\overline\varphi}$. Therefore, it suffices to
  show that $G_{\overline\alpha}$ is connected. To be more precise, we have
  to show that for all $x,y\in V(G)$, there exists a walk $W_{x,y}$ from
  $x$ to $y$ in $G$ such that for all $e\in E(W_{x,y})$ it holds that $I_e\neq
  I_\varphi$.

  First, assume $|I_\varphi|>1$.  Since $\Box_{i\in I} G_i$ is a connected
  spanning subgraph of $\boxtimes_{i\in I} G_i$, there exists a walk
  $W_{x,y}$ from $x$ to $y$ in $\Box_{i\in I} G_i$. Then for all $e\in
  E(W_{x,y})$ it holds that $|I_e|=1$ and thus, $I_e\neq I_\varphi$.

  Now, let $|I_\varphi|=1$, i.e., $I_\varphi=\{j\}$ for some $j\in I$.  If
  $p_j(x)=p_j(y)$, then $y\in V((\Box_{i\in I\setminus\{j\}} G_i)^x)$.  In
  this case, there exists a walk $W_{x,y}$ from $x$ to $y$ in 
  $(\Box_{i\in I\setminus\{j\}} G_i)^x$ that has the desired properties. If
  $p_j(x)\neq p_j(y)$, let $y'\in V(G)$ such that $p_i(y')=p_i(x)$ for all
  $i\neq j$ and $p_j(y)=p_j(y')$. Then, as in the previous case, there
  exists a walk $W_{y,y'}$ from $y$ to $y'$ in $(\Box_{i\in
    I\setminus\{j\}} G_i)^y$ and hence $I_e\neq\{j\}$ for all $e\in
  E(W_{y,y'})$. By choice of $y'$, it holds that $y'\in V(G_j^x)$. Let
  $P_{x,y'}:=(x=x_0,x_1,\ldots,x_k=y')$ be a walk from $x$ to $y'$ that is
  entirely contained in $G_i^x$. Moreover, for arbitrary $i\in I$ with
  $i\neq j$ let $z\in V(G_i^x)$ such that $[p_i(x),p_i(z)]\in E(G_i)$ and
  let $w\in V(G_j^z)$ such that $p_j(w)=p_j(z)$. Then there exists a walk
  $P_{z,w}:=(z=z_0,z_1,\ldots,z_k=w)$ from $z$ to $w$ in $G_j^z$ such that
  $p_j(x_r)=p_j(z_r)$ for all $0\leq r\leq k$. By definition of
  $\boxtimes$,
  $W_{x,y'}:=(x_0,z_1,x_1,z_2,x_2,z_3,\ldots,x_{k-1},z_k=w,x_k=y')$ is a
  walk from $x$ to $y'$ in $G$ and for the edges $e\in E(W_{x,y'})$ it holds that
  $I_e=\{i,j\}\neq \{j\}=I_\varphi$ if $e$ is of the form $[x_i,z_{i+1}]$,
  $0\leq i\leq k-1$ and $I_e=\{i\}\neq\{j\}=I_\varphi$ if $e$ is of the
  form $[x_i,z_i]$, $0\leq i\leq k$. Hence, $W_{x,y}=W_{x,y'}\cup W_{y',y}$
  is a walk from $x$ to $y$ that has the desired properties.
\end{proof}

In contrast to the Cartesian and strong products, no general statement can
be obtained for the direct product $G=\times_{i\in I}G_i$ of graphs $G_i$
since the structure of direct products strongly depends on additional
properties such as bipartiteness.

\section{RSP-Relations on Complete and Complete Bipartite Graphs}

Since complete graphs and complete bipartite graphs contain large numbers
of superimposed $K_{2,3}$ subgraphs they are responsible for much of the
difficulties in finding finest RSP-relations. We therefore study their
RSP-relations in some detail. 

\begin{lemma}
  Let $V(K_m)=\{0,\ldots,m-1\}$. For $i=1,\ldots,l:=\lfloor
  \frac{m}{2}\rfloor$ define the set \[\varphi_i:=\{[x,(x+i)\mod m]\mid
  x\in\{0,\ldots m-1\}\}\subseteq E(K_m).\] Then the sets
  $\varphi_1,\ldots,\varphi_l$ define an RSP-relation $R$ on $E(K_m)$ with
  equivalence classes $\varphi_1,\ldots,\varphi_l$. If $m\neq 4$, then $R$
  is a finest RSP-relation.
\label{lem:Km}
\end{lemma}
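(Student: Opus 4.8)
The plan is to check in turn that $\{\varphi_1,\dots,\varphi_l\}$ partitions $E(K_m)$, that $R$ has the relaxed square property, and that $R$ admits no proper RSP-refinement once $m\neq4$. Throughout I would work with vertices in $\mathbb{Z}_m$ and record that an edge $[x,y]$ lies in $\varphi_i$ exactly when $y-x\equiv\pm i\pmod m$, i.e. $i=\min(t,m-t)$ for $t=y-x$; thus every edge has a unique class. A one-line count then confirms a genuine partition: for $i<m/2$ the assignment $x\mapsto[x,x+i]$ is injective (since $2i\not\equiv0$), giving $|\varphi_i|=m$, while for even $m$ the class $\varphi_{m/2}$ is a perfect matching of size $m/2$, and these sizes sum to $\binom{m}{2}$.

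For the relaxed square property I would argue purely by translation. Given adjacent edges $e=[a,a+s]\in\varphi_i$ and $f=[a,a+t]\in\varphi_j$ with $i\neq j$, set $w:=a+s+t$ and consider the quadrilateral $a-(a+s)-w-(a+t)$. Its two edges opposite to $e$ and $f$ have differences $s$ and $t$, so they lie in $\varphi_i$ and $\varphi_j$ respectively; since $K_m$ is complete, all four edges exist automatically. It remains to see that the four vertices are distinct, which is where $i\neq j$ enters: $s\not\equiv0$ and $t\not\equiv0$, and because $1\le i,j\le l$ with $i\neq j$ force $i+j\not\equiv0\pmod m$, this rules out $s+t\equiv0$ and $s\equiv\pm t$, whence $w\notin\{a,a+s,a+t\}$ and $a+s\neq a+t$. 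Hence $e,f$ span a square with opposite edges in equal classes, as required.

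The real work is the finest claim. I would take an arbitrary RSP-relation $Q$ finer than $R$ (i.e. $Q\subseteq R$) and show $Q=R$; as each $Q$-class is contained in some $\varphi_i$, it is enough to prove that every $\varphi_i$ is a single $Q$-class. The case $i=m/2$ is immediate from Lemma~\ref{lem:classes}: $\varphi_{m/2}$ is a perfect matching, so each vertex meets only one of its edges, and since every vertex must meet every $Q$-class, $\varphi_{m/2}$ cannot be split. For $i<m/2$ the crucial point is a uniqueness statement: for any $j\neq i$ the edges $[a,a+i]$ and $[a,a+j]$ span, within $R$, exactly one square whose opposite edges lie in matching $R$-classes, namely the one through $w=a+i+j$. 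A short check shows the only rival completions would require $2i\equiv0$ or $2(j-i)\equiv0\pmod m$, neither of which is possible for distinct $i,j\in\{1,\dots,l\}$ (here $|j-i|<m/2$).

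Because $Q\subseteq R$, any square witnessing the relaxed square property for the $Q$-distinct edges $[a,a+i],[a,a+j]$ must be this unique square, and so it forces the translate $[a+j,a+i+j]$ into the $Q$-class of $[a,a+i]$; the same argument with $-j$ gives the translate by $-j$. Reading these as ``translation by $\pm j$ preserves the $Q$-class inside $\varphi_i$'', I would conclude that the subgroup of $\mathbb{Z}_m$ generated by $\{1,\dots,l\}\setminus\{i\}$ fixes this $Q$-class. That subgroup is all of $\mathbb{Z}_m$ whenever $i\neq1$ (then $1$ is an available generator) and also when $i=1$ and $m\geq5$ (where $\{2\}$ generates $\mathbb{Z}_5$ and $\{2,3\}$ generates $\mathbb{Z}_m$ for $m\geq6$), so $\varphi_i$ collapses to one $Q$-class and $Q=R$. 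I expect the main obstacle to be exactly this generation step, which also pinpoints the exception: for $m=4,\ i=1$ the sole available shift is by $2$, generating only $\{0,2\}\subsetneq\mathbb{Z}_4$, and indeed the $4$-cycle $\varphi_1$ then splits into the product relation of $K_2\Box K_2$, so $R$ is genuinely not finest. The residual tiny cases $m\leq3$ I would settle directly, noting that $K_3$ contains no square, so any two adjacent (hence all) edges must share a class.
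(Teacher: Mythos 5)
Your proposal is correct, and while the partition and relaxed-square-property steps match the paper's in substance (your single translation square $a-(a+s)-(a+s+t)-(a+t)$ with one distinctness check replaces the paper's four-way case analysis and is tidier), the ``finest'' part takes a genuinely different route. The paper restricts a hypothetical split of $\varphi_k$ to the $\varphi_k$-layers: via Lemma~\ref{lem:delete_classes} each layer is a cycle that would have to carry a two-class RSP-relation of its own, forcing the layer to be a $4$-cycle and hence $4k=m$; that residual case is then eliminated by exhibiting two specific squares whose uniqueness forces adjacent cycle edges into the same class, contradicting Lemma~\ref{lem:classes}. You instead prove that for $i<m/2$ and any $j\neq i$ the pair $[a,a+i]$, $[a,a+j]$ spans a \emph{unique} square with opposite edges in matching $R$-classes, deduce that any refinement $Q\subseteq R$ must be invariant under translation by every $j\neq i$ on $\varphi_i$, and finish with the observation that $\{1,\ldots,l\}\setminus\{i\}$ generates $\mathbb{Z}_m$ except when $m=4$ and $i=1$. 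Both arguments ultimately rest on the same square-uniqueness phenomenon, but yours localizes the exceptional case as a failure of subgroup generation rather than as the condition $4k=m$, avoids the slightly delicate application of Lemma~\ref{lem:delete_classes} to the disconnected subgraph $G_{\varphi_k}$, and explains structurally why $m=4$ fails (only the shift by $2$ is available, generating $\{0,2\}\subsetneq\mathbb{Z}_4$). One small point to record when writing this up: in the rival-completion check the case $2j\equiv 0$ (i.e.\ $j=m/2$) also occurs, but it merely collapses the two candidates $a+i\pm j$ into one and produces no additional square, so the uniqueness claim survives.
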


\begin{proof}
  At first we prove that $R$ is an equivalence relation. That is, we have
  to show that $\varphi_i\cap\varphi_j=\emptyset$ for all $i\neq j$ and
  $E(K_m)=\bigcup_{i=1}^l \varphi_i$. For contraposition suppose,
  $\varphi_i\cap\varphi_j\neq \emptyset$ for some $i\neq j$. That is, there
  exists $x,y\in V(K_m)=\{0\ldots,m-1\}$ such that $[x,(x+i)\mod
  m]=[y,(y+j)\mod m]$. Notice, $x+i< 2m$ as well as $y+j< 2m$.
  Thus, we have $x+i=p\cdot m+(x+i)\mod m$ and $y+j=q\cdot m+(y+j)\mod m$
  with $p,q\in\{0,1\}$. First assume $x=y$. Hence, $(x+i)\mod m=(x+j)\mod
  m$ and we obtain $|i-j|=|p-q|\cdot m$ with $|q-p|\in\{0,1\}$. If
  $|p-q|=0$ it follows $i=j$. Therefore suppose, $|p-q|=1$. This implies
  $|i-j|=m\geq 2l$ and moreover,  $|i-j|<l$ since
  $i,j\in\{1,\ldots,l\}$, a contradiction.

  Now, assume $x\neq y$. Then it must hold $x=(y+j)\mod m$ and $y=(x+i)\mod
  m$ if $[x,(x+i)\mod m]=[y,(y+j)\mod m]$. Hence, with our considerations
  above, we get $i+j=(p+q)\cdot m$ with $p+q\in\{0,1,2\}$.  from
  $i,j\in\{1,\ldots,l\}$, we conclude $0<i+j\leq 2l$ which implies in
  particular $p+q>0$. It follows $2l\leq m\leq i+j\leq 2l$, hence $i=j=l$
  which contradicts the choice of $i,j$. Thus,
  $\varphi_i\cap\varphi_j=\emptyset$ for all $i,j\in\{1,\ldots,l\}$
  with $i\neq j$.

  Next, we show $|\bigcup_{i=1}^l \varphi_i| =|E(K_m)|$. Since
  $\varphi_i\subseteq E(K_m)$ for all $i\in\{1,\ldots,l\}$, we then can
  conclude $\bigcup_{i=1}^l \varphi_i=E(K_m)$. First, let $i<\frac{m}{2}$.
  Assume, there exists $x\in \{0,\ldots,m-1\}$ such that $x=(x+i)\mod m$.
  From previous considerations, it follows $i=p\cdot m$ with $p\in\{0,1\}$,
  which contradicts $0<1\leq i\leq l<m$. Now suppose, there are
  $x,y\in\{0,\ldots,m-1\}$ such that $[x,(x+i)\mod m]=[y,(y+i)\mod m]$. if
  $x\neq y$, it follows $x= (y+i)\mod m$ and $y=(x+i)\mod m$. As before, we
  conclude $2i=(p+q)\cdot m$ with $p+q\in\{0,1,2\}$ and since $i>0$, we
  have $p+q>0$. Thus, $m\leq 2i<m$, which is a contradiction. Hence,
  $|\varphi_i|=|\{0,\ldots,m-1\}|=m$ for all $i<\frac{m}{2}$. If
  $i=\frac{m}{2}$, and thus, $m$ is even, we have
  $|\varphi_{\frac{m}{2}}|=\frac{m}{2}$, since for all $x<\frac{m}{2}$
  it holds that $[x,x+\frac{m}{2}]=[x+\frac{m}{2},(x+\frac{m}{2}+\frac{m}{2})\mod
  m]$.  It follows $|\bigcup_{i=1}^l
  \varphi_i|=\sum_{i=1}^l|\varphi_i|=l\cdot m=\frac{(m-1)\cdot m}{2}
  =|E(K_m)|$ if $m$ is odd and $|\bigcup_{i=1}^l
  \varphi_i|=\sum_{i=1}^{l-1}|\varphi_i|+|\varphi_{\frac{m}{2}}|=(l-1)\cdot
  m+\frac{m}{2}=\frac{(m-1)\cdot m}{2} =|E(K_m)|$ if $m$ is even.
  Therefore, $R$ is an equivalence relation on $E(K_m)$.
    
  It remains to show that $R$ has the relaxed square property and there is
  no refinement of $R$ with this property. Therefore, let
  $e=[x,y]\in\varphi_i$ and $f=[x,z]\in\varphi_j$, $i\neq j$. We have to
  show, that there exists a vertex $w\in V(K_m)$ such that
  $[y,w]\in\varphi_j$ and $[z,w]\in\varphi_i$.  $[x,y]\in\varphi_i$ implies
  $y=(x+i)\mod m$ or $x=(y+i)\mod m$ and $[x,z]\in\varphi_j$ implies
  $z=(x+j)\mod m$ or $x=(z+j)\mod m$.  If $y=(x+i)\mod m$ and $z=(x+j)\mod
  m$, we choose $w=(y+j)\mod m$. It is clear, that $w\neq x,y,z$. By
  definition, it holds that $[y,w]\in\varphi_j$.  Moreover, by simple
  calculation we get with the preceding $w=(z+i)\mod m$ and hence
  $[z,w]\in\varphi_i$. If $y=(x+i)\mod m$ and $x=(z+j)\mod m$, we choose
  $w=(z+i)\mod m$, then $w\neq x,y,z$. Hence, $[w,z]\in\varphi_i$. In this
  case we get $y=(w+j)\mod m$ that is $[y,w]\in\varphi_j$. If $x=(y+i)\mod
  m$ and $z=(x+j)\mod m$, we choose $w=(y+j)\mod m$. Again  $w\neq
  x,y,z$. and by definition, $[y,w]\in\varphi_j$. Here, we obtain
  $z=(w+i)\mod m$ and hence $[z,w]\in\varphi_i$. If $x=(y+i)\mod m$ and
  $x=(z+j)\mod m$, we choose $w$ such that $z=(w+i)\mod m$, that is
  $[z,w]\in\varphi_i$. In this case we have $w\neq x,y,z$ and moreover,
  $y=(w+j)\mod m$ and hence $[y,w]\in\varphi_j$. That is, $R$ has the
  relaxed square property.

  We show now, that no equivalence class $\varphi$ of $R$ can be split
  into two classes $\varphi_i=\psi_{i_1}\cup\psi_{i_2}$, such that the
  equivalence relation, $S$ that has classes
  $\varphi_1,\ldots,\varphi_{i-1},\psi_{i_1},\psi_{i_2},\varphi_{i+1},
  \ldots,\varphi_l$ is an RSP-relation. Therefore, notice that each vertex
  $x\in V(K_m)$ is incident to exactly two $\varphi_i$ edges for all
  $i<\frac{m}{2}$, namely $[x,(x+i) \mod m]$ and $[x,(x-i) \mod m]$, thus
  the layer are all cycles for $i<\frac{m}{2}$. Moreover, each vertex $x\in
  V(K_m)$ is incident to exactly one
  $\varphi_{\frac{m}{2}}$-edge. Recalling Lemma~\ref{lem:classes},
  $\varphi_{\frac{m}{2}}$ cannot be split. For $k<\frac{m}{2}$ let $C$
  the $\varphi_k$-layer containing vertex $0$. It has edges
  $[0,k],[k,2k],[2k,3k\mod m],\ldots,[(q-1)\cdot k,0]$ with $q\cdot k\mod
  m=0$. By Lemma~\ref{lem:delete_classes}, any edge in $C$ must be
  contained in a square, hence $C$ itself must be a square and thus has
  edges $[0,k],[k,2k],[2k,3k],[3k,0]$ with $4k=m$, since $k<\frac{m}{2}$
  and $k>1$ since $m\neq 4$. Because $S$ is an RSP-relation, it holds that
  $([0,k],[2k,3k]),([k,2k],[3k,0])\in S$ and
  $([0,k],[k,2k]),([2k,3k],[3k,0])\notin S$ by Lemma~\ref{lem:classes}.
  Consider the edges $[0,k]\in\varphi_k$ and
  $[0,1]\in\varphi_1\neq\varphi_k$, hence they are in different
  $S$-classes.  Vertex $k\in V(K_m)$ is incident to exactly two
  $\varphi_1$-edges, namely $[k,k+1]$ and $[k,k-1]$. Since
  $[1,k-1]\in\varphi_{k-2}\neq\varphi_k$, the only possible square spanned
  by $[0,k]$ and $[0,1]$ with opposite edges in the same $S$-class is
  $0-1-(k+1)-k$ with $[0,k],[k,k+1]\in S$. Now, consider edges
  $[k,2k]\in\varphi_k$ and $[1,k]\in\varphi_{k-1}$. Vertex $2k\in V(K_m)$
  is incident to exactly two $\varphi_{k-1}$-edges, namely $[2k,k+1]$ and
  $[2k,3k-1]$. Since $[1,3k-1]\in\varphi_{k+2}\neq\varphi_k$, the only
  possible square spanned by $[k,2k]$ and $[1,k]$ with opposite edges in
  the same $S$-class is $1-k-2k-(k+1)$ with $([1,k+1],[k,2k])\in
  S$. Thus, $([0,k],[k,2k])\in S$, a contradiction. Hence, $R$ is finest
  RSP-relation on $K_m$ for all $m\neq 4$.
  \end{proof}

\begin{coro}
  For all $m>3$ there exists a nontrivial RSP-relation on $E(K_m)$.
\end{coro}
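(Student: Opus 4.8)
The plan is to read the statement off directly from Lemma~\ref{lem:Km}. That lemma produces, for every $m$, an explicit equivalence relation $R$ on $E(K_m)$ whose equivalence classes are precisely $\varphi_1,\ldots,\varphi_l$ with $l=\lfloor m/2\rfloor$, and it already establishes that $R$ has the relaxed square property. Recall that, by the convention fixed in the Preliminaries, an RSP-relation is \emph{nontrivial} exactly when it has at least two equivalence classes. Hence the only thing left to check is that $R$ has at least two classes for every $m>3$.

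First I would note that each $\varphi_i$ is nonempty, which is immediate from the computation inside the proof of Lemma~\ref{lem:Km}: there one has $|\varphi_i|=m$ for $i<m/2$ and $|\varphi_{m/2}|=m/2$, so in particular the classes $\varphi_1$ and $\varphi_2$ are genuine and disjoint whenever both indices $1$ and $2$ lie in $\{1,\ldots,l\}$. Next I would observe that $m>3$ means $m\ge 4$, and therefore $l=\lfloor m/2\rfloor\ge 2$, so both $\varphi_1$ and $\varphi_2$ occur among the classes of $R$. Consequently the relation $R$ supplied by Lemma~\ref{lem:Km} is an RSP-relation with at least two equivalence classes, i.e.\ a nontrivial one, which proves the claim.

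There is essentially no obstacle here: the corollary is an immediate counting consequence of Lemma~\ref{lem:Km}. The only point deserving a moment's attention is the borderline case $m=4$, for which the lemma does \emph{not} assert that $R$ is a \emph{finest} RSP-relation; but the corollary asks only for the \emph{existence} of a nontrivial RSP-relation, and for $m=4$ the lemma still furnishes an RSP-relation with $l=2$ classes, which suffices.
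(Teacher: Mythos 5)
Your proposal is correct and matches the paper's intent exactly: the corollary is stated without proof as an immediate consequence of Lemma~\ref{lem:Km}, relying precisely on the observation that $\lfloor m/2\rfloor\ge 2$ for $m>3$ and that the lemma's RSP-property claim does not require $m\neq 4$. Your explicit handling of the $m=4$ borderline case is a welcome clarification but introduces no new argument.
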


Lemma~\ref{lem:Km} implies that the maximal number of classes of a finest
RSP-relation is at least $\lfloor\frac{m}{2}\rfloor$.  From
Lemma~\ref{lem:classes}, we infer that the maximal number of classes of a
finest RSP-relation on $K_m$ is at most $m-1$, the minimum degree of $K_m$.
In the case of $m=2^q$, this bound is sharp with the construction in
Definition~\ref{def:rel_prod} and since $K_{2^q}= \boxtimes_{i=1}^q K_2$.

To show the large variety of possible finest RSP-relations on complete graphs
we give a further example. 
\begin{example}
For $n\geq 5$ and graph $K_n$, let $G_1$ be the induced subgraph on
vertices $\{ 0, 1 \}$ and $G_2$ the induced subgraph on $\{ 2,\ldots,
n-1\}$. We claim that relation $R$ with two equivalence classes
$\varphi=E(G_1) \cup E(G_2)$ and $\overline{\varphi}$ is a finest RSP
relation. It is easy to check that it is an RSP-relation. Equivalence class
$\varphi$ cannot be split into two equivalence classes since vertex $0$ is
incident with only one edge of $\varphi$. On the other hand, every vertex
in $\{ 2,\ldots, n-1\}$ is incident with exactly two edges in
$\overline{\varphi}$, therefore if $\overline{\varphi}$ can be split into
two equivalence classes edges $[0,2]$ and $[1,2]$ must be in different
equivalence classes. The definition of RSP-relations implies that $[0,2]$
and $[2,3]$ must lie on a common square with opposite edges the same equivalence class.
The only possible candidate is the square $0-2-3-1$, thus $[0,2]$ and
$[1,3]$ must be in the same class. Similarly, $[1,3]$ and $[3,4]$ must lie on a
common square with opposite edges in the same equivalence class. The only possible candidate
is the square $1-3-4-0$, thus $[1,3]$ and $[0,4]$ must be in the same class. Now,
we use the same arguments for edges $[0,4]$ and $[4,2]$ to find out that
$[0,4]$ and $[1,2]$ are in the same class. Since the relation is transitive
$[0,2]$ and $[1,2]$ must be in the same class, a contradiction with the
assumption that $\overline{\varphi}$ can split.
\label{exmpl:anotherKM}
\end{example}


\begin{example}
Consider the complete graph
$K_9=K_3\boxtimes K_3$. Then the construction given in Lemma
\ref{lem:Km} and in Lemma
\ref{lem:rel_prod} define two different RSP-relations $R\not\simeq S$,
for which 
$K_9/\mc P^R\simeq K_9/\mc P^S \simeq \mathcal L K_1$,
by Lemma \ref{lem:prod-quot}.  
Note, $R$ and $S$
have no RSP-relation as common refinement.
\label{ex:k9}
\end{example}

Let us now turn to complete bipartite graphs $K_{m,n}$. W.l.o.g.\ we may 
assume that $m\le n$.
\begin{lemma}
  \label{lem:Kmm}
  For $m=n$ let the vertex set of $K_{m,m}$ be given by
  $V(K_{m,m})=V(K_2)\times V(K_m)$ and $E(K_{m,m})=\{[x,y]\mid x,y\in
  V(K_{m,m})\text{ s.t. } p_1(x)\neq p_1(y)\}$.  Furthermore, let $S$ be an
  RSP-relation on $E(K_m)$.  We define an equivalence relation $R$ on
  $E(K_{m,m})$ as follows: $(e,f)\in R$ if and only if
  \begin{itemize}
  \item[(1)] $|p_2(e)|=|p_2(f)|=1$, or
  \item[(2)] $|p_2(e)|=|p_2(f)|=2$ and $(p_2(e),p_2(f))\in S$.
  \end{itemize}
  Then $R$ has the relaxed square property.
  Moreover, $R$ is a finest RSP-relation on $E(K_{m,m})$
  if and only if $S$ is finest RSP-relation on $E(K_m)$.
\end{lemma}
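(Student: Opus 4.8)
Throughout, write the vertices of $K_{m,m}$ as pairs $(a,v)$ with $a\in V(K_2)=\{0,1\}$ and $v\in V(K_m)$, call an edge \emph{diagonal} when $|p_2(e)|=1$ (these are exactly the edges $d_v:=[(0,v),(1,v)]$, which form a perfect matching $\Delta$) and \emph{transversal} otherwise. Each transversal edge projects under $p_2$ to an edge of $K_m$, and every $\{v,w\}\in E(K_m)$ has precisely the two \emph{lifts} $[(0,v),(1,w)]$ and $[(0,w),(1,v)]$. By construction the $R$-classes are $\Delta$ together with the pullbacks $\hat\sigma:=\{e\mid |p_2(e)|=2,\ p_2(e)\in\sigma\}$ of the $S$-classes $\sigma$; in particular a diagonal and a transversal edge are \emph{never} $R$-related. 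To prove that $R$ has the relaxed square property I would take adjacent $e,f$ with $(e,f)\notin R$ and, using the side-swapping automorphism $(a,v)\mapsto(1-a,v)$ (which fixes $p_2$ on each edge and hence preserves $R$), assume their common vertex lies on side $0$, say $e=[(0,x),(1,u)]$, $f=[(0,x),(1,v)]$ with $u\neq v$. If one of them is diagonal, say $u=x$, the square $(0,x)-(1,x)-(0,v)-(1,v)-(0,x)$ works, its opposite pairs being $(e,d_v)$ (two diagonals) and $(f,[(0,v),(1,x)])$ (two transversals with equal $p_2$). If both are transversal then $(p_2(e),p_2(f))\notin S$, so the RSP-property of $S$ yields a square $x-u-z-v-x$ in $K_m$; taking $(0,z)$ as the fourth vertex lifts it to the required square. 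This settles the first assertion.

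For the equivalence I would argue the two implications separately. If $S$ is not finest, pick a properly finer RSP-relation $S'\subsetneq S$; the relation built from $S'$ by the same recipe is an RSP-relation (first part) and is strictly finer than $R$, so $R$ is not finest. The substance lies in the converse, so let $R'\subsetneq R$ be an arbitrary RSP-relation. Two structural facts drive everything, both immediate from $R'\subseteq R$: first, no $R'$-class meets both $\Delta$ and a transversal edge; second, $\Delta$ cannot split, because each vertex is incident to exactly one diagonal edge, so a partition of $\Delta$ into two $R'$-classes would leave every vertex missing one of them, contradicting Lemma~\ref{lem:classes}. Hence $R'$ agrees with $R$ on $\Delta$, and the strict refinement $R'\subsetneq R$ must already occur among transversal edges.

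Next I would show that the two lifts of any $\{v,w\}\in E(K_m)$ remain $R'$-related. The transversal edge $[(0,v),(1,w)]$ and the diagonal $d_w$ are adjacent at $(1,w)$ and, being of different type, are not $R'$-related, so $R'$ provides a witnessing square; since the edge opposite to $d_w$ must again be diagonal, the fourth vertex is forced to be $(1,v)$, and then the edge opposite to $[(0,v),(1,w)]$ is exactly the second lift $[(0,w),(1,v)]$. Therefore $R'$ descends to a well-defined equivalence relation $S'$ on $E(K_m)$ via $(\epsilon,\eta)\in S'\iff$ the lifts are $R'$-related; clearly $S'\subseteq S$, and by the previous paragraph $S'\subsetneq S$. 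To see that $S'$ is itself an RSP-relation, take adjacent $\epsilon=\{x,u\}$, $\eta=\{x,v\}$ with $(\epsilon,\eta)\notin S'$, lift them to $e=[(0,x),(1,u)]$, $f=[(0,x),(1,v)]$, apply the RSP-property of $R'$ to obtain a square $(0,x)-(1,u)-(0,y)-(1,v)-(0,x)$, and project it by $p_2$. The potentially degenerate cases $y=u$ or $y=v$ are excluded, because either would make a transversal edge $R'$-related to a diagonal one; hence $x,u,y,v$ are distinct and the projection is a genuine square $x-u-y-v-x$ of $K_m$ witnessing $(\epsilon,\eta)$ for $S'$. Thus $S'$ contradicts the finest-ness of $S$, which proves the converse.

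The main obstacle is exactly this converse implication, and within it the verification that the induced relation $S'$ satisfies the relaxed square property — specifically, ruling out the degenerate lifted squares whose projection would collapse. Everything hinges on the single observation that $R'\subseteq R$ keeps diagonal and transversal edges in separate classes: this one fact both pins down the witnessing squares (forcing the two lifts together and excluding the degeneracies) and, through Lemma~\ref{lem:classes}, prevents the matching $\Delta$ from splitting, so that any proper refinement of $R$ must genuinely refine $S$.
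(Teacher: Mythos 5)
Your proof is correct, and for the harder half of the statement (the equivalence of ``finest'') it follows the same skeleton as the paper's: the diagonal matching cannot split by Lemma~\ref{lem:classes}, the two lifts of each $K_m$-edge are forced into a common class by the unique square they span with an adjacent diagonal edge, and hence any proper RSP-refinement of $R$ projects under $p_2$ to a proper RSP-refinement of $S$. Where you genuinely diverge is in the supporting steps. For the first claim the paper observes that $K_{m,m}=K_2\boxtimes K_m$ minus the two $K_m$-layers and invokes Lemma~\ref{lem:rel_prod} together with Lemma~\ref{lem:delete_classes}, whereas you verify the relaxed square property by hand, splitting into the diagonal/transversal cases and lifting a witnessing square of $S$; your version is self-contained and makes the geometry explicit, the paper's is shorter but leans on the product machinery. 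Likewise, to see that the projected relation is an RSP-relation the paper restricts to $E(K_2\times K_m)$, applies Lemma~\ref{lem:delete_classes}, and uses that every square of $K_2\times K_m$ projects to a square of $K_m$; you instead work in $K_{m,m}$ directly and rule out the degenerate lifted squares ($y=u$ or $y=v$) by noting that they would force a diagonal edge into a transversal class. The two arguments are equivalent in substance --- your degeneracy exclusion is exactly what the restriction to the direct product buys the authors for free --- but yours has the advantage of not depending on the earlier product lemmas.
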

\begin{proof}
  Notice, that with our notation we have
  $E(K_{m,m})=E(K_2\boxtimes K_m)\setminus (E(K_m^x)\cup E(K_m^y))$
  with $x,y\in V(K_2)\times V(K_m)$ s.t. $p_1(x)\neq p_1(y)$.
  With Lemma~\ref{lem:rel_prod} and Lemma~\ref{lem:delete_classes},
  it follows that $R$ is an RSP-relation on $E(K_{m,m})$.
  It is clear that any refinement of $S$ leads to a refinement of $R$.
  Thus we just have to show the converse, i.e., that $R$ is a finest
  RSP-relation if $S$ is finest RSP-relation.
  Let $\varphi$ denote the equivalence class defined by condition (1),
  i.e., $\varphi=\{e\in E(K_{m,m})\mid |p_2(e)|=1\}$.
  By construction, each vertex is adjacent to exactly one $\varphi$-edge,
  therefore, $\varphi$ cannot be split by Lemma~\ref{lem:classes}.
  Moreover, two adjacent edges $e,f$ with $e\in\varphi$
  and $f\in\psi\neq\varphi\eqcl R$ span exactly one square with
  opposite edges in the same equivalence classes,
  namely the square with $p_2(f)=p_2(f')$, where $f'$
  is opposite edge of $f$.
  Therefore, $p_2(e)=p_2(e')$ implies $(e,e')\in Q$ for any refinement
  $Q$ of $R$ with relaxed square property.
    Furthermore, with our notations, any refinement $Q$ of $R$
  leads also to a refinement $Q_{|E(K_2\times K_m)}$ of
  $R_{|E(K_2\times K_m)}$, the restrictions of $Q$ and $R$
  to $E(K_2\times K_m)\subseteq E(K_{m,m})$, respectively.
  If the refinement $Q$ is proper and satisfies the relaxed square property
  on $E(K_{m,m})$, the same is true for $Q_{|E(K_2\times K_m)}$ on 
  $E(K_2\times K_m)$
  by Lemma~\ref{lem:delete_classes} and our previous considerations.
  Moreover, we can conclude that $Q$ determines an equivalence relation
  $p_2(Q)$ on $K_m$ via $(p_2(e),p_2(f))\in p_2(Q)$ iff $(e,f)\in Q$.
  It holds $p_2(C_4)\cong C_4$ for any square in $K_2\times K_m$.
  Furthermore, $p_2(e)=p_2(e')$ implies $(e,e')\in Q$ if $Q$ has the relaxed square 
  property. 
  Therefore, it follows, $p_2(Q)$ is a proper refinement of $S$ with 
  the relaxed square property if $Q$ is a proper refinement of $R$ with 
  the relaxed square property.  This completes the proof.  
\end{proof}

\begin{lemma}
  \label{lem:Kmn}
  For $m<n$ let the vertex set of $K_{m,n}$ be given by
  $\{x_1,\ldots,x_m,y_1,\ldots,y_n\}$ such that $E(K_{m,n})=\{[x_i,y_j]\mid
  1\leq i\leq m, 1\leq j\leq m\}$. Furthermore,let $S$ be an equivalence
  relation on the edge set of the induced subgraph
  $\langle\{x_1,\ldots,x_m,y_1,\ldots y_m\}\rangle\cong K_{m,m}$ of
  $K_{m,n}$. We extend $S$ to an equivalence relation $R$ on $E(K_{m,n})$ as
  follows: For each equivalence class $\varphi'\sqsubseteq S$ we extend
  $\varphi'$ to an equivalence class $\varphi\sqsubseteq R$, i.e., we set
  $\varphi'\subseteq\varphi$ and moreover $[x_j,y_{m+i}]$ is an edge in
  equivalence class $\varphi$ if and only if $[x_j,y_{k_i}]$ is an edge in
  $\varphi'$ for fixed $k_i\in\{1,\ldots,m\}$ for all
  $i\in\{1,\ldots,n-m\}$. Then $R$ has the relaxed square property. 
\end{lemma}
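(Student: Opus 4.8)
The plan is to realize $R$ as the pullback of $S$ along a \emph{folding} homomorphism and then to lift the squares guaranteed by the relaxed square property of $S$ back from $K_{m,m}$ to $K_{m,n}$. Concretely, I would define $\pi\colon V(K_{m,n})\to V(K_{m,m})$ by $\pi(x_j)=x_j$ for all $j$, $\pi(y_l)=y_l$ for $1\le l\le m$, and $\pi(y_{m+i})=y_{k_i}$ for $1\le i\le n-m$. Since $\pi$ maps each side of the bipartition into the corresponding side of $K_{m,m}$, it induces an edge map $p\colon E(K_{m,n})\to E(K_{m,m})$ with $p([x_j,y_l])=[x_j,\pi(y_l)]$, which restricts to the identity on the edges of the $K_{m,m}$-subgraph. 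The definition of the extension says exactly that $[x_j,y_l]$ lies in the $R$-class extending an $S$-class $\varphi'$ if and only if $p([x_j,y_l])\in\varphi'$; hence $(e,f)\in R$ if and only if $(p(e),p(f))\in S$. This already shows $R$ is a well-defined equivalence relation, being the $p$-preimage of the partition $S$, and it reduces the relaxed square property of $R$ to a lifting statement.

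To establish the relaxed square property I would take two adjacent edges of $K_{m,n}$ sharing a common vertex $v$, say $e=[v,s]$ and $f=[v,t]$ with $s\neq t$, and assume $(e,f)\notin R$. By the pullback characterization this means $(p(e),p(f))\notin S$, so $p(e)=[\pi(v),\pi(s)]$ and $p(f)=[\pi(v),\pi(t)]$ are adjacent edges of $K_{m,m}$ lying in distinct $S$-classes (in particular $\pi(s)\neq\pi(t)$). Invoking the relaxed square property of $S$, these two edges span a square $\pi(v)-\pi(s)-w-\pi(t)$ of $K_{m,m}$ whose opposite edges are equivalent, i.e.\ $([\pi(v),\pi(s)],[w,\pi(t)])\in S$ and $([\pi(v),\pi(t)],[w,\pi(s)])\in S$.

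Next I would lift this witness by keeping the original endpoints $v,s,t$ and reusing the \emph{same} fourth vertex $w$. The vertex $w$ is adjacent to $\pi(s),\pi(t)$ in $K_{m,m}$, hence lies on the same side of the bipartition as $v$ and belongs to $V(K_{m,m})\subseteq V(K_{m,n})$. Therefore $v-s-w-t$ is a genuine four-cycle of $K_{m,n}$ (its two would-be chords join vertices on a common side and are not edges). Since $\pi$ fixes $w$ and fixes the $x$-vertices, applying $p$ to the edges of this lifted square reproduces exactly the square of $K_{m,m}$ above; the equivalence $(e,f)\in R\Leftrightarrow(p(e),p(f))\in S$ then transports the two $S$-equivalences of opposite edges to $(e,[w,t])\in R$ and $(f,[s,w])\in R$. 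Thus $e$ and $f$ span a square with opposite edges in the same $R$-class, which is what the relaxed square property requires.

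The argument is largely bookkeeping once the folding map is in place, and I expect the only genuinely delicate points to be the two non-degeneracy checks: that the lifted quadrilateral has four distinct vertices (which hinges on $w$ lying on $v$'s side and on $w\neq v$, using that an extra vertex $y_{m+i}$ never coincides with a vertex of $K_{m,m}$), and that $w\in V(K_{m,m})$ so that $p$ acts as the identity on the new edges $[s,w]$ and $[w,t]$. I would also note explicitly that the proof uses the relaxed square property of $S$, so that $S$ should be taken to be an RSP-relation on $K_{m,m}$ rather than an arbitrary equivalence relation.
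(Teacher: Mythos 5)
Your proof is correct, and it reaches the same witness squares as the paper's proof, but it gets there by a genuinely cleaner route. The paper argues by an explicit case analysis on where the two adjacent edges meet (common vertex $y_r$ with $r\le m$ or $r>m$; common vertex $x_j$ with the two $y$-indices both $\le m$, both $>m$, or mixed), and in each case constructs the lifted square by hand. You instead introduce the folding map $\pi$ with $\pi(y_{m+i})=y_{k_i}$, observe that the definition of $R$ says precisely $(e,f)\in R\Leftrightarrow(p(e),p(f))\in S$ for the induced edge map $p$, and then lift the $S$-square back along $p$ in a single uniform argument. This buys you three things: the well-definedness of $R$ as an equivalence relation comes for free (it is the $p$-preimage of a partition), all of the paper's cases collapse into one, and the degenerate situation $\pi(s)=\pi(t)$ is disposed of automatically by reflexivity of $S$. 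Your non-degeneracy checks (that $w$ lies on $v$'s side, $w\neq v$, and $w\in V(K_{m,m})$ so that $p$ fixes the new edges) are exactly the points that need care, and you handle them. You are also right to flag the hypothesis: the lemma as stated only asks that $S$ be an equivalence relation, but both your argument and the paper's invoke the relaxed square property of $S$, so $S$ must be assumed to be an RSP-relation on the $K_{m,m}$-subgraph (as it is in the companion Lemma~\ref{lem:Kmm}).
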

\begin{proof}
  It is clear, that $R$ is an equivalence relation. Thus, it remains to
  show that $R$ has the relaxed square property. Therefore, let $e,f\in
  E(K_{m,n})$ such that $(e,f)\notin R$. Notice, by construction it holds that
  $\psi'\neq\varphi'$ if and only if $\psi\neq\varphi$ for all
  $\psi',\varphi'\sqsubseteq S$ and $\psi,\varphi\sqsubseteq R$ with
  $\psi'\subseteq\psi$ and $\varphi'\subseteq\varphi$.
  
  First, suppose that $e$ and $f$ are incident in some vertex $y_r\in
  V(K_{m,n})$, $r\in\{1,\ldots,n\}$. That is, $e=[x_j,y_r]$ and
  $f=[x_l,y_r]$ for some $j,l\in\{1,\ldots,m\}, j\neq l$. If $r\leq m$ then
  by construction $e,f\in E(K_{m,m})$ and $(e,f)\notin S$, and hence they
  span a square with opposite edges in the same equivalence classes of $S$,
  which is also retained in $K_{m,n}$ with the same properties. If $r>m$,
  then $r=m+i$ for some $i\in\{1,\ldots,n-m\}$. By construction, there
  exists $k_i\in\{1,\ldots,m\}$ such that $([x_j,y_{k_i}],[x_j,y_{m+i}])\in
  R$ and $([x_l,y_{k_i}],[x_l,y_{m+i}])\in R$, which implies
  $([x_j,y_{k_i}],[x_l,y_{k_i}])\notin R$ and hence, by construction,
  $([x_j,y_{k_i}],[x_l,y_{k_i}])\notin S$. Since $S$ has the relaxed square
  property, there exists $w\in V(K_{m,m})\subset V(K_{m,n})$ such that
  $[x_j,y_{k_i}]$ and $[x_l,y_{k_i}]$ span a square $x_j-y_{k_i}-x_l-w$,
  such that $([x_l,w],[x_j,y_{k_i}])\in S\subset R$ and
  $([x_j,w],[x_l,y_{k_i}])\in S\subset R$. Then $x_j-y_{m+i}-x_l-w$ is a
  square spanned by $e$ and $f$ with opposite edges in the same equivalence
  class.
    
  Now assume $e$ and $f$ are incident in some vertex $x_j\in V(K_{m,n})$,
  $j\in\{1,\ldots,m\}$. That is, $e=[x_j,y_r]$ and $f=[x_j,y_s]$ for some
  $r,s\in\{1,\ldots,n\}, r\neq s$. If $r,s\leq m$, then by construction
  $e,f\in E(K_{m,m})$ and $(e,f)\notin S$, and hence they span a square with
  opposite edges in the same equivalence classes of $S$, which is also
  retained in $K_{m,n}$ with the same properties. If $r,s>m$, then $r=m+i$,
  $s=m+l$ for some $i,l\in\{1,\ldots,n-m\}$. By construction, there exists
  $k_i,k_l\in\{1,\ldots,m\}$ such that $([x_j,y_{m+i}],[x_j,y_{k_i}])\in R$
  as well as $([x_j,y_{m+l}],[x_j,y_{k_l}])\in R$, from which we can
  conclude $([x_j,y_{k_i}],[x_j,y_{k_l}])\notin R$. By construction we have
  $([x_j,y_{k_i}],[x_j,y_{k_l}])\notin S$, and since $S$ has the relaxed
  square property, there exists $w\in V(K_{m,m})\subset V(K_{m,n})$ such
  that $[x_j,y_{k_i}]$ and $[x_j,y_{k_l}]$ span a square
  $(x_j,y_{k_i},w,y_{k_l})$, such that $([w,y_{k_l}],[x_j,y_{k_i}])\in
  S\subset R$ and $([w,y_{k_i}],[x_j,y_{k_l}])\in S\subset R$. Moreover, by
  construction, we have $([w,y_{m+i}],[w,y_{k_i}])\in R$ as well as
  $([w,y_{m+l}],[w,y_{k_l}])\in R$. Thus $x_j-y_{m+i}-w-y_{m+l}$ is a
  square spanned by $e$ and $f$ with opposite edges in the same equivalence
  class. If $r>m,s\leq m$, then $r=m+i$ for some $i\in\{1,\ldots,n-m\}$. By
  construction, there exists $k_i\in\{1,\ldots,m\}$ such that
  $([x_j,y_{m+i}],[x_j,y_{k_i}])\in R$ and thus,
  $([x_j,y_{k_i}],[x_j,y_{l}])\notin R$, hence,
  $([x_j,y_{k_i}],[x_j,y_{k_l}])\notin S$. Since $S$ has the relaxed square
  property, there exists $w\in V(K_{m,m})\subset V(K_{m,n})$ such that
  $[x_j,y_{k_i}]$ and $[x_j,y_{l}]$ span a square $x_j-y_{k_i}-w-y_{l}$,
  such that $([w,y_{l}],[x_j,y_{k_i}])\in S\subset R$ and
  $([w,y_{k_i}],[x_j,y_{l}])\in S\subset R$. Moreover, by construction, we
  have $([w,y_{m+i}],[w,y_{k_i}])\in R$. Hence, $x_j-y_{m+i}-w-y_{l}$ is a
  square spanned by $e$ and $f$ with opposite edges in the same equivalence
  class. Analogously, one shows that $e$ and $f$ span a square with
  opposite edges in the same equivalence class if $r\leq m$ and $s>m$, which
  completes the proof.
\end{proof}

Obviously, any finer RSP-relation $S'\subset S$ on $E(K_{m,m})$
leads to a finer RSP-relation $R'\subset R$ on $E(K_{m,n})$,
constructed from $S'$ as in Lemma~\ref{lem:Kmn}.
It is not known yet, if the converse is also true.

\begin{coro}
  For all $m,n\geq 2$ there exists a nontrivial RSP-relation on $E(K_{m,n})$.
\end{coro}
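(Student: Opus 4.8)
The plan is to split along the case distinction that the statement of Lemma~\ref{lem:Kmm} and Lemma~\ref{lem:Kmn} already suggest, namely $m=n$ versus $m<n$ (recall we assume $m\le n$), and in each case to feed a conveniently chosen input relation into the respective construction lemma and then check only that the output has at least two classes. First I would handle the case $m=n$. The key observation is that the single-class (trivial) equivalence relation $S$ on $E(K_m)$ vacuously satisfies the relaxed square property, since there are no two edges lying in distinct classes; hence $S$ is an RSP-relation on $E(K_m)$ in the sense required by Lemma~\ref{lem:Kmm}. Applying Lemma~\ref{lem:Kmm} to this $S$ yields an RSP-relation $R$ on $E(K_{m,m})$ whose classes are exactly the class $\varphi=\{e\in E(K_{m,m})\mid |p_2(e)|=1\}$ together with a single class collecting all edges $e$ with $|p_2(e)|=2$ (since $S$ has only one class). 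For $m\ge 2$ both of these classes are nonempty, so $R$ has exactly two classes and is therefore nontrivial.

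For the case $m<n$ I would use the nontrivial RSP-relation on $E(K_{m,m})$ just produced as the input relation $S$ in Lemma~\ref{lem:Kmn}. That lemma guarantees that the extension $R$ of $S$ to $E(K_{m,n})$ has the relaxed square property, so $R$ is an RSP-relation. It then remains to argue that $R$ is nontrivial. This follows from the construction in Lemma~\ref{lem:Kmn}: each new edge $[x_j,y_{m+i}]$ is assigned to an already existing class by copying the pattern of a fixed vertex $y_{k_i}$, so no class is emptied and no two classes are merged. Thus the natural correspondence $\varphi'\sqsubseteq S\mapsto \varphi\sqsubseteq R$ is a bijection on classes, the number of equivalence classes is preserved, and $R$ inherits from $S$ the property of having at least two classes.

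I do not expect a genuine obstacle here, as the argument is an assembly of the two construction lemmas; the only points that require attention are bookkeeping rather than substance. Specifically, one must check (a) that the trivial one-class relation genuinely qualifies as an RSP-relation, which is immediate since the relaxed square property is a condition on pairs of edges in \emph{distinct} classes and is hence vacuous, and (b) that nontriviality, i.e.\ the existence of at least two classes, survives each construction. The mild care needed is to confirm that $\varphi$ is a proper, nonempty class distinct from the remaining class in Lemma~\ref{lem:Kmm} for every $m\ge 2$, and that the extension in Lemma~\ref{lem:Kmn} neither introduces nor destroys classes. Once these are noted, the corollary follows directly for all $m,n\ge 2$.
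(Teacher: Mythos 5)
Your proposal is correct and follows exactly the route the paper intends: the corollary is stated as an immediate consequence of Lemma~\ref{lem:Kmm} (seeded with any RSP-relation on $E(K_m)$, for which the one-class relation vacuously qualifies) and Lemma~\ref{lem:Kmn} (which extends a relation on the induced $K_{m,m}$ classwise, preserving the number of classes). Your bookkeeping observations --- that $\varphi$ and its complement are both nonempty for $m\ge 2$, and that the extension in Lemma~\ref{lem:Kmn} neither merges nor empties classes --- are precisely the points the paper leaves implicit.
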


The constructions in Lemma~\ref{lem:Kmm} and Lemma~\ref{lem:Kmn} together
with Lemma~\ref{lem:Km} imply that the maximal number of classes of a
finest RSP-relation is at least $\lfloor\frac{m}{2}\rfloor+1$.  From
Lemma~\ref{lem:classes}, we infer that the maximal number of classes of a
finest RSP-relation on $K_{m,n}$ is at most $m$, the minimum degree of
$K_{m,n}$.  In the case of $m=2^q$, this bound is sharp with our
considerations for complete graphs $K_{2^q}$ and the constructions in
Lemma~\ref{lem:Kmm} and Lemma~\ref{lem:Kmn}.

\section{RSP-relations and Covering Graphs}

We are now in the position, to establish the close connection of covering
graphs and (well-behaved) RSP-relations.

\begin{defi}
  For a graph $G=(V,E)$, an RSP-relation $R$ on $E$ and $\varphi \eqcl R$,
  let $G_{\varphi}^x$ and $G_{\varphi}^y$ be two distinct adjacent
  $\varphi$-layer. We define the graph $C_{G_{\varphi}^x,G_{\varphi}^y}$ in
  the following way:
\begin{enumerate}
\item Vertices $V(C_{G_{\varphi}^x, G_{\varphi}^y}) = \{[a,b]\in E\mid
  a\in V(G_{\varphi}^x), b\in V(G_{\varphi}^y) \}$ are precisely the edges
  of $G$ connecting $G_{\varphi}^x$ and $G_{\varphi}^y$.
\item Two vertices $[a_1,b_1],[a_2,b_2]\in
  V(C_{G_{\varphi}^x,G_{\varphi}^y})$ are adjacent if they are opposite
  edges of a square $a_1-b_1-b_2-a_2$ in $G$ with $[a_1,a_2]\in
  E(G_{\varphi}^x)$ and $[b_1,b_2] \in E(G_{\varphi}^y)$.
\end{enumerate}
\end{defi}


\begin{lemma}\label{lem:cover}
  Let $G$ be a graph, $R$ an RSP-relation on $E(G)$, and
  $G_{\varphi}^x$ and $G_{\varphi}^y$ two distinct adjacent $\varphi$-layer
  for some $\varphi \eqcl R$.  Then $C_{G_{\varphi}^x,G_{\varphi}^y}$ is a
  quasi-cover of $G_{\varphi}^x$ and $G_{\varphi}^y$.  Moreover, if $R$ is
  well-behaved, then $C_{G_{\varphi}^x,G_{\varphi}^y}$ is a cover
  of $G_{\varphi}^x$ and $G_{\varphi}^y$.
\end{lemma}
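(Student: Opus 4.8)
The plan is to exhibit the two natural projection maps and verify the local conditions directly from the relaxed square property. Define $\pi_x\colon C_{G_\varphi^x,G_\varphi^y}\to G_\varphi^x$ by $\pi_x([a,b])=a$ and, symmetrically, $\pi_y([a,b])=b$, where throughout $[a,b]$ denotes a vertex of $C_{G_\varphi^x,G_\varphi^y}$, i.e.\ a connecting edge with $a\in V(G_\varphi^x)$ and $b\in V(G_\varphi^y)$. Note that every such connecting edge lies in $\overline\varphi$, since it joins two distinct $\varphi$-layers. First I would check that $\pi_x$ is a graph homomorphism: if $[a_1,b_1]$ and $[a_2,b_2]$ are adjacent in $C_{G_\varphi^x,G_\varphi^y}$, then by definition they are opposite edges of a square $a_1-b_1-b_2-a_2$ with $[a_1,a_2]\in E(G_\varphi^x)$, so $\pi_x$ sends this edge to the edge $[a_1,a_2]$ of $G_\varphi^x$. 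By symmetry the same holds for $\pi_y$, so it suffices to analyze $\pi_x$.

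Next I would establish local surjectivity, which already yields the quasi-cover statement. Fix a vertex $[a,b]$ of $C_{G_\varphi^x,G_\varphi^y}$ and an arbitrary neighbor $a'$ of $a=\pi_x([a,b])$ in $G_\varphi^x$, so that $[a,a']\in\varphi$ while $[a,b]\in\overline\varphi$; in particular $[a,a']$ and $[a,b]$ lie in distinct $R$-classes. By the relaxed square property they span a square $a'-a-b-w$ whose opposite edges lie in the same class. Hence $[b,w]$ lies in the class $\varphi$ of $[a,a']$, which forces $w\in V(G_\varphi^y)$, while $[a',w]$ lies in the class of $[a,b]$ and is therefore contained in $\overline\varphi$. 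Thus $[a',w]$ is a connecting edge with $a'\in V(G_\varphi^x)$ and $w\in V(G_\varphi^y)$, i.e.\ a vertex of $C_{G_\varphi^x,G_\varphi^y}$, and by construction it is adjacent to $[a,b]$ via the square $a-b-w-a'$ and satisfies $\pi_x([a',w])=a'$. This shows every neighbor of $a$ is attained, and together with the homomorphism property of the previous paragraph yields that $\pi_x$ is locally surjective. Since $G_\varphi^x$ is connected, $\pi_x$ is also globally surjective, so $C_{G_\varphi^x,G_\varphi^y}$ is a quasi-cover of $G_\varphi^x$, and symmetrically of $G_\varphi^y$.

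Finally, assuming $R$ well-behaved, I would upgrade local surjectivity to local bijectivity by proving local injectivity of $\pi_x$. Suppose, for contradiction, that two distinct neighbors of $[a,b]$ in $C_{G_\varphi^x,G_\varphi^y}$ are mapped to the same vertex $a'$; being connecting edges sharing the endpoint $a'$, they must be $[a',w_1]$ and $[a',w_2]$ with $w_1\neq w_2$, arising from squares $a-b-w_1-a'$ and $a-b-w_2-a'$ with $[a,a']\in E(G_\varphi^x)$ and $[b,w_1],[b,w_2]\in E(G_\varphi^y)$. The relaxed square property applied in each square gives $[b,w_1],[b,w_2]\in\varphi$ and places both $[a',w_1]$ and $[a',w_2]$ in the common class of $[a,b]$, a class contained in $\overline\varphi$. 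The five vertices $a,a',b,w_1,w_2$ are pairwise distinct (the layer memberships separate the $x$-side $\{a,a'\}$ from the $y$-side $\{b,w_1,w_2\}$, and $w_1\neq w_2$ by assumption), so $\{b,a'\}$ and $\{a,w_1,w_2\}$ span a $K_{2,3}$. Restricting to the square $w_1-b-w_2-a'$ exhibits two adjacent edges $[w_1,b],[b,w_2]\in\varphi$ meeting at $b$, whereas the two opposite edges $[w_2,a'],[a',w_1]$ lie in a class different from $\varphi$. By the characterization of non-well-behaved relations recorded before Figure~\ref{fig:forbiddensubs}, such a square is precisely a $K_{2,3}$ with a forbidden coloring, contradicting well-behavedness. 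Hence no two distinct neighbors of $[a,b]$ collapse under $\pi_x$, so $\pi_x$ is locally bijective and therefore a covering map; by symmetry the same holds for $\pi_y$.

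The main obstacle is this last step: one has to recognize that the two squares sharing the pair $[a,b],[a,a']$ assemble into a single forbidden $K_{2,3}$, so that well-behavedness is exactly the hypothesis ruling out the failure of local injectivity. The preceding quasi-cover argument is comparatively routine, being a direct unfolding of the relaxed square property, whereas the cover argument hinges on correctly tracking the $R$-classes of all six edges of the emerging $K_{2,3}$.
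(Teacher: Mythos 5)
Your proof is correct and takes essentially the same route as the paper's: the two endpoint projections, local surjectivity from the relaxed square property, and local injectivity from the observation that two distinct preimages of a neighbor assemble with $a,a',b$ into a $K_{2,3}$ whose coloring is forbidden. One small imprecision worth noting: adjacency in $C_{G_{\varphi}^x,G_{\varphi}^y}$ does not by itself place $[a',w_1]$ and $[a',w_2]$ in the $R$-class of $[a,b]$ (the relaxed square property only guarantees \emph{some} square with matched opposite classes, not every square witnessing adjacency), but this over-claim is harmless since your forbidden-coloring argument only needs $[a',w_1],[a',w_2]\in\overline\varphi$, which holds automatically because they are edges joining two distinct $\varphi$-layers.
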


\begin{proof}
  We define the map $f_1: V(C_{G_{\varphi}^x,G_{\varphi}^y}) \rightarrow
  V(G_{\varphi}^x)$ by $f_1([a,b])=a$ where $a \in V(G_{\varphi}^x)$ and $b
  \in V(G_{\varphi}^y)$ and show first that $f_1$ is a homomorphism, i.e.,
  it maps neighbors in $C_{G_{\varphi}^x,G_{\varphi}^y} $ into neighbors in
  $G_{\varphi}^x$.  Let $[a_1,b_1], [a_2,b_2] \in
  V(C_{G_{\varphi}^x,G_{\varphi}^y})$ be adjacent.  By construction of
  edges in $C_{G_{\varphi}^x,G_{\varphi}^y}$, there is a square
  $a_1-b_1-b_2-a_2$ in $G$ with opposite edges $[a_1,b_1]$ and $[a_2,b_2]$.
  Hence, $a_1$ and $a_2$ are adjacent in $G_{\varphi}^x$.  Now, let
  $a=f_1([a,b])$ and $c \in N_{G_{\varphi}^x }(a)$. Since $[a,c]$ and
  $[a,b]$ are incident edges of different equivalence classes, they span
  some square with opposite edges in relation $R$.  Thus there exists a
  vertex $d\in V(G_{\varphi}^y)$, such that $[a,b]$ and $[c,d]$ are
  adjacent in $C_{G_{\varphi}^x,G_{\varphi}^y}$ and $f_1([c,d])=c$. This
  proves that $f_1$ is locally surjective and therefore, that
  $C_{G_{\varphi}^x,G_{\varphi}^y}$ is a quasi cover of $G_{\varphi}^x$.

  Let $f_1$ be defined as above and assume that none of the subgraphs of
  $G$ that are isomorphic to $K_{2,3}$ have a forbidden coloring.
  If $f_1([c_1,d_1])=f_1([c_2,d_2])$ it holds that for $[c_1,d_1],[c_2,d_2] \in
  N_{C_{G_{\varphi}^x,G_{\varphi}^y}}([a,b])$ we have $c_1=c_2$ by 
  construction of $f_1$. If $d_1\neq d_2$, then there is a subgraph of $G$
  isomorphic to $K_{2,3}$ with bipartition $\{b,c_1\}\dot\cup
  \{a,d_1,d_2\}$. Moreover, since $[a, c_1], [b, d_1], [b, d_2] \in
  \varphi$ and the other edges are, by construction, in $\vpo$ we conclude
  that this subgraph has a forbidden coloring, a contradiction. Thus,
  $d_1=d_2$, i.e., the locally surjective map $f_1$ is also locally
  injective. Hence, $C_{G_{\varphi}^x,G_{\varphi}^y}$ is a cover of
  $G_{\varphi}^x$.

  Arguing analogously for the map $f_2: V(C_{G_{\varphi}^x,G_{\varphi}^y})
  \rightarrow V(G_{\varphi}^y)$ with $f_2([a,b])=b$, $a \in
  V(G_{\varphi}^x)$, $b \in V(G_{\varphi}^y)$, one obtains the desired
  results for $C_{G_{\varphi}^x,G_{\varphi}^y}$ and $G_{\varphi}^y$.
\end{proof}

To illustrate Lemma~\ref{lem:cover} consider the following example: Let
$G_1=C_6$ and $G_2=C_9$ with vertex sets $\mathbb{Z}_6$ and $\mathbb{Z}_9$
and the canonical edge set definitions. To obtain $G$ add the edges $[k,k
\mod 6]$ and $[k,k+3 \mod 6]$ for $0\leq k \leq 9$ connecting $G_1$ with
$G_2$. Construct an equivalence relation $R$ with two classes $\varphi=
E(G_1) \cup E(G_2)$, and $\overline{\varphi}$ comprising the connecting
edges. $R$ is a well-behaved RSP-relation on $G$. 
It is not hard to verify that $C_{G_1,G_2}$ is a cover graph of
$C_6$ and $C_9$ and is isomorphic to $C_{18}$.

For a similar result for the case when $G_{\varphi}^x$ and $G_{\varphi}^y$
are not distinct, that is $G_{\varphi}^x=G_{\varphi}^y$, but there are
edges not in $\varphi$ connecting its vertices, we have to be a bit more
careful.

\begin{defi}\label{def:superimposed}
  For a graph $G=(V,E)$, an RSP-relation $R$ on $E$, and $\varphi \eqcl R$,
  let $G_{\varphi}^x$ be some $\varphi$-layer.  We define the graph
  $C_{G_{\varphi}^x,G_{\varphi}^x}$ in the following way:
\begin{enumerate}
\item Vertices $V( C_{G_{\varphi}^x, G_{\varphi}^x}) = \{(a,b)\mid [a,b]\in
  E, a,b\in V(G_{\varphi}^x), [a,b]\in \vpo, \varphi\eqcl R\}$ are edges in $E(G)$ with superimposed 
  orientation $(a,b)$ from $a$ to $b$, that are not contained in class
  $\varphi$, but that connect vertices of $G_{\varphi}^x$.
\item Two directed edges $(a_1,b_1)$ and $(a_2,b_2)$ in
  $V(C_{G_{\varphi}^x,G_{\varphi}^x})$ are adjacent if $[a_1,b_1]$,
  $[a_2,b_2]$ are opposite edges of a square $a_1-b_1-b_2-a_2$ in $G$ with
  $[a_1,a_2],[b_1,b_2]\in E(G_{\varphi}^x)$.
\end{enumerate}
\end{defi}

\begin{remark}
Since $[a,b]=[b,a]$, it  holds that for all edges $[a,b]\in E$,
we get two vertices in $V(C_{G_\vp^x,G_\vp^x})$ per edge $[a,b]\in E\setminus\varphi$, 
namely $(a,b)$ and $(b,a)$.
\end{remark}

\begin{lemma}\label{lem:cover2}
  For a graph $G=(V,E)$, an RSP-relation $R$ on $E$, and $\varphi \eqcl R$,
  let $G_{\varphi}^x$ be some $\varphi$-layer and assume that there are edges
  $[a,b]\in E\setminus\varphi$ with $a,b\in V(G_{\varphi}^x)$.  Then
  $C_{G_{\varphi}^x,G_{\varphi}^x}$ is a quasi-cover of $G_{\varphi}^x$
  with two different locally surjective homomorphisms $f_1$ and $f_2$ such
  that $f_1(h)\neq f_2(h)$ for every $h\in
  C_{G_{\varphi}^x,G_{\varphi}^x}$. Moreover, if $R$ is well-behaved, 
  then $C_{G_{\varphi}^x,G_{\varphi}^x}$ is twice a cover of
  $G_{\varphi}^x$, i.e., there are at least two different covering 
  maps.
\end{lemma}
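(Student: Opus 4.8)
The plan is to mimic the proof of Lemma~\ref{lem:cover}, with the superimposed orientations of Definition~\ref{def:superimposed} playing the role that the two distinct layers $G_\varphi^x,G_\varphi^y$ played there. Concretely, I would define the two maps $f_1,f_2\colon V(C_{G_\varphi^x,G_\varphi^x})\to V(G_\varphi^x)$ by $f_1((a,b))=a$ and $f_2((a,b))=b$. Both are well defined since every vertex of $C_{G_\varphi^x,G_\varphi^x}$ is an oriented edge $(a,b)$ with $a,b\in V(G_\varphi^x)$. Because $[a,b]\in E\setminus\varphi$ is a genuine edge of a simple graph, $a\neq b$, so $f_1((a,b))\neq f_2((a,b))$ for every vertex; in particular $f_1$ and $f_2$ are distinct maps (the hypothesis guarantees that $C_{G_\varphi^x,G_\varphi^x}$ has at least one vertex). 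It then remains to show that each $f_i$ is a locally surjective homomorphism, and locally bijective when $R$ is well-behaved.

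For the homomorphism property, if $(a_1,b_1)$ and $(a_2,b_2)$ are adjacent in $C_{G_\varphi^x,G_\varphi^x}$, then by Definition~\ref{def:superimposed} there is a square $a_1-b_1-b_2-a_2$ with $[a_1,a_2],[b_1,b_2]\in E(G_\varphi^x)$; hence $f_1((a_1,b_1))=a_1$ and $f_1((a_2,b_2))=a_2$ are adjacent in $G_\varphi^x$ via $[a_1,a_2]$ (and likewise for $f_2$ via $[b_1,b_2]$). For local surjectivity of $f_1$ at a vertex $(a,b)$, I would take any neighbor $c$ of $a=f_1((a,b))$ in $G_\varphi^x$, so $[a,c]\in\varphi$, and apply the relaxed square property to the incident edges $[a,b]\in\vpo$ and $[a,c]\in\varphi$. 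This yields a square $a-b-d-c$ whose opposite edges agree in class, so $[c,d]\in\vpo$ and $[b,d]\in\varphi$. The decisive point---and the one that fails without the standing hypothesis---is that all four vertices lie in $V(G_\varphi^x)$: here $b\in V(G_\varphi^x)$ by assumption, and $[a,c],[b,d]\in\varphi$ keep $c,d$ inside the same $\varphi$-layer. Thus $(c,d)$ is a vertex of $C_{G_\varphi^x,G_\varphi^x}$, the square $a-b-d-c$ exhibits it as a neighbor of $(a,b)$ with the correct orientation (matching the pattern $a_1-b_1-b_2-a_2$ with $[a,c],[b,d]$ in the layer), and $f_1((c,d))=c$. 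The argument for $f_2$ is symmetric, starting from a neighbor of $b$ and producing an oriented edge whose second coordinate realizes it.

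When $R$ is well-behaved I would establish local injectivity, hence local bijectivity, global surjectivity being automatic since $G_\varphi^x$ is connected. Suppose $(a,b)$ has two neighbors $(c,d_1)$ and $(c,d_2)$ with the same $f_1$-image $c$ but $d_1\neq d_2$. The two defining squares give $[a,c],[b,d_1],[b,d_2]\in\varphi$ and $[a,b],[c,d_1],[c,d_2]\in\vpo$, and one checks that $a,d_1,d_2$ are pairwise distinct (the four vertices of each square are distinct) and $b\neq c$ (else $[a,b]=[a,c]$ would lie in both $\vpo$ and $\varphi$). Then $\{b,c\}\dot\cup\{a,d_1,d_2\}$ spans a $K_{2,3}$ in which $b$ carries two $\varphi$-edges (to $d_1,d_2$) and $c$ carries one $\varphi$-edge (to $a$), which is exactly a forbidden coloring---contradicting well-behavedness. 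Hence $d_1=d_2$ and $f_1$ is locally injective; the same computation with the two coordinates exchanged (the $2$-side being $\{a,c\}$, with $a$ carrying the two $\varphi$-edges) handles $f_2$. This produces two distinct covering maps, so $C_{G_\varphi^x,G_\varphi^x}$ is twice a cover of $G_\varphi^x$.

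The main obstacle is purely the orientation bookkeeping: unlike in Lemma~\ref{lem:cover}, where the source and target layers were genuinely different graphs, here both projections land in the same layer, so I must verify throughout that each square produced by the relaxed square property is attached with precisely the orientation demanded by Definition~\ref{def:superimposed}---that it realizes $(c,d)$ rather than $(d,c)$ as the neighbor---and that in the injectivity step the forbidden-coloring pattern is matched to the correct vertex of the $2$-side carrying the two $\varphi$-edges.
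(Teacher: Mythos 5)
Your proof is correct and is exactly the route the paper takes: the paper's own proof of this lemma is the one-line remark that it is ``the same as for Lemma~\ref{lem:cover} by defining $f_1((a,b))=a$ and $f_2((a,b))=b$'', and your write-up simply fills in the orientation bookkeeping and the forbidden-$K_{2,3}$ argument that this reference implies. The only blemish is a harmless notational slip in the last step, where the $2$-side of the $K_{2,3}$ used for the local injectivity of $f_2$ should be $\{a,d\}$ (with $d$ the common $f_2$-image) rather than $\{a,c\}$.
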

\begin{proof}
Proof is the same as for Lemma~\ref{lem:cover} by defining $f_1((a,b))=a$ and $f_2((a,b))=b$.
\end{proof}

If every vertex of $G_{\varphi}^x$ is incident with exactly one edge
that is not in $\varphi$ but connects two vertices of $G_{\varphi}^x$, then
$G_{\varphi}^x \cong C_{G_{\varphi}^x,G_{\varphi}^x}$ and the edges in $\vpo$ induce
an automorphism of $G_{\varphi}^x$ without fixed vertices by setting
$f(a)=b$ whenever $[a,b]\in\vpo$.

As an example consider the graph $G$ with
$V(G)=\mathbb Z_6$ and $E(G)=\varphi\dot\cup\vpo$ such that
$\varphi=\{[k,k+1 \mod 6 ]\mid 0\leq k\leq 5\}$, i.e., $G_\varphi\cong C_6$
and $\vpo=\{[1,4],[2,5],[3,6]\}$.  We then have
$V(C_{G_\varphi^x,G_\varphi^x})=\{(0,3),(1,4),(2,5),(3,0),(4,1),(5,2)\}$
and $C_{G_\varphi^x,G_\varphi^x}$ has edges
$E(C_{G_\varphi^x,G_\varphi^x})=\{[(0,3),(1,4)],[(1,4),(2,5)],[(2,5),(3,0)],
[(3,0),(4,1)],[(4,1),(5,2)],[(5,2),(0,2)]\}$,
that is $C_{G_\varphi^x,G_\varphi^x}\cong C_6\cong G_\varphi$.
The induced automorphism is given by $f(k)=k+3\mod 6$, $k=0,\ldots,5$.

Lemma \ref{lem:cover} and Lemma \ref{lem:cover2} together highlight a
connection between graph bundles and graphs with relaxed square property.
For an RSP-relation $R$ on $G$ we see that the connected components
$G_{\varphi}$ correspond to fibers, while the graph
$G_{\overline{\varphi}}/\mc{P}^R_{\varphi}$ has the role of the base graph.
Such decomposition is a graph bundle if and only if edges connecting
$G_{\varphi}^x$ and $G_{\varphi}^y$ for arbitrary $x$, $y$ induce an
isomorphism. In our language, this is equivalent to the condition
$C_{G_{\varphi}^x,G_{\varphi}^y}\cong G_{\varphi}^x \cong G_{\varphi}^y$
for arbitrary $x,\ y$, provided that $G_\vp^x$ and $G_\vp^y$ are connected by an edge.
Graphs with a nontrivial RSP-relation are therefore a natural generalization of
graph bundles.

\begin{coro}
     \label{cor:inters}
  For a graph $G$ and a well-behaved RSP-relation $R$ on $E(G)$, 
  let $G_{\varphi}^x$ and $G_{\varphi}^y$ be two 
  (not necessarily distinct) $\varphi$-layers. 
  Then 
  \begin{equation}
     \label{eq:inters}
    |N_{G_{\overline{\varphi}}}(x)\cap V(G_{\varphi}^y)| = 
    |N_{G_{\overline{\varphi}}}(u)\cap V(G_{\varphi}^y)| 
  \end{equation}
  is fulfilled for every $u\in V(G_{\varphi}^x)$.
\end{coro}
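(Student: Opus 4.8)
The plan is to identify the quantity $|N_{G_{\overline\varphi}}(u)\cap V(G_\varphi^y)|$ with the cardinality of a fiber of the covering map supplied by Lemma~\ref{lem:cover} (resp.\ Lemma~\ref{lem:cover2}) and then to invoke the constant-fiber property of a covering map over a connected base. Indeed, since $G$ is simple, each vertex $b\in N_{G_{\overline\varphi}}(u)\cap V(G_\varphi^y)$ is the unique second endpoint of an edge $[u,b]\in\overline\varphi$, so this neighbor set is in bijection with the set of $\overline\varphi$-edges joining $u$ to the layer $G_\varphi^y$.

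First suppose $G_\varphi^x\neq G_\varphi^y$. If the two layers are not adjacent, then no $\overline\varphi$-edge runs between them and both sides of \eqref{eq:inters} equal $0$. If they are adjacent, then, because $R$ is well-behaved, Lemma~\ref{lem:cover} yields a covering map $f_1\colon C_{G_\varphi^x,G_\varphi^y}\to G_\varphi^x$, $[a,b]\mapsto a$. As every edge between two distinct $\varphi$-layers lies in $\overline\varphi$, the fiber $f_1^{-1}(v)=\{[v,b]\mid b\in V(G_\varphi^y)\}$ consists exactly of the $\overline\varphi$-edges from $v$ into $G_\varphi^y$, so $|f_1^{-1}(v)|=|N_{G_{\overline\varphi}}(v)\cap V(G_\varphi^y)|$. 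Since the layer $G_\varphi^x$ is connected, the fiber size of the covering map $f_1$ does not depend on $v\in V(G_\varphi^x)$; specializing to $v=x$ and $v=u$ gives \eqref{eq:inters}.

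For $G_\varphi^x=G_\varphi^y$ I would argue identically using Lemma~\ref{lem:cover2}. If there is no $\overline\varphi$-edge with both endpoints inside the layer, both sides vanish. Otherwise the hypothesis of Lemma~\ref{lem:cover2} is satisfied and, as $R$ is well-behaved, $f_1\colon C_{G_\varphi^x,G_\varphi^x}\to G_\varphi^x$, $(a,b)\mapsto a$, is a covering map; its fiber $f_1^{-1}(v)=\{(v,b)\mid [v,b]\in\overline\varphi,\ b\in V(G_\varphi^x)\}$ again bijects with $N_{G_{\overline\varphi}}(v)\cap V(G_\varphi^x)$, and constancy of $|f_1^{-1}(v)|$ over the connected layer yields the claim.

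The mathematical substance is carried entirely by the two covering lemmas, so I expect no serious obstacle. The only points requiring care are the bookkeeping that turns a neighbor count into a fiber cardinality (using simplicity of $G$ and the fact that inter-layer edges automatically lie in $\overline\varphi$) and the separate, elementary treatment of the non-adjacent and edge-free subcases, for which the covering lemmas do not apply but both sides of \eqref{eq:inters} are plainly zero.
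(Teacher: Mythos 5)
Your proof is correct and follows essentially the same route as the paper's: reduce the neighbor count to the fiber cardinality of the covering map $f_1$ from Lemma~\ref{lem:cover} (resp.\ Lemma~\ref{lem:cover2}), dispose of the edge-free case trivially, and invoke the constancy of covering fibers over the connected layer. No substantive differences.
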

\begin{proof}
  If there is no edge in $G_{\overline{\varphi}}$ connecting
  $G_{\varphi}^x$ and $G_{\varphi}^y$ the assertion is clearly true.
  Therefore assume now that they are connected. By Lemmas~\ref{lem:cover}
  and \ref{lem:cover2}, $C_{G_{\varphi}^x,G_{\varphi}^y}$ is a cover of
  $G_{\varphi}^x$ with covering map $f_1$ as defined in
  Lemmas~\ref{lem:cover} resp. \ref{lem:cover2}.  By definition of $f_1$,
  $|f_1^{-1}(u)|=|N_{G_{\overline{\varphi}}}(u)\cap V(G_{\varphi}^y)|$,
  which is the same for all $u\in V(G_\varphi^x)$.
\end{proof}

Corollary~\ref{cor:inters} indicates another property of well-behaved
RSP-relations. It was shown in \cite{OstermeierL:14} that for a so-called
USP-relation $R$ on $E(G)$ the vertex partitions $P^R_{\vpo}$ and $P^R$
induced by equivalence classes $\vp\eqcl R$ are equitable partitions for
the graphs $G_\vp$ and $G$, respectively.  The key argument leading to this
result was an analogue of Equation~\eqref{eq:inters}.  Together with
Lemma~\ref{lem:delete_classes}, the fact that if $R$ is well-behaved on $G$
then $R\setminus\vp$ is well-behaved on $(V(G),E(G)\setminus\vp)$, and
since $|\dot\bigcup_\psi N_\psi(x)|=\sum_\psi|N_\psi(x)|$ for any set of
pairwisely distinct equivalence classes $\psi$ of $R$, we can use the same
arguments as in \cite{OstermeierL:14} to obtain
\begin{theorem} \label{thm:equitpart} Let $R$ be (a coarsening of) a
  well-behaved RSP-relation on the edge set $E(G)$ of a connected graph
  $G$.  Then:
  \begin{itemize}
  \item[(1)] $\mathcal{P}^R_{\vpo}=\left\{V(G_{\vpo}^x)\mid x\in
      V(G)\right\}$ is an equitable partition of the graph $G_\vp$ for
    every equivalence class $\vp$ of $R$.
  \item[(2)] $\mc P^R=\left\{\bigcap_{\vp\eqcl R}V(G_{\vpo}(x))\mid x\in
      V(G)\right\}$ is an equitable partition of $G$.
  \end{itemize}
\end{theorem}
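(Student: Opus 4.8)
The plan is to derive both statements from Corollary~\ref{cor:inters}, exactly as the equitability results for USP-relations were obtained from the analogue of Equation~\eqref{eq:inters} in \cite{OstermeierL:14}. Note first that Corollary~\ref{cor:inters} is precisely the assertion that $\mc P^R_{\varphi}$ is an equitable partition of $\Go$, i.e. the statement dual to part~(1) under the exchange of $\vp$ and $\vpo$. To obtain part~(1) itself I would apply the corollary to the two-class coarsening $R'=\{\varphi,\vpo\}$, taking $\vpo$ as the distinguished class: then the $R'$-layers of $\vpo$ are exactly the sets $V(\Go^x)$, the complementary class is $\vp$, and the corollary yields that $|N_{\varphi}(u)\cap V(\Go^y)|$ is the same for all $u\in V(\Go^x)$, which is the required equitability of $\mc P^R_{\vpo}$ on $G_\varphi$. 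The one point needing verification is that $R'$ is a \emph{well-behaved} RSP-relation: it is an RSP-relation as a coarsening of $R$ by Lemma~\ref{lem:classes}, and a forbidden $K_{2,3}$-coloring of $R'$ must place its distinguished triple of edges either in $\vp$ (impossible, since $R$ is well-behaved and $\vp$ is unchanged) or in $\vpo$, in which case the three complementary edges lie in $\vp$ and again form a $2{+}1$ pattern, i.e. a forbidden coloring with distinguished class $\vp$; so $R'$ inherits well-behavedness from $R$.

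For part~(2) I would first reduce to part~(1) by additivity. Since the $R$-classes partition $E$, for any vertex $u$ and any class $D$ of $\mc P^R$ one has $|N_G(u)\cap D|=\sum_{\varphi\eqcl R}|N_\varphi(u)\cap D|$; hence it suffices to show that each summand is constant as $u$ ranges over a fixed class $C$ of $\mc P^R$, that is, that $\mc P^R$ is equitable for every $G_\varphi$ separately. Because $\mc P^R$ refines $\mc P^R_{\vpo}$, part~(1) already gives that the coarser count $|N_\varphi(u)\cap V(\Go^y)|$ is constant on $C$; what remains is to refine this from whole $\vpo$-layers to the individual $\mc P^R$-classes contained in them. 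Here I would invoke Lemma~\ref{lem:delete_classes} together with the fact that deleting $\vp$ preserves well-behavedness: on $\Go=(V,\vpo)$ the relation $R\setminus\varphi$ is again a well-behaved RSP-relation, and the covering maps produced by Lemmas~\ref{lem:cover} and~\ref{lem:cover2} restrict consistently to the components, so that the $\vp$-edges between two $\vpo$-layers are distributed uniformly not only over the layers but over their $\mc P^R$-subclasses. This is the inductive argument carried out for USP-relations in \cite{OstermeierL:14}.

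Finally, the coarsening case is absorbed by the same additivity device: if $R$ is a coarsening of a well-behaved $R_0$, each class $\varphi$ of $R$ is a disjoint union of $R_0$-classes $\psi$, and $|N_\varphi(u)\cap D|=\sum_{\psi\subseteq\varphi}|N_\psi(u)\cap D|$ reduces every count to the finest, well-behaved level, where the previous steps apply. I expect the genuinely delicate point to be this last refinement step — passing from constancy over entire $\vpo$-layers, which is all that Corollary~\ref{cor:inters} gives directly, to constancy over the finer classes of the common refinement $\mc P^R$. Everything else is bookkeeping with Lemma~\ref{lem:classes}, Lemma~\ref{lem:delete_classes}, and the additivity of neighborhood sizes; the real content lies in verifying that the covering structure of Lemma~\ref{lem:cover} is compatible with the refinement $\mc P^R$, which is exactly where well-behavedness and its preservation under deletion of a class are needed.
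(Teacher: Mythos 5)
Your proposal follows essentially the same route as the paper, which likewise derives the theorem from Corollary~\ref{cor:inters}, Lemma~\ref{lem:delete_classes}, the preservation of well-behavedness under deletion of a class, and the additivity of neighborhood counts over equivalence classes, deferring the remaining details to the USP-relation arguments of \cite{OstermeierL:14}. Your explicit check that the two-class coarsening $\{\vp,\vpo\}$ inherits well-behavedness (so that Corollary~\ref{cor:inters} can be applied in the dual direction to obtain part~(1)) is correct and makes explicit a duality step the paper passes over silently.
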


As mentioned previously, while an RSP-relation $R$ on $E(G)$ might be
well-behaved and thus, has no forbidden $K_{2,3}$-coloring this is no
longer true for coarsenings of $R$ in general.  However, since the number
of edges incident to a vertex is additive over equivalence classes of $R$,
the latter theorem remains also true for coarsenings of relations
\emph{without} forbidden $K_{2,3}$-colorings.

Another interesting question is how two graphs $G_1$ and $G_2$ can be
connected by additional edges so that $\varphi=E(G_1) \cup E(G_2)$ and
$\overline{\varphi}$ comprises the connecting edges and $R=\{\vp,\vpo\}$ is
an RSP-relation.
\begin{lemma}\label{lem:cover3}
  Let $G_1$, $G_2$, and $G$ be graphs and $f_1:G\rightarrow G_1$, $f_2:G
  \rightarrow G_2$ be locally surjective homomorphisms. Then there exists a
  graph $H=(V,E)$ and an RSP-relation $R$ on $E$ with equivalence classes
  $\vp,\ \vpo$ such that
\begin{equation*}
  V=V(G_1)\cup V(G_2) \quad\text{\ and\ }\quad \varphi=E(G_1)\cup
  E(G_2). 
\end{equation*}
Note, it is allowed to have $G_1=G_2$. In this case, $H$ might have loops
and double edges.  
\end{lemma}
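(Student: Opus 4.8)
The plan is to run the construction of Lemma~\ref{lem:cover} in reverse, turning the vertices of the quasi-cover $G$ into the connecting edges of $H$. Concretely, I would take $V(H)=V(G_1)\,\dot{\cup}\,V(G_2)$, declare the class $\varphi=E(G_1)\cup E(G_2)$ to consist of all edges lying inside the two factors, and for every vertex $v\in V(G)$ introduce one connecting edge $[f_1(v),f_2(v)]$ joining $f_1(v)\in V(G_1)$ to $f_2(v)\in V(G_2)$; the collection of these connecting edges is declared to be the complementary class $\overline{\varphi}$. When $G_1=G_2$ one does not pass to disjoint copies, and connecting edges with $f_1(v)=f_2(v)$ or coinciding with an existing factor edge produce exactly the loops and double edges mentioned in the statement. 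With $R=\{\varphi,\overline{\varphi}\}$ the only remaining task is to verify that $R$ has the relaxed square property.

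The observation that makes this short is that $R$ has only two classes, so a pair of adjacent edges lies in \emph{distinct} classes precisely when one edge is in $\varphi$ and the other in $\overline{\varphi}$. Thus two connecting edges, or two edges inside the factors, are automatically in the same class and need not span anything. Moreover an edge of $E(G_1)$ and an edge of $E(G_2)$ can never be adjacent, since $V(G_1)$ and $V(G_2)$ are disjoint. Hence it suffices to treat a single configuration: an edge $e\in\varphi$ incident with a connecting edge $f\in\overline{\varphi}$, and by the symmetry between $G_1$ and $G_2$ I may assume $e\in E(G_1)$.

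So let $e=[a,a']\in E(G_1)$ and let $f=[a,b]$ be the connecting edge of some $v\in V(G)$, where $a=f_1(v)\in V(G_1)$ and $b=f_2(v)\in V(G_2)$; they share the vertex $a$. Since $a'\in N_{G_1}(a)=N_{G_1}(f_1(v))$ and $f_1$ is locally surjective, there is a neighbour $u\in N_G(v)$ with $f_1(u)=a'$. Put $w:=f_2(u)$. Because $[u,v]\in E(G)$ and $f_2$ is a homomorphism of simple graphs, $[w,b]=[f_2(u),f_2(v)]\in E(G_2)$ (in particular $w\neq b$). The connecting edge of $u$ is $[a',w]=[f_1(u),f_2(u)]\in\overline{\varphi}$, and the four vertices $a,a',w,b$ are pairwise distinct. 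Therefore $a-a'-w-b$ is a genuine square spanned by $e$ and $f$, with $[a,a'],[w,b]\in\varphi$ and $[a',w],[b,a]\in\overline{\varphi}$, i.e.\ opposite edges in the same class. This is exactly the relaxed square property for the pair $e,f$, and exchanging the roles of $f_1$ and $f_2$ settles the case $e\in E(G_2)$.

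The argument is essentially mechanical once these reductions are made; the only real point is that the single relevant configuration uses the two homomorphisms asymmetrically --- local surjectivity of $f_1$ to produce the preimage $u$ realizing the missing $\overline{\varphi}$-edge, and the mere homomorphism property of $f_2$ to close the square with a $\varphi$-edge inside $G_2$. The one place I would check carefully is the degenerate case $G_1=G_2$: there the square $a-a'-w-b$ may involve repeated vertices or edges identified with loops/double edges of $H$, so I would confirm that the relaxed square property is still read off correctly in that multigraph setting (or simply invoke the disjoint-copy construction, which already establishes existence).
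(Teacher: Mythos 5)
Your construction (one connecting edge $[f_1(v),f_2(v)]$ per $v\in V(G)$, placed in $\overline{\varphi}$) and your verification (local surjectivity of $f_1$ produces the neighbour $u$ of $v$ over $a'$, and the homomorphism property of $f_2$ closes the square $a-a'-w-b$) are exactly the paper's argument, merely stated with the reduction to the single relevant configuration made explicit. The proposal is correct and essentially identical to the paper's proof, including the remark that the case $G_1=G_2$ is handled by passing to disjoint copies.
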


\begin{proof}
  For given graphs $G_1$, $G_2$, $G$ and locally surjective homomorphisms
  $f_i:G\rightarrow G_i$, $i=1,2$ construct the graph $H$ as follows: For
  $x\in V(G_1)$ and $y\in V(G_2)$ add an edge $[x,y]$ if and only if there
  exists $g\in V(G)$ such that $f_1(g)=x$ and $f_2(g)=y$.  We set
  $[x,y]\in\vpo$.  It is clear, that $R$ is an equivalence relation.  We
  have to show, that $R$ is an RSP-relation.  Let $[x_1,x_2]\in E(G_1)$ and
  $[x_1,y_1]$ be an added edge. Then there exists $g_1\in V(G)$, such that
  $f_1(g_1)=x_1$ and $f_2(g_1)=y_1$. Since $f_1$ is a locally surjective
  homomorphism, there exists a vertex $g_2$ as a neighbor of $g_1$, such
  that $f_1(g_2)=x_2$. Let $y_2=f_2(g_2)$. Then $y_2$ and $x_2$ are
  connected by an added edge and $y_1,y_2$ are adjacent since $f_2$ is a
  homomorphism.  Thus $[x_1,x_2]$ and $[x_1,y_1]$ lie on a square with
  opposite edges in relation $R$.
 
  If $G_1=G_2$, then just identify vertices of two copies of $G_1$. 
\end{proof}

\begin{lemma}\label{lem:covereq}
Let $G$ and $G'$ be two graphs.
Then there exists a graph $H=(V,E)$ and a well-behaved RSP-relation $R$
with two equivalence classes $\varphi,\ \vpo$ such that
\begin{equation*}
  V=V(G)\cup V(G') \text{\ and\ } \varphi=E(G)\cup E(G'), 
  \text{\ and each vertex of\ } V(G) \text{\ is incident to exactly one\ } 
\vpo\text{-edge} 
\end{equation*}
if and only if $G$ is a cover of $G'$.
\end{lemma}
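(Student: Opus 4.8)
The plan is to read off from any such graph $H$ a candidate covering map and to show that the \emph{well-behaved RSP}-property of $R=\{\vp,\vpo\}$ is exactly equivalent to this map being locally bijective. Concretely, since each $a\in V(G)$ is incident to exactly one $\vpo$-edge and the $\vpo$-edges are the connecting edges between the two parts, I would define $f\colon V(G)\to V(G')$ by letting $f(a)$ be the unique vertex with $[a,f(a)]\in\vpo$. The lemma then splits into ``$G$ covers $G'$ $\Rightarrow$ $H$ exists'' (construct $H$ and check well-behavedness) and ``$H$ exists $\Rightarrow$ $G$ covers $G'$'' (show $f$ is a covering map).

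For ``$G$ covers $G'$ $\Rightarrow$ $H$ exists'', let $p\colon G\to G'$ be a covering map. I would build $H$ by Lemma~\ref{lem:cover3} applied to the locally surjective homomorphisms $f_1=\mathrm{id}_G$ and $f_2=p$: this yields $V=V(G)\cup V(G')$, $\vp=E(G)\cup E(G')$, connecting edges $\vpo=\{[a,p(a)]\mid a\in V(G)\}$, and an RSP-relation $R=\{\vp,\vpo\}$ in which $a$ meets exactly the one $\vpo$-edge $[a,p(a)]$. The only thing left is well-behavedness, for which I would invoke the two-class characterization noted in the text (for a two-class relation, well-behaved $\Leftrightarrow$ every pair of incident edges of different classes spans a \emph{unique} square with opposite edges in the same classes). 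Such a pair is a $\vp$-edge together with a $\vpo$-edge $[a,p(a)]$ meeting at $a$ or at $p(a)$: if they meet at $a\in V(G)$ the $\vp$-edge is some $[a,a']\in E(G)$ and the forced fourth vertex is $p(a')$ (using that $p$ is a homomorphism, so $[p(a),p(a')]\in E(G')$), unique because $a'$ has a single $\vpo$-edge; if they meet at $p(a)\in V(G')$ the $\vp$-edge is some $[p(a),b']\in E(G')$, and local surjectivity of $p$ gives a neighbour $c\in N_G(a)$ with $p(c)=b'$, which is unique by local \emph{injectivity} of $p$. Thus every such pair spans a unique square and $R$ is well-behaved.

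For ``$H$ exists $\Rightarrow$ $G$ covers $G'$'' I would verify in three steps that $f$ is locally bijective. For adjacency preservation, apply the relaxed square property to $[a,a']\in E(G)\subseteq\vp$ and $[a,f(a)]\in\vpo$: the resulting square $a-a'-c-f(a)$ has $[a',c]\in\vpo$ and $[c,f(a)]\in\vp$, forcing $c=f(a')$ and $[f(a'),f(a)]\in E(G')$, so $f$ is a homomorphism. For local surjectivity, apply the relaxed square property to $[f(a),b']\in\vp$ (for $b'\in N_{G'}(f(a))$) and $[a,f(a)]\in\vpo$: the square $f(a)-b'-c-a$ yields $c\in N_G(a)$ with $f(c)=b'$, hence $f(N_G(a))=N_{G'}(f(a))$. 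For local injectivity, suppose $a_1\neq a_2$ were neighbours of $a$ with $f(a_1)=f(a_2)=:b'$; then on the five vertices $\{a,b'\}$ and $\{a_1,a_2,f(a)\}$ the edges $[a,a_1],[a,a_2],[b',f(a)]$ lie in $\vp$ and $[a,f(a)],[b',a_1],[b',a_2]$ lie in $\vpo$, so these span a $K_{2,3}$ carrying precisely a forbidden coloring, contradicting well-behavedness. Combining the three steps, $f$ is a locally bijective homomorphism, i.e.\ a covering map, so $G$ covers $G'$.

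The heart of the argument — and where I expect the real work to lie — is the exact dictionary between \emph{local injectivity} of the covering map and the \emph{absence of a forbidden $K_{2,3}$}. The delicate point is to identify the correct five vertices $\{a,f(a)\},\{a_1,a_2,b'\}$ and to check that the inherited two-coloring is exactly the forbidden pattern (two $\vp$-edges at $a$, one $\vp$-edge at $b'$ landing on the $\vpo$-neighbour $f(a)$ of $a$); the ``$\Leftarrow$'' direction then mirrors this through the uniqueness of squares. The routine matters — that $f$ indeed lands in $V(G')$, that adjacent vertices have distinct images, and the purely existential and counting verifications — I would dispatch briefly.
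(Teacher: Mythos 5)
Your proof is correct and follows essentially the same route as the paper: both directions use the construction of Lemma~\ref{lem:cover3} with $f_1=\mathrm{id}_G$, $f_2=p$, and the forbidden-$K_{2,3}$/unique-square dichotomy to match well-behavedness with local injectivity. The only cosmetic difference is that in the converse direction you build the covering map $a\mapsto f(a)$ and verify local bijectivity by hand, whereas the paper obtains it by citing Lemma~\ref{lem:cover} together with the observation that the one-$\vpo$-edge-per-vertex condition forces $C_{G',G}\cong G$.
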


\begin{proof}
  Let $H=(V,E)$ be a graph with well-behaved RSP-relation $R$ on $E$ as
  claimed.  Then, we can consider $G, G'$ as $\vp$-layer.  By
  Lemma~\ref{lem:cover}, $C_{G',G}$ is a cover of $G'$ and $G$. Since each
  vertex in $V(G)$ is incident with exactly one $\vpo$-edge, we see that
  for covering map $f_1: C_{G',G}\rightarrow G$ holds $|f_1^{-1}(u)|=1$ for
  all $u\in H$ which implies $f_1$ is also injective, thus an isomorphism.

  For the converse, assume $G$ is a cover of $G'$. Then $G$ is a cover of
  $G$ and $G'$ and thus $G$ and $G'$ can be connected as in the prove of
  Lemma \ref{lem:cover3}.  Since clearly $G\cong G$ and thus the covering
  map $p: G\rightarrow G$ is in particular injective, each vertex is, by
  construction, incident to exactly one $\vpo$-edge.  This in turn implies,
  $H$ contains no square $w-x-y-z$ such that $z\in V(G)$ and
  $[w,z],[y,z]\in\vpo$.  On the other hand, there is no square $w-x-y-z$
  contained in $H$ with $[w,x],[x,y]\in E(G)\subseteq \vp$ and
  $[w,z],[y,z]\in\vpo$, i.e., $z\in V(G')$, since otherwise the restriction
  of the covering map $p':G\rightarrow G'$ to $N_G(x)$ (w.l.o.g. we can
  assume $p$ to be the identity mapping) would not be injective, a
  contradiction.  Hence, we can conclude that $R$ is well-behaved.
\end{proof}

Notice that checking if $H$ is a cover graph of $G$ is in general NP-hard
\cite{Abello:91}.  Therefore, also connecting two graphs as described in
Lemma \ref{lem:covereq} is NP-hard.  On the other hand, one can connect two
arbitrary graphs $G_1$, $G_2$ such that all vertices of $G_1$ are linked to
all vertices of $G_2$.  Then, the relation defined by the classes $\vp =
E(G_1)\cup E(G_2)$ and $\vpo$ that consists of all added edges between
$G_1$ and $G_2$ is an RSP-relation. This implies that any two graphs have a
common finite quasi-cover.  However, this is not true for covers, just take
$K_2$ and $K_3$ as an example.

For a given graph $G$ and an RSP-relation $R$, one can consider the
subgraph $G_{\varphi}$, $\vp\eqcl R$ as one layer and all other edges of
$G$ not contained in $G_{\varphi}$ as connecting edges.  Notice,
connectivity is not explicitly needed in Definition~\ref{def:superimposed}
and Lemma~\ref{lem:cover2}, and thus, they can be extended to
$C_{G_\vp,G_\vp}$.  Moreover, any spanning subgraph $H$ of a graph $G$
induces an equivalence relation $R$ with two equivalence classes $E(H)$ and
$E(G)\setminus E(H)$.  Hence, $C_{H,H}$ is well defined and thus,
Lemma~\ref{lem:cover2} and \ref{lem:cover3} imply the following result.

\begin{theorem}
  A graph $G$ has an RSP-relation with two equivalence classes if and only
  if there exists a (possibly disconnected) spanning subgraph $H\subsetneq
  G$ and $C_{H,H}$ is a quasi-cover of $H$.
\end{theorem}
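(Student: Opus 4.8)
The plan is to prove the two implications separately, in both cases exploiting the natural dictionary between a two-class partition $\{\vp,\vpo\}$ of $E(G)$ and the spanning subgraph $H=G_\vp$: the class $\vp$ plays the role of $E(H)$, while $\vpo$ plays the role of the ``connecting'' edges $E(G)\setminus E(H)$. Under this dictionary the two directions become direct applications of Lemma~\ref{lem:cover2} and Lemma~\ref{lem:cover3}, respectively, so almost no new computation is needed.

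For the ``only if'' direction, suppose $R$ is an RSP-relation with the two classes $\vp$ and $\vpo$. I set $H:=G_\vp$, which is a spanning subgraph since $V(G_\vp)=V(G)$, and is proper because $\vpo\neq\emptyset$. As $H$ is spanning, every $\vpo$-edge has both endpoints in $V(G_\vp)$, and $\vpo\neq\emptyset$ guarantees that at least one such edge exists, so the hypothesis of Lemma~\ref{lem:cover2} is met. Invoking the version of Definition~\ref{def:superimposed} and Lemma~\ref{lem:cover2} extended to the (possibly disconnected) layer $G_\vp$ -- connectivity is never used in those arguments, since the relaxed square property is applied only locally at incident edges -- yields that $C_{G_\vp,G_\vp}=C_{H,H}$ is a quasi-cover of $G_\vp=H$, which is exactly what is required.

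For the ``if'' direction, suppose $H\subsetneq G$ is a proper spanning subgraph such that $C_{H,H}$ is a quasi-cover of $H$ via its canonical projection $f_1((a,b))=a$. First I would observe that the orientation-reversing map $(a,b)\mapsto(b,a)$ is an automorphism of $C_{H,H}$ and intertwines $f_1$ with $f_2((a,b))=b$, so $f_2$ is locally surjective as well. This places me in the setting of Lemma~\ref{lem:cover3} with $G_1=G_2=H$, $G=C_{H,H}$, and the two locally surjective homomorphisms $f_1,f_2\colon C_{H,H}\to H$. The graph produced there has vertex set $V(H)=V(G)$ and $\vp$-class $E(H)$, while its added edges are precisely the $[a,b]$ for which some vertex $(a,b)\in V(C_{H,H})$ satisfies $f_1((a,b))=a$ and $f_2((a,b))=b$ -- that is, exactly the edges of $E(G)\setminus E(H)$. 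Hence the constructed graph is $G$ itself and the relation with classes $\vp=E(H)$ and $\vpo=E(G)\setminus E(H)$ is an RSP-relation; both classes are nonempty ($\vpo\neq\emptyset$ by properness, and $\vp=E(H)\neq\emptyset$ in the nondegenerate case), so it has two classes.

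I expect the main subtlety to lie in pinning down what ``$C_{H,H}$ is a quasi-cover of $H$'' should mean: the argument goes through only for the canonical maps $f_1,f_2$, since it is their defining formulas that allow Lemma~\ref{lem:cover3} to reconstruct exactly the connecting edges $E(G)\setminus E(H)$, whereas an arbitrary locally surjective homomorphism would add the wrong edges. The automorphism argument above shows that it suffices to assume local surjectivity of a single canonical projection. A minor point worth flagging is the degenerate case $E(H)=\emptyset$, which produces only one class and should be excluded so that the equivalence is stated for genuine two-class relations.
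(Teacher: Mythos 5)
Your proof is correct and follows essentially the same route as the paper, which derives the theorem directly from Lemma~\ref{lem:cover2} (for the ``only if'' direction) and Lemma~\ref{lem:cover3} (for the ``if'' direction) after noting that connectivity is not needed in Definition~\ref{def:superimposed}. Your additional observations -- that local surjectivity of $f_1$ transfers to $f_2$ via the orientation-reversing automorphism, and that the degenerate case $E(H)=\emptyset$ should be excluded -- are sound refinements of details the paper leaves implicit.
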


On the set of graphs $\mathfrak{G}$ we consider the relation $G_1\thicksim
G_2$ if $G_1$ and $G_2$ have a common finite cover.
\begin{theorem}
  The relation $\thicksim$ on $\mathfrak{G}$ is an equivalence relation.
 \label{thm:equivRel}
\end{theorem}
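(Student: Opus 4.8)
The plan is to verify the three defining properties of an equivalence relation, where essentially all of the work lies in transitivity. Reflexivity is immediate: the identity map $G\to G$ is locally bijective, so $G$ is a (trivial) common finite cover of $G$ with itself, whence $G\thicksim G$. Symmetry is built into the definition, since the statement ``$C$ is a common finite cover of $G_1$ and $G_2$'' is symmetric in $G_1,G_2$; thus $G_1\thicksim G_2$ at once gives $G_2\thicksim G_1$.

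For transitivity, suppose $G_1\thicksim G_2$ and $G_2\thicksim G_3$, witnessed by finite graphs $C,C'$ together with covering maps $p:C\to G_1$, $q:C\to G_2$, $p':C'\to G_2$, $q':C'\to G_3$. The key construction is the \emph{fiber product} of $C$ and $C'$ over $G_2$: let $D$ be the induced subgraph of the direct product $C\times C'$ on the vertex set $\{(u,v)\in V(C)\times V(C')\mid q(u)=p'(v)\}$. By the definition of $\times$, two vertices $(u_1,v_1),(u_2,v_2)$ of $D$ are then adjacent exactly when $[u_1,u_2]\in E(C)$ and $[v_1,v_2]\in E(C')$; note that the equalities $q(u_i)=p'(v_i)$ force $\{q(u_1),q(u_2)\}$ and $\{p'(v_1),p'(v_2)\}$ to be the same edge of $G_2$, so this is the correct ``pullback'' adjacency. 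Since $V(C)$ and $V(C')$ are finite, $D$ is finite, and it is nonempty because $q$ and $p'$ are onto.

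The main step is to show the coordinate projection $\pi_1:D\to C$, $(u,v)\mapsto u$, is locally bijective. For local surjectivity, given $(u,v)\in V(D)$ and a neighbor $u'$ of $u$ in $C$, the edge $[q(u),q(u')]\in E(G_2)$ equals $[p'(v),q(u')]$ since $q(u)=p'(v)$; local bijectivity of $p'$ at $v$ then yields a unique neighbor $v'$ of $v$ with $p'(v')=q(u')$, so $(u',v')\in V(D)$ is a neighbor of $(u,v)$ mapping to $u'$. For local injectivity, two neighbors $(u',v_1'),(u',v_2')$ of $(u,v)$ with equal first coordinate satisfy $p'(v_1')=q(u')=p'(v_2')$ with $v_1',v_2'$ both adjacent to $v$, so local injectivity of $p'$ forces $v_1'=v_2'$. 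The symmetric argument applies to $\pi_2:D\to C'$. I would then pass to a connected component $D_0$ of $D$: because $C,C'$ are connected and $\pi_1,\pi_2$ are locally surjective, the restrictions $\pi_1|_{D_0}$ and $\pi_2|_{D_0}$ are globally onto (using the fact that a locally surjective homomorphism into a connected graph is surjective), hence are covering maps of $C$ and $C'$.

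Finally, since a composition of covering maps is again a covering map (local bijectivity composes), $p\circ\pi_1|_{D_0}:D_0\to G_1$ and $q'\circ\pi_2|_{D_0}:D_0\to G_3$ are covering maps, so the finite connected graph $D_0$ is a common finite cover of $G_1$ and $G_3$, giving $G_1\thicksim G_3$. The main obstacle is precisely the verification that the fiber-product projections are locally bijective and that restricting to a connected component preserves the covering property; once these standard facts about pullbacks of covers are established, transitivity — and with it the theorem — follows.
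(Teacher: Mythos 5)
Your proof is correct, and it takes a more direct route than the paper, although the underlying construction is ultimately the same object in different clothing. You build the fiber product $D$ of the two covers $q:C\to G_2$ and $p':C'\to G_2$ directly, verify by hand that the coordinate projections are locally bijective, pass to a connected component, and compose covering maps. The paper instead stays inside its RSP-relation framework: it invokes Lemma~\ref{lem:covereq} to attach $G_2$ to each of the common covers $H_{12}$ and $H_{23}$ by a well-behaved two-class RSP-relation, then joins $h\in V(H_{12})$ to $h'\in V(H_{23})$ whenever both are linked to the same vertex of $G_2$, checks that the resulting two-class relation is again a well-behaved RSP-relation, and finally reads off the common cover as the graph $C_{H_{12},H_{23}}$ via Lemma~\ref{lem:cover}. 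Note that the vertex set of $C_{H_{12},H_{23}}$ is exactly your $\{(h,h') : q(h)=p'(h')\}$ and its adjacency is exactly your pullback adjacency, so the two proofs construct the same graph; the paper's detour buys a further illustration of the theme that covers are encoded by well-behaved RSP-relations, while your argument is self-contained, does not depend on Lemmas~\ref{lem:cover} and~\ref{lem:covereq}, and is more careful on one point the paper glosses over, namely restricting to a connected component so that local surjectivity upgrades to global surjectivity of the projections.
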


\begin{proof}
  Relation $\thicksim$ is clearly reflexive and symmetric. By assumption,
  the graphs $G_1$ and $G_2$ have a common cover $H_{12}$ and $G_2$ and
  $G_3$ have a common cover $H_{23}$. By Lemma \ref{lem:covereq}, $H_{12},
  G_2$ and $H_{23},G_2$ can be connected without forbidden colorings of
  $K_{2,3}$.  Let $E$ be the set of all edges connecting $G_2$ and $H_{12}$
  and $E'$ edges connecting $G_2$ and $H_{23}$. Since every cover of
  $H_{12}$ and $H_{23}$ is a cover of $G_1$, $G_2$ and $G_3$, it is
  sufficient to find a cover of $H_{12}$ and $H_{23}$. Therefore, it
  suffices to connect $H_{12}$ and $H_{23}$ without forbidden colorings of
  $K_{2,3}$. Define edges connecting $H_{12}$ and $H_{23}$ by connecting
  $h\in V(H_{12})$ and $h'\in V(H_{23})$ if there exists a vertex $v\in
  V(G_2)$ such that $[h,v]\in E$ and $[v,h'] \in E'$.
  
  First we check that $E(H_{12})\cup E(H_{23})$ and connecting edges form
  two equivalence classes of an RSP relation. Without loss of generality
  assume $[h_1,h_2]\in E(H_{12})$ and $[h_1,h_1']$, $h_1'\in V(H_{23})$ is
  a connecting edge. Then there exists $v_1\in V(G_2)$ such that $[h_1,v_1]
  \in E$ and $[v_1,h_1'] \in E'$. Since edges $E$ are defined by a local
  bijection between $H_{12}$ and $G_2$, there exist $v_2\in V(G_2)$, a
  neighbor of $v_1$, such that $[h_2,v_2]\in E$. Similarly, since $E'$ is
  defined by a local bijection between $H_{23}$ and $G_2$, there exists
  $h_2'\in V(H_{23})$, a neighbor of $h_1'$, such that $[v_2,h_2']\in
  E'$. Therefore there exists a square $h_1-h_1'-h_2'-h_2$ with
  $[h_1,h_2],[h_1',h_2']\in E(H_{12}) \cup E(H_{23})$ and
  $[h_1,h_2],[h_1',h_2']$ being connecting edges. This proves that relation
  $R$, with equivalence classes $E(H_{12})\cup E(H_{23})$ and the set of
  connecting edges is an RSP relation.
  
  It remains to prove that $R$ is well-behaved. By symmetry, it is enough
  to prove that there exists no vertices $h_1,h_2,h_3 \in V(H_{12})$ and
  $h_1',h_2'\in V(H_{2,3})$ with $[h_1,h_2],[h_1,h_3]\in E(H_{12})$,
  $[h_1',h_2'] \in E(H_{23})$ and added edges $[h_1,h_1'], [h_2,h_2']$ and
  $[h_3,h_2']$.  For the sake of contradiction, assume such vertices
  exist. By the construction of the added edges, there exist vertices $v_1,
  v_2,v_3 \in V(G_2)$ such that $[h_1,v_1],[h_2,v_2], [h_3,v_3] \in E$ and
  $[v_1, h_1'], [v_2,h_2'], [v_3,h_2'] \in E'$.  Since edges in $E$ are
  obtained from a covering map of $H_{12}$ to $G_2$ we see that $v_1,v_2$
  and $v_3$ are distinct vertices.  But also the edges in $E'$ are obtained
  from a covering map of $H_{23}$ to $G_2$ therefore
  $[v_2,h_2']=[v_3,h_2']$ and thus $v_2=v_3$, a contradiction.
\end{proof}
We have proven Theorem~\ref{thm:equivRel} here by elementary means to keep
this presentation self-contained. It also follows from a deep result of
Leighton \cite{leighton1982finite}, who proved the following: A pair of
finite connected graphs $G_1$ and $G_2$ has a common finite cover if and
only if they have the same (possibly infinite) cover graph isomorphic to a
tree. Such a cover is unique for every graph $G$ and covers any other
covering graph of $G$; It is therefore called the \emph{universal cover} of
$G$. On the other hand, a minimal common cover of two graphs needs not to
be unique, as Imrich and Pisanski have shown \cite{imrich2008multiple}.

\begin{coro}
  Let $G$ be a connected graph and let $R$ be a well-behaved RSP-relation
  on $E(G)$. Then there exists a common covering graph for all
  $\varphi$-layer $G_{\varphi}^{x_i}$.
\end{coro}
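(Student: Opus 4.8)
The plan is to fix an arbitrary class $\varphi\eqcl R$ and to glue the pairwise covers provided by Lemma~\ref{lem:cover} into a single graph using the construction from Theorem~\ref{thm:equivRel}. Since $G$ is finite there are only finitely many distinct $\varphi$-layers, say $L_1,\dots,L_n$. The two facts I would lean on are: (a) because $R$ is well-behaved, any two distinct adjacent $\varphi$-layers $L,L'$ have the common cover $C_{L,L'}$ by Lemma~\ref{lem:cover}; and (b) the composition of covering maps is again a covering map, so if $M$ covers $H$ and $H$ covers $L$ then $M$ covers $L$.

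I would then observe that the layers are tied together by a connected pattern of adjacencies. As $G$ is connected, any path between two vertices crosses from one $\varphi$-layer to an adjacent one each time it traverses a $\vpo$-edge; hence the auxiliary graph on $\{L_1,\dots,L_n\}$ whose edges join adjacent layers is connected. Fixing a spanning tree of this auxiliary graph, I may enumerate the layers so that each $L_k$ with $k\ge 2$ is adjacent to some $L_j$ with $j<k$.

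The heart of the argument is an induction building graphs $H_1,\dots,H_n$ such that $H_k$ covers every one of $L_1,\dots,L_k$. Take $H_1=L_1$. Given $H_k$, pick $j\le k$ with $L_{k+1}$ adjacent to $L_j$ and set $C:=C_{L_j,L_{k+1}}$, which covers both $L_j$ and $L_{k+1}$ by Lemma~\ref{lem:cover}. Now $H_k$ is a common cover of the pair $H_k,L_j$ (trivially) and $C$ is a common cover of the pair $L_j,C$; feeding the triple $H_k,L_j,C$ into the construction used to prove Theorem~\ref{thm:equivRel} produces a single graph $H_{k+1}$ covering $H_k$, $L_j$ and $C$ at once. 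By fact (b), $H_{k+1}$ then covers $L_1,\dots,L_k$ through $H_k$ and covers $L_{k+1}$ through $C$, which closes the induction; the graph $H_n$ is the desired common cover of all $\varphi$-layers. The degenerate case $n=1$ is trivial, as a single layer covers itself.

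The main obstacle is not producing pairwise covers---those are immediate from Lemma~\ref{lem:cover}---but upgrading them to one cover of all layers simultaneously. What makes this feasible is that the proof of Theorem~\ref{thm:equivRel} yields more than the bare transitivity of $\thicksim$: connecting two covers into a well-behaved two-class RSP-relation and applying Lemma~\ref{lem:cover} to the resulting layers gives a genuine simultaneous cover of the three graphs involved, which is exactly what the inductive step consumes. Alternatively one could appeal to Leighton's theorem \cite{leighton1982finite}: the layers all share a common universal cover tree, from which a finite common cover can be assembled. I would make sure the enumeration step genuinely invokes connectivity of $G$ so that every newly added layer attaches to the part already covered.
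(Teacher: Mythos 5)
Your argument is correct and follows exactly the route the paper intends: the paper's proof is a one-line appeal to the connectedness of $G$, Lemma~\ref{lem:cover}, and Theorem~\ref{thm:equivRel}, which are precisely the three ingredients you combine. Your write-up merely makes explicit the induction along a spanning tree of the layer-adjacency graph and the observation that the construction in the proof of Theorem~\ref{thm:equivRel} yields a simultaneous cover of all three graphs involved, both of which the paper leaves implicit.
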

\begin{proof}
  This result is an immediate consequence of the connectedness of $G$,
  Lemma \ref{lem:cover} and Theorem \ref{thm:equivRel}.
\end{proof}
In terms of Leighton's theorem, the corollary could be read in the
following way: For a graph $G$ with a well-behaved RSP-relation on $E(G)$
and some fixed equivalence class $\varphi$ all the graphs
$\{G_{\varphi}^{x_i}\}$ have the same universal cover.

Under certain conditions it is possible to refine a given RSP-relation.
\begin{lemma}
\label{lem:split1}
Let $G=(V,E)$ be a connected graph and $R$ a well-behaved RSP-relation on $E$. 
Assume that for one equivalence class $\vp\eqcl R$ the graph $G_{\vp}$ has
two connected components $G_{\vp}^x$ and $G_{\vp}^y$.
The next two statements are equivalent:
\begin{enumerate}
\item There is a well-behaved refined RSP-relation $R' \subsetneq R$ such that 
$\vp=\chi_1\cup \chi_2$ with $\chi_1,\chi_2\eqcl R'$
\item $C_{G_{\vp}^x,G_{\vp}^y}$ has a non-trivial RSP-relation $Q$
such that $(e,f)\in Q$ iff $(e',f')\in R'$ for all 
$e,f \in p_1^{-1}(e') \cup p_1^{-1}(f') \cup  p_2^{-1}(e') \cup p_2^{-1}(f')$ 
and for all $e,f\in E(G_{\vp}^x) \cup E(G_{\vp}^y)$, where 
$p_1:C_{G_{\vp}^x,G_{\vp}^y} \rightarrow G_{\vp}^x$, resp., 
$p_2:C_{G_{\vp}^x,G_{\vp}^y} \rightarrow G_{\vp}^y$. 
\end{enumerate}
In other words, $R$ can be refined to $R'$ if and only if edges of
$G_{\vp}^x$, resp., $G_{\vp}^y$ that map on the same edges via the covering
projection are in the same class w.r.t.\ $Q$.
\end{lemma}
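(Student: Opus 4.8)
The plan is to set up an explicit correspondence between the refinements $R'$ in (1) and the relations $Q$ in (2), obtained by transporting an equivalence relation back and forth along the two projections $p_1\colon C\to G_\vp^x$ and $p_2\colon C\to G_\vp^y$, where $C:=C_{G_\vp^x,G_\vp^y}$. The facts I would use throughout are: since $R$ is well-behaved, Lemma~\ref{lem:cover} makes $p_1$ and $p_2$ genuine covering maps, hence local bijections; since $G_\vp$ has exactly the two components $G_\vp^x$ and $G_\vp^y$, we have $\vp=E(G_\vp^x)\,\dot{\cup}\,E(G_\vp^y)$; and every edge of $C$ is a connecting square $a_1-b_1-b_2-a_2$ whose $p_1$-image is the $\vp$-edge $[a_1,a_2]\in E(G_\vp^x)$ and whose $p_2$-image is $[b_1,b_2]\in E(G_\vp^y)$. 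Reading the compatibility condition of (2) with a single $C$-edge shows that it forces each fibre $p_i^{-1}(e')$ into one $Q$-class and forces the two projections of every $C$-edge to lie in the same $R'$-class; this \emph{cross-consistency} of the two colourings of the layers is the real content of the displayed ``in other words''.

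For the direction (1)$\Rightarrow$(2) I would take the well-behaved refinement $R'$ with $\vp=\chi_1\cup\chi_2$ and define $Q$ on $C$ as the pullback of $R'|_\vp$ along $p_1$, i.e.\ $Q$-relate two $C$-edges exactly when their $p_1$-images lie in the same $\chi_i$. Non-triviality is immediate: by Lemma~\ref{lem:classes} every vertex of $G_\vp^x$ meets both $\chi_1$ and $\chi_2$ inside $E(G_\vp^x)$, so both classes occur among the $p_1$-images. The first substantial step is to show that this $Q$ also equals the pullback along $p_2$, i.e.\ to establish cross-consistency; here I would feed the adjacent pair $[a_1,a_2]\in\vp$, $[a_1,b_1]\in\vpo$ into the relaxed square property of $R'$ and analyse the resulting witnessing square, invoking well-behavedness through the characterisation recorded before Figure~\ref{fig:forbiddensubs} to rule out the superimposed-square configuration that would break consistency. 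The second, and heavier, step is to verify that $Q$ itself has the relaxed square property on $C$: given $Q$-distinct adjacent $C$-edges meeting at $[a,b]$, their $p_1$-images are $R'$-distinct adjacent edges $[a,a_1],[a,a_2]$ of $G_\vp^x$, which by the RSP of $R'$ span a square $a_1-a-a_2-c$ entirely inside $G_\vp^x$; I would then lift this square through the covering projection $p_1$ to a four-cycle of $C$ based at $[a,b]$, whose opposite edges are $Q$-equivalent by the very definition of $Q$.

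For the converse (2)$\Rightarrow$(1) I would start from a non-trivial RSP-relation $Q$ on $C$ satisfying the compatibility condition. Since $Q$ is constant on fibres and cross-consistent, pushing it forward through $p_1$ and $p_2$ gives a well-defined partition of $\vp=E(G_\vp^x)\,\dot{\cup}\,E(G_\vp^y)$; after coarsening $Q$ to two classes if necessary, I set $R'$ to retain every $R$-class other than $\vp$ and to split $\vp$ into the two resulting classes $\chi_1,\chi_2$. I would then check that $R'$ is an RSP-relation on all of $G$: for $R'$-distinct adjacent edges neither of which lies in $\vp$, the $R$-witness square still works because those classes are untouched; for a pair involving a $\vp$-edge, I would produce the witness by transporting an RSP-square of $Q$ back across the covering projection together with the connecting squares that define the edges of $C$. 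Finally I would verify that $R'$ is well-behaved, i.e.\ that splitting $\vp$ creates no $K_{2,3}$ with a forbidden colouring, again through the square characterisation before Figure~\ref{fig:forbiddensubs}.

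In both directions the same step is the main obstacle: guaranteeing that a relaxed-square witness in one graph transports to an honest, non-degenerate square in the other across $p_1$ and $p_2$. Concretely, lifting the four-cycle $a_1-a-a_2-c$ from $G_\vp^x$ to $C$ requires its lift to \emph{close up} rather than unwind, i.e.\ that the monodromy of a square around the cover be trivial, and dually a four-cycle of $C$ must project to four distinct vertices. This is precisely the point where well-behavedness of $R$ is indispensable: it upgrades the quasi-covers of Lemma~\ref{lem:cover} to covers with local bijections $p_1,p_2$, so that the auxiliary connecting edge needed to close a lifted square is uniquely determined, and it is also what the compatibility condition of (2) encodes on the side of $C$. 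The careful bookkeeping that no forbidden $K_{2,3}$-colouring is created or overlooked, and that one always selects the ``right'' witnessing square among superimposed ones, is the delicate part that the well-behavedness hypothesis is designed to make manageable.
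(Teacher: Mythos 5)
Your proposal follows essentially the same route as the paper's own (very terse) proof: establish cross-consistency of the two layer colourings on every connecting square using the relaxed square property together with well-behavedness, lift $R'|_\vp$ along the covering projections to obtain $Q$ on $C_{G_\vp^x,G_\vp^y}$, verify its RSP by transporting witness squares through the cover, and for the converse push $Q$ back down via $p_1$ and $p_2$. Your version is in fact more explicit than the paper's two-paragraph sketch about the points that need checking (non-triviality via Lemma~\ref{lem:classes}, closure of lifted squares, preservation of well-behavedness), but the underlying argument is the same.
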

\begin{proof}
  If there is a finer RSP-relation $R'$, every square $a_1-b_1-b_2-a_2$ with
  $a_1,a_2\in V(G_{\vp}^x)$ and $b_1,b_2\in V(G_{\vp}^y)$ has edges
  $[a_1,a_2]$ and $[b_1,b_2]$ in the same class by the relaxed square property
  and since $R$ is well-behaved. Thus, an equivalence relation on $E(G_{\vp}^x)$ and
  $E(G_{\vp}^y)$ can be lifted to an equivalence relation on $E(C_{G_{\vp}^x,G_{\vp}^y})$ in a
  natural way.  One can check that it has the relaxed square property by using that the respective relations on
  $E(G_{\vp}^x)$ and
  $E(G_{\vp}^y)$ have the relaxed square property.

  Conversely, we define a finer RSP-relation on $E(G_{\vp}^x)$ and $E(G_{\vp}^x)$
  from the RSP-relation on $E(C_{G_{\vp}^x,G_{\vp}^y})$ by setting $(e',f')\in R'$
  iff $(e,f)\in Q $ for some $e\in p_1^{-1}(e'), f\in p_1^{-1}(e')$.
\end{proof}

Let $R$ be a well-behaved RSP-relation on $G$, e.g., $R=\delta_0$, and
suppose there is a finer RSP-relation $R'$ in which an equivalence class
$\varphi$ is split into two equivalence classes $\varphi_1$ and
$\varphi_2$. Let $\{G_{\varphi}^{x_i}\}$ be the connected components of
$G_{\varphi}$. Then $\varphi_1$ and $\varphi_2$ induce an RSP-relation on
each $G_{\varphi}^{x_i}$. Consider two components $G_{\varphi}^{x_1}$ and
$G_{\varphi}^{x_2}$ that are connected by some edges (in other
classes). From the proof of Lemma \ref{lem:split1} we observe that an
RSP-relation on $E(G_{\varphi}^{x_1})$ already defines an RSP-relation on
$C_{G_{\varphi}^{x_1},G_{\varphi}^{x_2}}$, which in turn defines an
RSP-relation on $G_{\varphi}^{x_2}$ and thus on all $\vp$-layer
$G_{\varphi}^{x_i}$. If multiple splits of $\varphi$ exist, they are fixed
by choosing one on any $G_{\varphi}^{x_i}$.
 
Now consider the graph $G$ consisting of two copies of $K_{2,3}$ and all
edges connecting them and the equivalence relation whose two classes are
the edges of the two copies of $K_{2,3}$ and the connected edges,
respectively. The discussion above implies that we can split the first
class independently on the two copies of $K_{2,3}$. Thus, we cannot
generalize the result above to RSP-relations with forbidden colorings. 

\section{Outlook and Open Questions}

We discussed in this contribution in detail RSP-relations, the most relaxed
type of relations fulfilling the square property. As it turned out, such
relations are hard to handle in graphs that contain $K_{2,3}$-subgraphs. On
the other hand, it is possible to determine finest RSP-relation in
polynomial time in $K_{2,3}$-free graphs.  Moreover, we showed how to
determine (finest) RSP-relations in certain graph products, as well as in
complete and complete-bipartite graphs. We finally established the close
connection of (well-behaved) RSP-relations to graph covers and equitable
partitions. Intriguingly, non-trivial RSP-relations can be characterized by
means of the existence of spanning subgraphs that yield quasi-covers of the
graph under investigation.

Still, many interesting problems remain open topics for further research.
From the computational point of view, it would be worth to determine the
complexity of the problem of determining finest (well-behaved)
RSP-relation.  Since there is a close connection to graph covers, we
suppose that the latter problem is NP-hard. If so, then fast heuristics
need to be designed. It is also of interest to investigate, for which graph
classes (that are more general than $K_{2,3}$-free graphs) the proposed
algorithm determines well-behaved or finest RSP-relations.

From the mathematical point of view, one might ask, under which
circumstances is it possible to guarantee that there is a non-trivial
finest RSP-relation that is in addition well-behaved.  Note, the graph
$G=K_{2,3}$ has no such relation. However, there might be interesting graph
classes that have one.  In addition, it might be of particular importance
(also for computational aspects) to distinguish RSP-relation.  Let us say
that two RSP-relations $R$ and $S$ on $E$ are \emph{equivalent}, $R\simeq
S$, if there is an automorphism $f:V\to V$ such that $((x,y),(a,b))\in R$
if and only if $((f(x),f(y)),(f(a),f(b)))\in S$.  Note, if $G=K_{2,3}$ then
all finest RSP-relation consist of two equivalence classes and all such
relations are equivalent.

Clearly, if $R\simeq S$, then $G/\mc{P}^R \simeq
G/\mc{P}^S$. However, the converse is not true, i.e., $G/\mc{P}^R \simeq
G/\mc{P}^S$ does not imply $R\simeq S$, see Example \ref{ex:k9}. This
suggests to consider under which conditions finest RSP-relations are
unique or for which graphs the equivalence of RSP-relations can be expressed in
terms of isomorphism of quotient graphs. 

\bibliographystyle{plain}  
\bibliography{rsp}
 
\end{document}